\definecolor{myblue1}{RGB}{0,105,189}
\definecolor{myblue2}{RGB}{0,155,249}
\definecolor{mycolor}{RGB}{100,155,239}
\definecolor{cifcolor}{RGB}{140,140,20}
\definecolor{bitcol}{RGB}{0,125,209}
\definecolor{ancillacol}{RGB}{200,130,50}
\definecolor{Ucolor}{RGB}{150,20,140}
\definecolor{ancillamod}{RGB}{100,180,50}
\definecolor{bitmod}{RGB}{90,85,209}
\tikzset{
  local/.style 2 args={
    rectangle, 
    rounded corners=1pt, 
    minimum height={#1},
    minimum width={#2},
    align=center, 
    line width=1pt,
    draw=white,
    font=\color{black}\sffamily, 
    fill=myblue1
  },
  local/.default={0.6 cm}{0.6 cm},
  Manc1/.style 2 args={
 	rectangle,
 	rounded corners=5pt ,
 	align=center,
 	minimum height={#1},
   	 minimum width={#2},
 	line width =1pt,
 	draw=black,
    font=\color{black}\sffamily, 
    fill=myblue1,
  },
  Manc1/.default={\bitsize}{\transfsizeanc},
 Manc2/.style 2 args={
 	rectangle,
 	rounded corners=5pt ,
 	align=center,
 	minimum height={#1},
   	 minimum width={#2},
 	line width =1pt,
 	draw=black,
    font=\color{black}\sffamily, 
    fill=myblue2,
  },
  Manc2/.default={\bitsize}{\transfsizeanc},
 G/.style 2 args={
 	rectangle,
 	rounded corners=2pt ,
 	align=center,
 	minimum height={#1},
   	 minimum width={#2},
 	line width =1pt,
 	draw=black,
    font=\color{black}\sffamily, 
    fill=mycolor
  },
  G/.default={\bitsize}{\transfsize},
  anc/.style= {  
    rectangle, 
    rounded corners=5pt, 
    minimum height={\bitsize},
    minimum width={\bitsize},
    align=center, 
    line width=1pt,
    draw=black,
    font=\color{black}\sffamily, 
    fill=ancillacol
   },
   qbitmod/.style={
    rectangle, 
    rounded corners=5pt, 
    minimum height={\bitsize},
    minimum width={\bitsize},
    align=center, 
    line width=1pt,
    draw=black,
    font=\color{black}\sffamily, 
    fill=bitmod
   },
     ancmod/.style={
    rectangle, 
    rounded corners=5pt, 
    minimum height={\bitsize},
    minimum width={\bitsize},
    align=center, 
    line width=1pt,
    draw=black,
    font=\color{black}\sffamily, 
    fill=ancillamod
   },
  qbit/.style= {  
    rectangle, 
    rounded corners=5pt, 
    minimum height={\bitsize},
    minimum width={\bitsize},
    align=center, 
    line width=1pt,
    draw=black,
    font=\color{black}\sffamily, 
    fill=white
   },
   qbit2/.style={  
    rectangle, 
    rounded corners=5pt, 
    minimum height={\bitsize},
    minimum width={\bitsize},
    align=center, 
    line width=1pt,
    draw=black,
    font=\color{black}\sffamily, 
    fill=bitcol
   },
  U/.style= {  
    rectangle, 
    rounded corners=1pt, 
    minimum height={\bitsize},
    minimum width={\bitsize},
    align=center, 
    line width=1pt,
    draw=black,
    fill=Ucolor,
    font=\color{black}\sffamily, 
   },
  LRbit/.style = {  
    rectangle, 
    rounded corners=1pt, 
    minimum height={\bitsize},
    minimum width={\transfsize},
    align=center, 
    line width=1pt,
    draw=black,
    fill=gray,
    font=\color{black}\sffamily, 
   },
   M1/.style 2 args={
 	rectangle,
 	rounded corners=5pt ,
 	align=center,
 	minimum height={#1},
   	 minimum width={#2},
 	line width =1pt,
 	draw=black,
    font=\color{black}\sffamily, 
    fill=myblue1,
  },
  M1/.default={\bitsize}{\transfsize},
  M2/.style 2 args={
 	rectangle,
 	rounded corners=5pt ,
 	align=center,
 	minimum height={#1},
   	 minimum width={#2},
	draw=black,
 	line width =1pt,
    font=\color{black}\sffamily, 
    fill=myblue2,
  },
  M2/.default={\bitsize}{\transfsize},
  A/.style 2 args={
 	rectangle,
 	rounded corners=5pt ,
 	align=center,
 	minimum height={#1},
   	 minimum width={#2},
 	line width =1pt,
 	draw=black,
	opacity=0.9,
    font=\color{black}\sffamily, 
    fill=ancillacol,
  },
  A/.default={\bitsize}{\bitsize},
LR/.style ={
 	rectangle split,
	minimum height=0.6 cm,
   	minimum width =7.5cm,
	rectangle split horizontal,
	rectangle split parts=2,
 	rounded corners=2pt ,
 	align=center,
 	line width =1pt,
 	draw=black,
	opacity=0.6,
    font=\color{black}\sffamily, 
    fill=yellow,
 	text centered,
  },
  swap/.pic ={
	\draw (-\bitdistance,-1)--(-\bitdistance,-0.75)--(\bitdistance,0.75)--(\bitdistance,1);
	\draw (\bitdistance,-1)--(\bitdistance,-0.75)--(-\bitdistance,0.75)--(-\bitdistance,1);  
  },
Cif/.style 2 args={
 	rectangle,
 	rounded corners=2pt ,
 	align=center,
 	minimum height={#1},
   	 minimum width={#2},
 	line width =1pt,
 	draw=black,
    font=\color{black}\sffamily, 
    fill=cifcolor
  },
Cif/.default={\bitsize}{\transfsize},
Op/.style 2 args={
	circle,
	draw=black,
	line width=1pt,
	font=\color{black}\sffamily ,
	minimum height={#1},
   	 minimum width={#2},
   	 fill=White,
 },
 Op/.default={\bitsize*0.25}{\bitsize*0.25},
 J/.pic={
  \draw (-\bitdistance,\nhdist)--(0,\nhdist*0.5)--(\bitdistance,\nhdist),},
 }
\tikzset{
  bit/.style={
  rectangle,
  minimum height=0.4cm,
  minimum width=0.4cm,
  draw=black,
  line width=1pt
  }
}
\tikzstyle{new edge style 0}=[-, fill=white, draw={rgb,255: red,249; green,20; blue,0}]
\tikzstyle{new edge style 1}=[-, fill=black]
\tikzstyle{new edge style 2}=[-, fill={rgb,255: red,255; green,247; blue,0}, draw={rgb,255: red,162; green,164; blue,60}]
\tikzstyle{new edge style 3}=[-, fill={rgb,255: red,131; green,131; blue,131}]
\tikzstyle{new edge style 4}=[-, fill={rgb,255: red,0; green,209; blue,255}, draw={rgb,255: red,49; green,57; blue,208}]
\newtheorem{lemma}{Lemma}
\newtheorem{definition}{Definition}
\newtheorem{proposition}{Proposition}
\newtheorem{theorem}{Theorem}
\newtheorem{corollary}{Corollary}
\theoremstyle{remark}
\newtheorem{remark}{Remark}
\def\A{\ensuremath{\mathsf{A}}}
\def\ind#1{\ensuremath{\operatorname{ind}[#1]}}
\def\dim#1{\ensuremath{\operatorname{dim}[#1]}}
\def\transf#1{\ensuremath{\mathcal{#1}}}
\def\Cl#1#2{\ensuremath{\mathit{Cl}_1(#1\lvert #2)}}
\def\tG{\transf G}
\def\tT{\transf T}
\def\tS{\transf S}
\def\tU{\transf U}
\def\tF{\transf F}
\def\tK{\transf K}
\def\tJ{\transf J}
\def\tM{\transf M}
\def\tI{\transf I}
\def\Fimpl{$\tF$-Implementability}
\def\Mimpl{$\tM$-Implementability}
\def\Fequiv{$\tF$-Equivalence}
\def\comm#1#2{\ensuremath{[#1,#2]}}
\def\acomm#1#2{\ensuremath{\{#1,#2\}}}
\def\m#1{\ensuremath{\mathsf{#1}}}
\def\gc#1#2{\ensuremath{\{[#1,#2]\}}}
\def\Aqloc{\ensuremath{\mathsf{A}(\mathbb{Z})}}
\def\gt{\ensuremath{\boxtimes}}
\def\Cl#1#2{\mathit{C}\ell_1(#1\lvert #2)}
\mathchardef\minus="002D
\def\<{\langle}
\def\>{\rangle}
 \def\ket#1{| #1 \rangle}
\def\bra#1{\langle #1 |}
\def\ketbra#1#2{| #1 \rangle \langle#2 |}
\def\Z{\mathbb Z}
\def\EL{\ensuremath{\mathcal{E}_L}}
\def\ER{\ensuremath{\mathcal{E}_R}}
\def\EC{\ensuremath{\mathcal{E}_C}}
\def\M#1#2{\operatorname{Mat}(\mathbb{C}^{#1\lvert #2})}
\def\Z2{\mathbb{Z}_2}
\DeclareMathOperator*{\bigboxtimes}{\scalerel*{\boxtimes}{\sum}}
\definecolor{AO}{rgb}{0.0, 0.5, 0.0}
\newcommand\Item[1][]{%
  \ifx\relax#1\relax  \item \else \item[#1] \fi
  \abovedisplayskip=0pt\abovedisplayshortskip=0pt~\vspace*{-\baselineskip}}
\newcommand{\Eq}{Eq.~}
\newcommand{\Eqs}{Eqs.~}
\begin{document}

\title{Fermionic cellular automata in one dimension}

\author{Lorenzo S.
  \surname{Trezzini}} \email[]{lorenzosiro.trezzini01@universitadipavia.it}
\affiliation{Dipartimento di Fisica dell'Universit\`a di Pavia, via
  Bassi 6, 27100 Pavia} \affiliation{Istituto Nazionale di Fisica
  Nucleare, Gruppo IV, via Bassi 6, 27100 Pavia} 
  \author{Matteo
  \surname{Lugli}} 
  \author{Paolo 
  \surname{Meda}} \email[]{paolo.meda@unipv.it}
\affiliation{Dipartimento di Fisica dell'Universit\`a di Pavia, via
  Bassi 6, 27100 Pavia} \affiliation{Istituto Nazionale di Fisica
  Nucleare, Gruppo IV, via Bassi 6, 27100 Pavia} 
\author{Alessandro 
  \surname{Bisio}} \email[]{alessandro.bisio@unipv.it}
\affiliation{Dipartimento di Fisica dell'Universit\`a di Pavia, via
  Bassi 6, 27100 Pavia} \affiliation{Istituto Nazionale di Fisica
  Nucleare, Gruppo IV, via Bassi 6, 27100 Pavia} 
\author{Paolo
  \surname{Perinotti}} \email[]{paolo.perinotti@unipv.it}
\affiliation{Dipartimento di Fisica dell'Universit\`a di Pavia, via
  Bassi 6, 27100 Pavia} \affiliation{Istituto Nazionale di Fisica
  Nucleare, Gruppo IV, via Bassi 6, 27100 Pavia} 
  \author{Alessandro 
  \surname{Tosini}} \email[]{alessandro.tosini@unipv.it}
\affiliation{Dipartimento di Fisica dell'Universit\`a di Pavia, via
  Bassi 6, 27100 Pavia} \affiliation{Istituto Nazionale di Fisica
  Nucleare, Gruppo IV, via Bassi 6, 27100 Pavia}

\begin{abstract} 
  We consider quantum cellular automata for one-dimensional chains of
  Fermionic modes and study their implementability as finite depth
  quantum circuits. Fermionic automata have been classified in terms
  of an index modulo circuits and the addition of ancillary
  systems. We strengthen this result removing the ancilla degrees of
  freedom in defining the equivalence classes. A complete
  characterization of nearest-neighbours automata is given. A class of
  Fermionic automata is found which cannot be expressed in terms of
  single mode and controlled-phase gates composed with shifts, as is
  the case for qubit cellular automata.
\end{abstract} 

\maketitle

\section{Introduction}
Besides the perspectives in enhancing classical computation and
communication protocols, one of the major applications of quantum computers 
is to provide a tool for the
simulation of quantum physical systems. A natural architecture toward
this goal is the quantum generalization of cellular automata, whose
power in simulating classical dynamical systems has been largely
proved~\cite{Chopard_Droz_1998,VICHNIAC198496}. Cellular automata were
envisioned~\cite{neumann1966theory} as a mesh of locally interconnected 
finite state machines, called cells. The cells change their states synchronously
depending on their state and the nearby cells' one at each time step,
thus embodying the \emph{locality of interactions} typical of physical systems. 
Other relevant properties with a physical counterpart, 
such as \emph{reversibility} can be programmed by
suitably choosing the local update rule.

If cellular automata bridge complex dynamical systems and computation,
quantum cellular automata (QCAs) ~\cite{schumacher2004reversible} are
the natural setting for the efficient
simulation of quantum systems, and at the same time they provide a model of 
universal quantum
computation~\cite{PhysRevLett.97.020502,PhysRevA.72.022301}. Tensor
networks~\cite{doi:10.1080/14789940801912366,Evenbly-Vidal-tensor-network},
the state of the art of nowadays quantum simulations, have also been
linked~\cite{MPU-QCA-2017,PhysRevLett.125.190402} to QCAs in the
notion of matrix-product unitaries, tensor networks describing time
evolution and unitary symmetries of quantum systems. Following the
growing interest on the computational features of non-abelian matter
fields~\cite{KITAEV20032,Kitaev_2001,Nature-non-abelian-qunatum-info,PhysRevLett.103.020506},
cellular automata, initially introduced for qudits, have been
generalized to Fermionic
systems~\cite{PhysRevB.99.085115,Perinotti2020cellularautomatain,Farrelly2020reviewofquantum},
taking care to keep the parallelism with the literature on tensor
networks~\cite{Piroli_2021}.

As quantum simulators, QCAs have been successful in describing several
phenomena of many-body quantum systems with local interactions, such
as Floquet systems, and in classification of the topological phases of
matter~\cite{PhysRevX.6.041070,PhysRevB.103.064302,PRXQuantum.3.030326,Stephen2019subsystem},
which has become a concept of fundamental importance in condensed
matter. Efforts have also been given on identifying classes of QCAs
that admit of a simple particle interpretation, that is automata whose
dynamics can be interpreted in terms of the propagation and scattering
of bosonic and Fermionic particles in quantum field theories or many
body systems~\cite{PhysRevA.90.062106,PhysRevA.97.032132,Arrighi-qed,BISIO2016177}. A primary intent underlying discrete quantum simulations is on one hand the development of methods to access their continuum limit and effective dynamics
\cite{PhysRevX.11.011020,PhysRevLett.123.050503,PhysRevLett.97.157202,PhysRevA.92.022330,PhysRevLett.126.250503,PhysRevA.93.062334}, and on the other hand to model the dynamics of quantum systems in an emergent 
space-time~\cite{PhysRevA.90.062106,refId0,PhysRevA.94.042120,10.1063/1.5144731}.

Despite several formal definitions of QCAs over the
years~\cite{Watrous-qca,Farrelly2020reviewofquantum}, a systematic
exploration of the general properties of such systems and of their
potential for computational applications was started quite
recently. In Ref.~\cite{schumacher2004reversible} an
algebraic axiomatic approach is proposed  to describe a grid of finite dimensional
quantum systems with translation symmetry undergoing a reversible
discrete time evolution driven by a local rule. At each site local
operators (in place of local Hilbert spaces) are considered, whose
$C^*$-algebra structure can be extended to the whole infinite grid
of quantum systems. Among various alternative approaches, it is worth mentioning a
constructive method to obtain QCAs, consisting in partitioning the
system into blocks of cells, applying blockwise unitary
transformations, and possibly iterating such operations. A major
result is the proof~\cite{ARRIGHI2011372} that any QCA can be implemented in the above
constructive way with the aid of auxiliary systems. The main advantage of the partitioned scheme is that
the global unitary evolution can be obtained as the combination of
many local unitary gates involving few nearby systems, with a clear
operational (and experimental) interpretation. Most of the results 
on universality of cellular automata~\cite{PhysRevLett.97.020502,PhysRevA.72.022301,Arrighi-partitioned-univertsal} have been obtained within the partitioned scenario.

The partitioned scheme and the result of Ref.~\cite{ARRIGHI2011372}
seed one of the core questions on the structure of QCAs that is their
implementability by quantum circuits of finite depth. The answer to
this question has been given in one space dimension, where QCAs are
classified in terms of a topological invariant named \emph{index}, a
quantity that summarises the local invariants of any QCA, i.e. any quantity 
that can be computed knowing the local evolution of a finite number of cells,
and in intuitive terms represents the balance of information flow in the 
two possible directions on the one-dimensional lattice. The classification 
is in the following sense~\cite{Werner-index}: if two automata have the same 
index, then necessarily one can be obtained from the other by postponing
a finite depth circuit. Having the same index establishes an \emph{equivalence}
relation. Moreover, a QCA is implementable by a circuit if
and only if its index is equal to one. Recently, preliminary results
have been obtained in the generalization of index theory for lattices
having spatial dimension larger that
one~\cite{Freedman-Hastings-classification,Freedman-Haah-Hastings-group-structure,pizzamiglio2024classification},
and for Fermionic systems~\cite{PhysRevB.99.085115}---which is the focus of the 
present manuscript. The present knowledge about the Fermionic index highlights two 
main differences with respect to the qudits one.  i) In the case of a qudit QCA 
the index is a rational number, while it may take
irrational values in the Fermionic case. ii)
The classification of Fermionic Cellular Automata (FCAs) is based on a weaker
notion of equivalence, denoted \emph{stable equivalence}: two
Fermionic automata are stably equivalent if upon appropriately
enlarging the Hilbert space by appending ancillary Fermionic
degrees of freedom that do not evolve, one can find a finite depth circuit 
connecting them. In other words to connect cellular automata in the same equivalence 
class one needs to apply circuits that involve not only the physical systems at hand,
but also auxiliary systems which will be left invariant by the evolution.

In the first part of this work it is shown how, also in the Fermionic case, 
the classification of FCAs~\cite{PhysRevB.99.085115} in one dimension can 
actually be established in terms of equivalence classes modulo finite depth
Fermionic circuits, without the need of ancillary degrees of
freedom. The proof is based on the procedure of ancilla removal introduced
in Ref.~\cite{Freedman-Haah-Hastings-group-structure}. A complete index 
theory for Fermionic automata in one space dimension follows,
were automata with the same index are equivalent modulo circuits
involving only the physical systems, like in the quantum case.

The second result of this work is an explicit characterization of all
FCAs in one space dimension and with nearest-neighbouring
interaction. This is done by providing the explicit form of the most
general FCA having unit index and noticing that, as in the
quantum case, a generic automaton is obtained by composition of a unit
index instance with a Fermionic shift. The set of shifts for local
Fermionic modes is strictly larger than the qubits counterpart, due to
the \emph{parity superselection rule}. The Fermionic local algebra 
is indeed graded, and its even and odd modules can
be independently translated on the
lattice~\cite{PhysRevB.99.085115}. This leads to the phenomenon of
irrational index, corresponding to a net ``flow'' amounting to less than a cell per step, 
and yet non balanced, which is not achievable for ``quantum information''. Besides 
Fermionic shifts without quantum counterpart, we find solely Fermionic instances also
in the equivalence classes corresponding to a rational index. The full
characterization of nearest-neighbour qubits QCAs was derived in
Ref.~\cite{schumacher2004reversible}, with all QCAs corresponding to local
unitaries or controlled-phase gates, while the Fermionic local rules were 
classified in in Ref.~\cite{PhysRevA.98.052337} with a technique that gives no insight
in their structure. Here we show that Fermionic systems support a non-trivial FCA, 
that we deem \emph{forking} automaton, which lies outside the above classes generated 
by local gates together with d-phases.

\section{Algebra of a Fermionic chain}

This section is devoted to the algebraic description of a chain of
finitely many \emph{local Fermionic modes} arranged on the sites $x$ of a
one-dimensional lattice $\mathcal{L}\cong\mathbb{Z}$, as for Kitaev
chains of spinless Fermionic modes~\cite{Kitaev_2001}. The
computational properties of spinless Fermions were first studied in
Ref.~\cite{BRAVYI2002210} and later further explored in several
works~\cite{DAriano_2014,doi:10.1142/S0217751X14300257,Szalay_2021}.

Let $\mathcal{H}$ be a (one-particle) Hilbert space and consider the Fermi-Fock space
\begin{equation}
	\label{eq:fock-der}
	\begin{aligned}
		\mathfrak{F}_-(\mathcal{H}) &\coloneqq \overline{\bigoplus_{n \geq 0} \operatorname{ASym} \mathcal{H}^{\otimes n}} \\
		&= \mathbb{C} \oplus \mathcal{H} \oplus \left( \operatorname{ASym} (\mathcal{H} \otimes \mathcal{H}) \right) \oplus \ldots,
	\end{aligned}
\end{equation}
where $\operatorname{ASym} T$ denotes the antisymmetrization operator
acting on tensor products, while the overline denotes the completion
of the tensor product of copies of $\mathcal{H}$. Let
$\phi_x^\dagger$, $\phi_x$ be respectively the creation and
annihilation operators over the vector space $\mathcal{H}$ of each
local Fermionic mode labeled by $x \in \mathcal{L}$. These operators satisfy
Canonical Anticommutation Relations (CAR)
\begin{equation}
\label{eq:CAR}
\{\phi_{{x}},\phi_{{y}}^\dag\}=\delta_{{x}{y}} I, \qquad \{\phi_{x},\phi_{x}\}= \{\phi^\dag_{{x}},\phi^\dag_{{y}}\}=0, 
\end{equation}
for all ${x},{y} \in \mathcal{L}$, where $\delta_{{x}{y}}$ is the
standard Kronecker delta over integers, $\{A,B\}\coloneqq AB + BA$ is
the anticommutator and $I$ is the identity operator. We consider now 
the occupation number operator
$N_{{x}} \coloneqq \phi_{{x}}^\dag \phi_{{x}}$ on the site $x$ and its
basis $\{\ket{0_{{x}}},\ket{1_{{x}}}\}$ of eigenvectors
\begin{align}
\begin{aligned}
&N_{{x}}\ket{0_{{x}}}=0, \quad  N_{{x}}\ket{1_{{x}}}=\ket{1_{{x}}}, \\
&\phi_{{x}}\ket{1_{{x}}}=\ket{0_{{x}}}, \quad \phi^\dag_{{x}}\ket{0_{{x}}}=\ket{1_{{x}}}, \quad
\phi_{{x}}\ket{0_{{x}}}=\phi^\dag_{{x}}\ket{1_{{x}}} =0.
\end{aligned}\label{eq:basis}
\end{align}
In the \emph{local Fermionic mode picture} the above vectors are
interpreted as zero and one particle states at position
${x}$, with $\phi_x,\phi_x^\dag$ the creation/annihilation operators of a
Fermionic excitation in the $x$-th site.

Let us consider the \textit{vacuum} with trivial multiplicity
${\Omega} \in \mathbb{C}$, i.e., the unique vector state which is
simultaneous eigenvector with eigenvalue 0 common to all $N_x$, namely
$N_{x}\ket{\Omega}=0 \hspace{10pt} \forall x\in\mathcal{L}$. The
vacuum state, corresponding to all sites of the lattice being
unoccupied, is annihilated by all the annihilation operators, i.e.,
$\phi_x \ket{\Omega} = 0$ for all $x \in \mathcal{L}$. By acting on
$\ket{\Omega}$ with arbitrary combinations of products of creation
operators $\phi^\dag_x$, the Fock space for a system of local
Fermionic modes given in \Eq~\eqref{eq:fock-der} may be identified
with the space

\begin{equation}
\label{eq:fock}
	\mathscr{F}=\operatorname{Span_{\mathbb{R}}}\left\{\prod_{{x}\in \mathcal{L}}(\phi_{{x}}^\dag)^{p_{{x}}}\ket{\Omega}, \quad p_{{x}}=0,1 \ \forall {x}\in\mathcal{L}\right\},
\end{equation}
where $p_x$ denotes the occupation number at the $x$-th site, and the product 
$\Pi_{x\in\mathcal L}$ is well defined provided that a total (immaterial) ordering is 
introduced on the denumerable set $\mathcal L$.

Using the creation/annihilation operators $\phi_x, \phi^\dag_x$, one
can define the Fermionic Pauli matrices
\begin{align}\label{eq:Fpauli}
	X_x = \frac{\phi_x+\phi_x^\dag}{\sqrt{2}}, && Y_x= \frac{-i(\phi_x-\phi^\dag_x)}{\sqrt{2}},
\end{align}
for each site $x$, whence
\begin{align}\label{eq:Fpauli-inv}
	\phi_x=\frac{(X_x - iY_x)}{\sqrt{2}}, && \phi_x^\dag=\frac{(X_x +i Y_x)}{\sqrt{2}}.
\end{align}
The Fermionic Pauli matrix $Z_x$ is defined as
\begin{align*}
	Z_x \coloneqq i Y_x X_x  = \frac{\phi_x^\dag \phi_x - \phi_x \phi^\dag_x}{2}.
\end{align*}
In the orthonormal basis of
\Eq~\eqref{eq:basis}, the matrix representation of $X_x,Y_x,Z_x$ is given by
\begin{align*}
	\begin{aligned}
		X_x =\begin{pmatrix} 0&1\\1&0\end{pmatrix}, \quad Y_x=\begin{pmatrix} 0&-i\\i&0\end{pmatrix},\quad Z_x=\begin{pmatrix} 1&0\\0&-1\end{pmatrix}.
	\end{aligned}
\end{align*}
This notation should not be confused with the usual qubits Pauli
matrices $\sigma^{(x)}_x,\sigma^{(y)}_x,\sigma^{(z)}_x$ that commute
on different sites of the lattice, unlike $X_x,Y_x,Z_x$, which anticommute.

\subsection{CAR and $\mathbb{Z}_2$-graded \textit{quasi-local} algebras}

In the framework of operator algebra and local quantum theory, a set
of Fermionic modes arranged on the lattice
$\mathcal{L} $ is  described by a $C^*$-algebra
over an infinite dimensional complex Hilbert space, called
\emph{CAR algebra}. For a general review of CAR algebras, see, e.g.,
\cite{bratteli2012operator,bratteli2013statistical,Derezinski2006}. In
this algebraic approach, a pair of self-adjoint anticommuting
operators $(\xi,\eta)$ is assigned to each site $x\in\mathcal{L}$,
such as the Fermionic Pauli Matrices $(X_x,Y_x)$ defined in
\Eq~\eqref{eq:Fpauli}. These operators are then promoted to be the
generators of the self-adjoint algebra of operators localized at
$x$. Each local algebra is isomorphic to the algebra of $2 \times 2$ complex
matrices containing the identity $I_x$, in such a way that each
operator $O_x$ on $x$ is a linear combinations of the
identity and the Fermionic Pauli matrices \eqref{eq:Fpauli}
\begin{equation*}
	O_x= a I+ b X_x+ c Y_x+ d Z_x, \qquad a,b,c,d \in \mathbb{C}.
\end{equation*}	

Unlike the case of qubits, Fermionic
systems obey the \emph{parity superselection}
rule. The Fock space \eqref{eq:fock} can be decomposed as follows
\begin{equation}
	\mathscr{F}=\mathscr{F}_e\oplus\mathscr{F}_o,
	\label{eq:Fermgrad}
\end{equation}
where $\mathscr{F}_{e/o}$ are the eigenspaces of the parity operator
\begin{equation}
	\label{eq:parity}
	P=\frac{1}{2}\{I+ \prod_{{x}\in\mathcal{L}}(\phi_{{x}}\phi^\dag_{{x}}-\phi_{{x}}^\dag\phi_{{x}})\}
\end{equation}
with eigenvalues $0/1$, respectively.  The subspaces
$\mathscr{F}_{e/o}$ contain vectors corresponding to an even/odd total occupation
number, and parity superselection amounts to requiring that superpositions of  
vectors with different occupation number parity are forbidden. 
More precisely, the states of a Fermionic chain are density matrices with a 
well-defined parity:
\begin{theorem}[Parity Superselection]
	\label{thm:Parity}
	Density matrices representing Fermionic states commute with the total parity operator $P$ given in \Eq~\eqref{eq:parity}.
\end{theorem}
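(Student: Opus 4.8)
The plan is to reduce the statement to showing that the ``odd part'' of any physical density matrix vanishes, and then to close the argument with a positivity estimate. Write $\Pi \coloneqq \prod_{x\in\mathcal{L}}(\phi_x\phi_x^\dag - \phi_x^\dag\phi_x)$ for the total parity, so that $P = \tfrac12(I+\Pi)$ with $\Pi=\Pi^\dag$ and $\Pi^2=I$. Since $P$ is an affine function of $\Pi$, commuting with $P$ is the same as commuting with $\Pi$, which in turn is equivalent to $\Pi\rho\Pi=\rho$. I would therefore decompose any density matrix into its even and odd components $\rho = \rho_+ + \rho_-$, where $\rho_\pm \coloneqq \tfrac12(\rho \pm \Pi\rho\Pi)$ satisfy $\Pi\rho_\pm\Pi = \pm\rho_\pm$; the goal becomes $\rho_- = 0$.

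Next I would isolate the single genuinely physical ingredient, namely the parity superselection rule at the level of observables: an operator $O$ anticommuting with $\Pi$ (an \emph{odd} operator, $\Pi O\Pi = -O$) is not an admissible observable, so its expectation in any Fermionic state vanishes, $\Tr(\rho O)=0$. Operationally this is the statement that odd quantities are unobservable---equivalently that states are invariant under the global $\mathbb{Z}_2$ gauge transformation $\phi_x \mapsto -\phi_x$ generated by $\Pi$, or that a $2\pi$ rotation, which multiplies the odd sector by $-1$, acts trivially. I regard establishing (or postulating) this input as the conceptual crux; everything after it is algebra.

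With this in hand the computation is short. First, traces of odd operators vanish: if $\Pi A\Pi = -A$, then cyclicity and $\Pi^2=I$ give $\Tr A = \Tr(\Pi A\Pi) = -\Tr A$, hence $\Tr A = 0$. Applying this to $A = \rho_+\rho_-$, which is odd because $\Pi(\rho_+\rho_-)\Pi = \rho_+(-\rho_-)$, yields $\Tr(\rho_+\rho_-)=0$, so that $\Tr(\rho\,\rho_-) = \Tr(\rho_+\rho_-) + \Tr(\rho_-^2) = \Tr(\rho_-^2)$. On the other hand $\rho_-$ is self-adjoint (as $\rho$ and $\Pi$ are) and odd, so the superselection input forces $\Tr(\rho\,\rho_-)=0$. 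Combining the two gives $\Tr(\rho_-^2)=0$, and since $\rho_-=\rho_-^\dag$ this is the squared Hilbert--Schmidt norm of $\rho_-$; positivity then yields $\rho_-=0$, i.e. $\Pi\rho\Pi=\rho$ and $[\rho,P]=0$.

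The only delicate point beyond bookkeeping is the passage through $\Tr(\rho_-^2)$: it works precisely because $\rho_-$ is both self-adjoint and odd, so it is simultaneously a legitimate odd observable (expectation zero) and a Hilbert--Schmidt element whose vanishing trace-square forces it to be zero. I expect the main obstacle to be not this algebra but the justification of the superselection input itself; in the operational setting of the paper I would anchor it either in the locality requirement---odd operators at distinct sites anticommute rather than commute, so they cannot represent compatible local observables---or in invariance under the parity automorphism, taken as the defining property of a Fermionic state.
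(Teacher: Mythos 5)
The paper does not actually prove Theorem~\ref{thm:Parity}: it is stated as a physical postulate (the parity superselection rule itself), with the surrounding text and the reference to the literature supplying the physical motivation rather than a derivation. Your argument is therefore necessarily a different route, and as algebra it is correct: the decomposition $\rho=\rho_++\rho_-$ with $\rho_\pm=\tfrac12(\rho\pm\Pi\rho\Pi)$, the vanishing of traces of odd operators by cyclicity, the identity $\Tr(\rho\,\rho_-)=\Tr(\rho_-^2)$, and the Hilbert--Schmidt positivity step forcing $\rho_-=0$ are all sound (and survive in infinite dimensions so long as $\rho$ is trace-class and $\Pi$ is a bounded self-adjoint unitary). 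What you have really proved, however, is an \emph{equivalence}: ``every odd self-adjoint operator has vanishing expectation in every state'' if and only if ``every state commutes with $P$.'' Since your key input is precisely the former, and that input is not established independently anywhere (neither by you nor by the paper), your argument does not reduce the theorem to anything logically weaker --- it reformulates the postulate at the level of expectation values and shows the two formulations coincide. You are commendably explicit about this, and the equivalence is genuinely useful content (it is the standard way the superselection rule is justified operationally, e.g.\ via the unobservability of the global $\mathbb{Z}_2$ gauge transformation or via the locality argument you sketch, since odd operators at spacelike-separated sites anticommute and so cannot both be observables). But be aware that, measured against the paper, you have not filled a missing proof so much as made precise what the postulate asserts; if you wanted a self-contained derivation you would still need to argue for the unobservability of odd operators from first principles, which is exactly the step you flag as the crux.
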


Consequently, linear maps representing transformations are bound to 
respect the parity superselction of states. As shown in Ref.~\cite{doi:10.1142/S0217751X14300257}, this implies that the operators corresponding to Kraus operators of allowed maps have a well-defined parity, thus form two subspaces (even or odd) of the CAR algebra. For instance, both $X_x$ and $Y_x$ are odd operators 
since they anticommute with $P$, whereas their product $Z_x$ commutes with $P$ and 
is an even operator.  This turns the CAR algebra into a $\mathbb{Z}_2$-graded
algebra~\cite{bratteli2012operator}.

\begin{definition}[$\mathbb{Z}_2$\textit{-graded algebra}]
A $\mathbb{Z}_2$-graded algebra $\mathsf{A}$ is a sequence
\begin{equation*}
\mathsf{A}=(\mathsf{A}^0,\mathsf{A}^1),
\end{equation*} 
where $\mathsf{A}^{0,1}$ are respectively the modules of even and odd operators. $\mathsf{A}$ is equipped with the following operations:
\begin{enumerate}
\item Graded sum: The sum is defined only between elements of $\mathsf{A}$ with the same parity, i.e.,
\begin{align*}
	O_1,O_2\in \mathsf{A}^p, \qquad O_1+O_2\in \mathsf{A}^p.
\end{align*}
\item Graded product: For every two operators we have \begin{align*}
	O_1\in\mathsf{A}^p,O_2\in\mathsf{A}^q, \qquad O_1O_2\in\mathsf{A}^{p\oplus q},
\end{align*}
where $\oplus$ denotes sum modulo 2.
\item Graded tensor product: For every two $\mathbb{Z}_2$-\textit{graded algebras} $\mathsf{A}$ and $\mathsf{B}$, we can define their graded tensor product $\mathsf{A}\boxtimes\mathsf{B}$ as
\begin{equation*}
\begin{split}
&\mathsf{A}\boxtimes\mathsf{B}=((\mathsf{A}\boxtimes\mathsf{B})^0,(\mathsf{A}\boxtimes\mathsf{B})^1), \\
&(\mathsf{A}\boxtimes\mathsf{B})^i=\bigoplus_{\substack{p,q=0,1 \\ p\oplus q=i}} \mathsf{A}^p\gt\mathsf{B}^q,
\end{split}
\end{equation*}
and denoting by $g(T)$ the grade of $T$, i.e.~$g(T)=0$ if $T\in\mathsf A^0$ and 
$g(T)=1$ if $T\in\mathsf A^1$, we have
\begin{align*}
(T\boxtimes I)(I\boxtimes S)=(-1)^{g(S)g(T)}(I\boxtimes S)(T\boxtimes I),
\end{align*}
where $I$ denotes the identity operator with $g(I)=0$.
\end{enumerate}
\end{definition}
Notice that, by the above definition, $g(A\boxtimes B)=g(A)\oplus g(B)$.
It is also convenient to introduce the following notion.
\begin{definition}[Graded commutator]
  For every $O_1,O_2\in\mathsf{A}$, we define
  the graded commutator as:
\begin{equation*}
\gc{O_1}{O_2}=O_1O_2-(-1)^{g(O_1)g(O_2)}O_2O_1.
\end{equation*}
\end{definition}
The graded commutator reduces to the anticommutator if $O_1,O_2$ are both odd, 
and to the commutator otherwise.

We can now define the local Fermionic algebra on $\mathcal L$ as the algebra of local operators, i.e.~operators acting non-trivially on 
finitely many sites of the lattice $\mathcal L$, whose completion
in the operator norm provides a graded algebra called \textit{quasi-local}
algebra~\cite{bratteli2012operator}. More precisely, upon denoting the single 
$x$-th cell algebra generated by the pair $(X_x,Y_x)$ as $\mathsf{A}_x$, 
if we consider a finite cardinality subset $\Lambda\subset \mathbb{Z}$ of the
chain, we can define the \textit{local} algebra at $\Lambda$ as:
\begin{align*}
&\mathsf{A}_{\Lambda_1}\coloneqq \gt_{x\in\Lambda_1} \mathsf{A}_x, \qquad O_{\Lambda_i}\in\mathsf{A}_{\Lambda_i},
\end{align*}
and then the \emph{local algebra} on $\mathcal L$ will be characterised by the product
\begin{align*}
&O_{\Lambda_1}O_{\Lambda_2}\coloneqq(O_{\Lambda_1}\gt I_{\Lambda_2\setminus\Lambda_1})(I_{\Lambda_1\setminus\Lambda_2}\gt O_{\Lambda_2}).
\end{align*}
Since $\mathsf{A}_x$, as well as $\mathsf{A}_{\Lambda}$, is a normed
space with the standard operator norm, we can consider the closure of
the local algebra in this norm, providing the graded
\textit{quasi-local} algebra $\mathsf{A}(\mathbb{Z})$ over
$\mathbb{Z}$. This is the algebra of operators that can be
approximated with arbitrary precision with local operators.

\section{One Dimensional Fermionic Cellular Automata}
The notion of a Fermionic Cellular Automaton is the Fermionic counterpart of 
that of Quantum Cellular Automaton defined in \cite{schumacher2004reversible}, 
which is an automorphism of $\mathsf{A}(\mathbb{Z})$. Preliminarly, we 
introduce a special class of FCA, called shifts, which implement lattice
translations.

\begin{definition}[Shift $\tau_x$]\label{def:shift}
  A shift $\tau_x$ is the automorphism of $\mathsf A(\mathbb Z)$ such that 
\begin{align*}
\tau_x(X_y) &\coloneqq X_{x+y},\qquad \tau_x(Y_y)\coloneqq Y_{x+y}.
\end{align*}
\end{definition}
Fermionic Cellular Automata are then defined as follows:
 \begin{definition}\label{def:FCA} 
   A $\emph{Fermionic Cellular Automaton}$ (FCA)
  with finite neighbourhood scheme $\mathcal{N} \subset\mathbb{Z}$ is an
  automorphism
   $\tT:\mathsf{A}(\mathbb{Z}) \longrightarrow
   \mathsf{A}(\mathbb{Z})$ of the \textit{quasi}-local $\mathbb{Z}_2$-graded
   algebra such that
   \begin{enumerate}
   \item  $\tT(\mathsf{A}(\Lambda))\subset
   \mathsf{A}(\Lambda+\mathcal{N}) \quad
   \forall \Lambda\subset \mathbb{Z}$ (locality),
   \item $\tT\circ\tau_x=\tau_x\circ\tT \quad$ 
   $\forall x\in \mathbb{Z}$ (homogeneity),
 \end{enumerate}
 where $\Lambda+\mathcal{N} \coloneqq \{ y+x | y \in
\Lambda, x \in \mathcal{N}  \}$.
   The homomorphism $\tT_0:\mathsf{A}_0\longrightarrow
   \mathsf{A}(\mathcal{N})$ given by the
   restriction of the FCA to the $0$ cell is called the
   $\emph{local transition rule}$.
\end{definition}

Defining a FCA $\tT$ as a $*$-automorphism over
$\Aqloc=(\m A_0,\m A_1)$ implies that each $\tT$ preserves the parity
of every element in $\Aqloc$. In particular,
$\acomm{\tT(\xi)}{\tT(\eta)}=0$ for all pair of odd generators
$(\xi,\eta)$ satisfying $\acomm{\xi}{\eta}=0$. Moreover, since the
full-graded matrix algebra is generated by two anticommuting odd
operators, it is sufficient to specify the action of $\tT$ over those
operators to fully characterize $\tT$ over $\Aqloc$.

Finally, the following theorem specifies the relation between the local and the global rules of the FCA, i.e., respectively the *-homomorphisms describing the global time evolution of $\tT$ and its restriction to the single cell algebra \cite{schumacher2004reversible}.
 
\begin{theorem} The following statements hold:
	\label{thm:locglo}
\hspace*{1pt}
\begin{enumerate}
	\item The global *-homomorphism $\tT$ is uniquely determined
	by the local transition rule $\tT_0$.\\
	\item A *-homomorphism $\tT_0 : \mathsf{A}_{0} \longrightarrow \mathsf{A}_\mathcal{N}$ is the transition
	rule of a FCA if and only if, for all $x \in
	\mathbb{Z}$ such that $\mathcal{N}\cap(\mathcal{N}+x)\neq \emptyset$, the algebras $\tT_{0}(\mathsf{A}_{0})$ and $\tau_x \tT_{0}\tau_x^{-1}(\mathsf{A}_{x})$ graded-commute element-wise.
\end{enumerate}
The relation between the local rule and the global evolution is given by
\begin{equation}
   \tT(\boxtimes_{x\in\Lambda} \m A_x)=\prod_{x\in \Lambda}\tT_x(\m A_x),
\end{equation}
where $\tT_x\coloneqq \tau_x\tT_0\tau_x^{-1}$, and the same (immaterial) ordering of 
$\Lambda$ is implicit in both products.
\end{theorem}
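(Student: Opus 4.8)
The plan is to establish all three claims in one sweep, extracting uniqueness (first assertion) and the concluding identity from homogeneity alone, and then proving the characterization (second assertion) in both directions. First I would use covariance to reduce $\tT$ to its single-cell restriction: for any $O_x\in\m A_x$ write $O_x=\tau_x(O_0)$ with $O_0\in\m A_0$, so that $\tT(O_x)=\tT\tau_x(O_0)=\tau_x\tT(O_0)=\tau_x\tT_0(O_0)$, which is precisely $\tT_x(O_x)$ for $\tT_x\coloneqq\tau_x\tT_0\tau_x^{-1}$. Hence the restriction of $\tT$ to every cell $\m A_x$ is fixed by $\tT_0$. Since each element of the local algebra $\boxtimes_{x\in\Lambda}\m A_x$ is a finite combination of products of single-cell operators and $\tT$ is multiplicative, $\tT(\boxtimes_{x\in\Lambda}O_x)=\prod_{x\in\Lambda}\tT_x(O_x)$ with the same ordering on both sides, which is the concluding identity; uniqueness on this dense local subalgebra then extends to all of $\Aqloc$ because a $*$-automorphism of a $C^*$-algebra is isometric, hence continuous. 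This settles the first assertion and the concluding identity simultaneously.

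For the necessity half of the second assertion I would use that operators on distinct sites graded-commute, $(T\gt I)(I\gt S)=(-1)^{g(S)g(T)}(I\gt S)(T\gt I)$, so $\m A_0$ and $\m A_x$ graded-commute element-wise for every $x\neq0$. Because $\tT$ is a grading-preserving $*$-automorphism it carries graded-commuting operators to graded-commuting operators of the same grade; applied to $\m A_0$ and $\m A_x$ this forces $\tT_0(\m A_0)$ and $\tT_x(\m A_x)$ to graded-commute. When $\mathcal N\cap(\mathcal N+x)=\emptyset$ the two images are supported on the disjoint regions $\m A_{\mathcal N}$ and $\m A_{\mathcal N+x}$ and graded-commute automatically, so the actual content of the hypothesis is the overlapping case, as stated.

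For the converse I would take $\tT_0$ satisfying the graded-commutation hypothesis, set $\tT_x\coloneqq\tau_x\tT_0\tau_x^{-1}$, and \emph{define} $\tT$ on the local algebra through the concluding identity. The one thing to check is consistency with the graded product, which reorders factors living on different sites: from $O_xO_y=(-1)^{g(O_x)g(O_y)}O_yO_x$ for $x\neq y$, multiplicativity of $\tT$ demands that $\tT_x(O_x)$ and $\tT_y(O_y)$ graded-commute; translating by $\tau_x$, this is the graded-commutation of $\tT_0(\m A_0)$ with $\tT_{y-x}(\m A_{y-x})$, which is automatic when their supports are disjoint and is the standing hypothesis when they overlap. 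Granting this, $\tT$ is a well-defined unital $*$-homomorphism on the local algebra, the relation $\tau_x\tT_y\tau_x^{-1}=\tT_{x+y}$ yields homogeneity, and continuity extends $\tT$ to $\Aqloc$.

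I expect the real obstacle to be the requirement in Definition~\ref{def:FCA} that $\tT$ be an \emph{automorphism}, whereas the construction a priori delivers only a covariant, locality-preserving $*$-endomorphism. Injectivity is immediate, since $\tT_0$ is a unital $*$-homomorphism of the simple algebra $\m A_0\cong\M11$ and hence injective, and this propagates to $\Aqloc$; surjectivity is where the work lies. The strategy I would follow is the $\mathbb{Z}_2$-graded adaptation of the support-algebra analysis of Ref.~\cite{schumacher2004reversible}: one records, for each neighbourhood cell, the support subalgebra of $\tT_0(\m A_0)$ there, and shows that graded-commutation of $\tT_0(\m A_0)$ with every shifted copy $\tT_x(\m A_x)$ forces these support algebras and their graded commutants to be balanced full-matrix factors. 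This balancing is what makes the images $\{\tT_x(\m A_x)\}_{x}$ jointly generate every single-cell pair $X_y,Y_y$, so that $\tT$ is onto; carrying the grading signs correctly through the commutant computation is the delicate part.
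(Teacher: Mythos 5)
The paper does not actually prove Theorem~\ref{thm:locglo}: it is imported from Ref.~\cite{schumacher2004reversible} (and its graded extension in Ref.~\cite{PhysRevB.99.085115}), so there is no in-paper argument to compare yours against; I am therefore assessing your proposal on its own terms.

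Your treatment of uniqueness, of the product formula, and of the ``only if'' half of item~2 is correct and standard: covariance reduces $\tT$ to $\tT_0$ cell by cell, multiplicativity together with density of the local algebra and the automatic continuity of $*$-homomorphisms of $C^*$-algebras gives uniqueness, and a grading-preserving automorphism sends graded-commuting pairs to graded-commuting pairs of the same grades. The well-definedness check in the converse (consistency of the product formula with the reordering sign $(-1)^{g(O_x)g(O_y)}$) is also exactly the right thing to verify. The genuine gap is the one you yourself flag: surjectivity of the constructed endomorphism, which is needed because Definition~\ref{def:FCA} demands an automorphism. You leave this as a one-sentence strategy, and the strategy as stated would fail: graded commutation does \emph{not} force the single-cell support algebras of $\tT_0(\mathsf{A}_0)$ ``and their graded commutants to be balanced full-matrix factors.'' The paper's own forking automaton (Theorem~\ref{thm:main}) is a counterexample: there $\EL$ and $\ER$ are the two-dimensional algebras generated by a single odd operator, not full matrix algebras, yet the local rule defines a perfectly good (indeed $\tM$-implementable) FCA. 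The argument that actually closes this step is the graded version of the one in Refs.~\cite{Werner-index,PhysRevB.99.085115}, cf.\ Eq.~\eqref{eq:suppaut}: it works with supports on \emph{pairs} of cells. Injectivity (from simplicity of the graded cell algebra) gives $\operatorname{dim}\tT(\mathsf{A}_{2x}\gt\mathsf{A}_{2x+1})=\operatorname{dim}(\mathsf{A}_{2x}\gt\mathsf{A}_{2x+1})$; the support-algebra lemma gives $\tT(\mathsf{A}_{2x}\gt\mathsf{A}_{2x+1})=\mathsf{L}_{2x}\gt\mathsf{R}_{2x+1}$, with $\mathsf{R}_{2x-1}$ and $\mathsf{L}_{2x}$ graded-commuting subalgebras of $\mathsf{A}_{2x-1}\gt\mathsf{A}_{2x}$ generating a copy of $\mathsf{R}_{2x-1}\gt\mathsf{L}_{2x}$ there; a telescoping dimension count then forces $\mathsf{R}_{2x-1}\gt\mathsf{L}_{2x}=\mathsf{A}_{2x-1}\gt\mathsf{A}_{2x}$, so the image contains every local algebra and $\tT$ is onto. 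Without carrying out this (or an equivalent) computation, the ``if'' direction of item~2 is not established.
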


As illustrative examples of FCA, we consider the family of conjugation 
automorphisms and the so-called \textit{Majorana}-shift. On the one
hand, every conjugation maps a single-cell operator into another one
by conjugation with a parity preserving unitary matrix. We remind that the group of 
reversible maps on a single Fermionic mode is
\begin{equation}
G\cong\mathbb U(1)\times \mathbb Z_2,
\end{equation}
whose elements are given by 
\begin{align}\label{eq:unit}
U(\theta,n)\coloneqq e^{i\theta Z}X^n, \quad n=0,1. 
\end{align}
\begin{definition}[Conjugation $\mathscr{U}_x$]
	\label{def:Conjugation}
A conjugation automaton is given by
	\begin{align*}
		&\mathscr{U}_x^{(\theta,n)}: \Aqloc \rightarrow \Aqloc,\\
		&\mathscr{U}_x^{(\theta,n)}(\A_x) \coloneqq U(\theta,n)^\dag \A_x U(\theta,n).
	\end{align*}
\end{definition}
In the remainder, we will omit the parameters $\theta,n$ when they are clearly identified 
by the context. On the other hand, the Majorana shift is defined as follows.
\begin{definition}[\textit{Majorana}-shift $\sigma_{\pm}$]
	\label{def:Majoranashift}
	The \textit{Majorana}-shifts
	\begin{equation}
	\label{eq:Majoshift}
		\sigma_{\pm}:\A(\mathbb{Z})\rightarrow \A(\mathbb{Z})
	\end{equation}
	are the (unique) homogeneous FCAs such that
	\begin{equation}
		\label{eq:Mshift}
		\begin{aligned}
                  \sigma_\pm&: (X_{x},Y_{x}) \mapsto (Y_{x},X_{x\pm1}), 
		\end{aligned}
	\end{equation}	
	along with their inverses
	\begin{align}
        \sigma^{-1}_\pm&: (X_{x},Y_{x}) \mapsto (Y_{x\mp1},X_{x}).
	\end{align}
\end{definition}

\begin{remark}[Majorana modes]
Up to now we have considered chains where at each site $x$ lies a
Fermion, whose algebra $\mathsf{A}_x$ is generated by
the pair $(X_x,Y_x)$. It is possible to decouple each Fermionic
mode, say the $i$-th one, into
two \emph{Majorana modes} as follows:
\begin{align}
\xi_{2x}\coloneqq X_x,\quad\xi_{2x+1}\coloneqq Y_x.\label{eq:mamod}
\end{align}
        Accordingly, we can think of a chain of ``complex'' $n$
        Fermionic modes as a chain of $2m$ Majorana modes, with two
        Majorana modes for each site $x$. Based on
        Eqs.~\eqref{eq:mamod} and~\eqref{eq:Fpauli-inv} one can
        identify Majorana modes with the generators of the local
        algebra, thus justifying the ``Majorna-shifts'' nomenclature
        for the automata~\eqref{eq:Majoshift}. Indeed, $\sigma_+$ (and its inverse)  
        move the Majorana modes to the right (left)~\cite{PhysRevB.99.085115}. 
		On the other hand, $\sigma_-$ (and its inverse) have a similar behaviour, 
		provided that we first exchange $\xi_{2x}$ and $\xi_{2x+1}$.
\end{remark}

\subsection{Index for FCA}
We now review the concept of index for a one-dimensional Fermionic
QCA. The theory was presented in \cite{Werner-index} for qudits
quantum walks and QCA and it was then extended to the Fermionic case
in \cite{PhysRevB.99.085115}. For a review of the index theory, see
e.g. \cite{Farrelly2020reviewofquantum}.  Naively speaking, the index
summarises all local invariants of a QCA---i.e.~quantities that can be evaluated
on finitely many sites and are invariant along the lattice---
We begin reviewing the concept of \emph{support algebras}
\cite{schumacher2004reversible}\cite{PhysRevB.99.085115}.

\begin{definition}[\textbf{Support Algebra}]
  Let $\mathsf{A},\mathsf{B}_{L},\mathsf{B}_R$ be
  ($\mathbb{Z}_2$-graded) algebras, with
  $\mathsf{A}\subset \m B_{L}\boxtimes \m B_{R}$. We denote the support of
  $\mathsf{A}$ on $\mathsf{B}_L$ (or on $\mathsf{B}_R$) by
  $\mathbf{S}(\mathsf{A},\mathsf{B}_L)$ (or $\mathbf{S}(\mathsf{A},\mathsf{B}_R)$). 
  This is defined to be the smallest ($\mathbb Z_2$-graded) 
  $C^*$-subalgebra $\mathsf{S}\subset\mathsf{B}_L$ 
  such that $\mathsf{A}\subset\mathsf{S}\otimes\mathsf{B}_R$ (or the smallest 
  ($\mathbb Z_2$-graded) 
  $C^*$-subalgebra $\mathsf{S}\subset\mathsf{B}_R$ 
  such that $\mathsf{A}\subset\mathsf{B}_L\otimes\mathsf{S}$). 
\end{definition}
One can very easily prove the following result that provides a characterisation of 
the support algebras~\cite{Werner-index,PhysRevB.99.085115}.
\begin{lemma}
  Let $\mathsf{A},\mathsf{B}_{L},\mathsf{B}_R$ be
  ($\mathbb{Z}_2$-graded) algebras, with
  $\mathsf{A}\subset B_{L}\boxtimes B_{R}$.
  , for any $a\in\mathsf{A}$, we can decompose it as
\begin{equation*}
a=\sum_{i} b_{L,i}\boxtimes b_{R,i},
\end{equation*}
where $\{b_{L,i}\}\subseteq\mathsf{B}_L$ and $\{b_{R,i}\}\subseteq\mathsf{B}_R$ are  linearly independent sets. Then $\mathbf{S}(\mathsf{A},\mathsf{B}_L)$ and $\mathbf{S}(\mathsf{A},\mathsf{B}_R)$ are the ($\mathbb Z_2$-graded) algebras generated by these two sets, respectively.
\end{lemma}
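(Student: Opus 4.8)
The plan is to prove both equalities by verifying the two defining properties of the support algebra separately: that the algebra generated by the left factors is an \emph{admissible} candidate, and that it is the \emph{smallest} one. I focus on $\mathbf{S}(\mathsf{A},\mathsf{B}_L)$, since the argument for $\mathbf{S}(\mathsf{A},\mathsf{B}_R)$ is symmetric. Let $\mathcal{S}_L$ denote the ($\mathbb{Z}_2$-graded) subalgebra of $\mathsf{B}_L$ generated by the collection of all left factors $b_{L,i}$ that arise in the linearly independent decompositions of the elements $a\in\mathsf{A}$.

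First I would check that $\mathcal{S}_L$ is admissible, i.e.\ that $\mathsf{A}\subset\mathcal{S}_L\boxtimes\mathsf{B}_R$. This is immediate: every $a=\sum_i b_{L,i}\boxtimes b_{R,i}$ has each $b_{L,i}\in\mathcal{S}_L$ by construction and each $b_{R,i}\in\mathsf{B}_R$, so $a\in\mathcal{S}_L\boxtimes\mathsf{B}_R$. Hence $\mathcal{S}_L$ is one of the subalgebras $\mathsf{S}$ competing in the definition of the support, and therefore $\mathbf{S}(\mathsf{A},\mathsf{B}_L)\subseteq\mathcal{S}_L$.

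The core of the proof is the reverse inclusion, namely that $\mathcal{S}_L$ is contained in \emph{every} admissible $\mathsf{S}$; here the linear independence of the right factors is the essential ingredient. Let $\mathsf{S}\subset\mathsf{B}_L$ be any subalgebra with $\mathsf{A}\subset\mathsf{S}\boxtimes\mathsf{B}_R$, fix $a\in\mathsf{A}$ with its linearly independent decomposition $a=\sum_i b_{L,i}\boxtimes b_{R,i}$, and use that $a$ may simultaneously be written as $a=\sum_j s_j\boxtimes c_{R,j}$ with $s_j\in\mathsf{S}$. Since $\{b_{R,i}\}$ is a finite linearly independent set, one can pick bounded functionals $f_k$ on $\mathsf{B}_R$ dual to it, $f_k(b_{R,i})=\delta_{ki}$ (finite-dimensional linear algebra on the span, extended by Hahn--Banach if needed). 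The slice map $\mathrm{id}\boxtimes f_k$ is a well-defined linear map on the underlying vector space of the graded tensor product, since $\boxtimes$ differs from the ordinary tensor product only through its twisted multiplication and sign rule. Applying it to the first expression returns $b_{L,k}$, while applying it to the second yields $\sum_j f_k(c_{R,j})\,s_j$, a linear combination of elements of $\mathsf{S}$. Equating the two gives $b_{L,k}\in\mathsf{S}$ for every $k$ and every $a$, so the generated algebra satisfies $\mathcal{S}_L\subseteq\mathsf{S}$, and in particular $\mathcal{S}_L\subseteq\mathbf{S}(\mathsf{A},\mathsf{B}_L)$.

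Combining the two inclusions yields $\mathbf{S}(\mathsf{A},\mathsf{B}_L)=\mathcal{S}_L$. The remaining points demand only care: arranging each decomposition to be homogeneous (splitting $a$ and the $b_{R,i}$ into their even and odd parts, which preserves linear independence) so that each $b_{L,i}$ is homogeneous and $\mathcal{S}_L$ is genuinely graded; choosing each slicing functional to respect the parity of the $b_{R,k}$ it isolates; and, in the infinite-dimensional $C^*$ setting, using that $\mathsf{S}$ is norm-closed together with boundedness of the $f_k$ so that passing to the generated ($C^*$-)algebra preserves the inclusion. I expect the extraction-by-slicing step to be the conceptual crux, but once the uniqueness of the tensor decomposition against a linearly independent factor is in hand the rest is routine, consistent with the claim that the result is easily proved.
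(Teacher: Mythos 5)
Your proof is correct and follows exactly the standard slice-map argument that the paper itself omits (it only remarks the lemma is ``very easily'' proved and defers to the cited references): admissibility of the algebra generated by the left factors gives one inclusion, and extracting each $b_{L,k}$ from any admissible $\mathsf{S}$ via functionals dual to the linearly independent $\{b_{R,i}\}$ gives the other. The only small inaccuracy is the claim that splitting the $b_{R,i}$ into homogeneous parts ``preserves linear independence'' (it need not, e.g. $e+o$ and $e-o$ have identical even parts); this is harmless because a graded subalgebra contains the homogeneous components of each of its elements, so $b_{L,k}\in\mathsf{S}$ already forces the graded algebra generated by the $b_{L,k}$ to lie in $\mathsf{S}$.
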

Clearly, given $\mathsf{A}\subset\m B_{L}\boxtimes \m B_{R}$, one has
\begin{align}
\mathsf A\subseteq\mathbf{S}(\mathsf{A},\mathsf{B}_L)\boxtimes \mathbf{S}(\mathsf{A},\mathsf{B}_R)\label{eq:subsind}.
\end{align}

One of the most relevant properties of Support Algebras, that is crucial to 
classify one-dimensional FCA, is contained in the following Lemma~\cite{PhysRevB.99.085115}. This is a straightforward generalisation to the
graded case of Lemma~8 in Ref.~\cite{Werner-index}.
\begin{lemma} \label{thm:suppcomm} Consider the ($\mathbb{Z}_2$-graded)
  algebras $\mathsf{B}_L,\mathsf{B}_C,\mathsf{B}_R$, and let
  $\mathsf{A}_{LC},\mathsf{A}_{CR}$ be subalgebras of
  $\mathsf{B}_L\boxtimes\mathsf{B}_C\boxtimes\mathsf{B}_R$ satisfying
\begin{equation*}
  \mathsf{A}_{LC}\subseteq \mathsf{B}_L\boxtimes\mathsf{B}_C,\qquad
  \mathsf{A}_{CR}\subseteq \mathsf{B}_C\boxtimes\mathsf{B}_R.
\end{equation*}
Then $\gc{\mathsf{A}_{LC}\boxtimes I_R}{I_L\boxtimes\mathsf{A}_{CR}}=0$ implies
\begin{equation*}
\gc{\mathbf{S}(\mathsf{A}_{LC},\mathsf{B}_C)}{\mathbf{S}(\mathsf{A}_{CR},\mathsf{B}_C)}=0.
\end{equation*}
\end{lemma}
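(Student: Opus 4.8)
The plan is to reduce the assertion about the two support algebras to an assertion about their generators, and then to extract the generator-wise graded commutation from the hypothesis by exploiting the linear independence furnished by the preceding Lemma. First I would recall that, by that characterization, $\mathbf{S}(\mathsf{A}_{LC},\mathsf{B}_C)$ is generated (as a $\mathbb{Z}_2$-graded $C^*$-subalgebra) by all central factors $c_i$ occurring in decompositions $a=\sum_i b_{L,i}\gt c_i$ of elements $a\in\mathsf{A}_{LC}$ with $\{b_{L,i}\}$ linearly independent in $\mathsf{B}_L$, and likewise $\mathbf{S}(\mathsf{A}_{CR},\mathsf{B}_C)$ is generated by the factors $c'_j$ in decompositions $a'=\sum_j c'_j\gt b_{R,j}$ with $\{b_{R,j}\}$ linearly independent in $\mathsf{B}_R$. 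Since the graded commutator satisfies a graded Leibniz rule, $\gc{A}{BC}=\gc{A}{B}C+(-1)^{g(A)g(B)}B\gc{A}{C}$, graded commutation with a fixed homogeneous element is preserved under products, linear combinations, and (by norm continuity) passage to the closure. It therefore suffices to prove $\gc{c_i}{c'_j}=0$ for every such pair of generators; the vanishing then propagates to the full generated subalgebras. I may also assume without loss of generality that $a,a'$ and all the factors $b_{L,i},c_i,c'_j,b_{R,j}$ are homogeneous, splitting each element into even and odd parts and refining the decompositions accordingly.

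Next I would embed both elements into $\mathsf{B}_L\gt\mathsf{B}_C\gt\mathsf{B}_R$ as $a\gt I_R=\sum_i b_{L,i}\gt c_i\gt I_R$ and $I_L\gt a'=\sum_j I_L\gt c'_j\gt b_{R,j}$, and expand $\gc{a\gt I_R}{I_L\gt a'}$ directly using the multiplication rule of the graded tensor product to reorder the central and right factors. In the product taken in one order, every interchange involves an identity on $\mathsf{B}_L$ or $\mathsf{B}_R$, so the Koszul sign is trivial and one obtains $\sum_{i,j}b_{L,i}\gt(c_i c'_j)\gt b_{R,j}$; in the opposite order one collects a sign $(-1)^{\epsilon_{ij}}$ with $\epsilon_{ij}=g(c'_j)g(b_{L,i})+g(b_{R,j})g(b_{L,i})+g(b_{R,j})g(c_i)$.

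The decisive computation, and the only delicate point, is to verify that the overall prefactor of the reordered term combines correctly. Using $g(a)=g(b_{L,i})+g(c_i)$ and $g(a')=g(c'_j)+g(b_{R,j})$ modulo $2$, the total sign $(-1)^{g(a)g(a')+\epsilon_{ij}}$ simplifies: all mixed contributions involving $g(b_{L,i})$ or $g(b_{R,j})$ cancel in pairs modulo $2$, leaving exactly $(-1)^{g(c_i)g(c'_j)}$. Hence the graded commutator assumes the diagonal form $\gc{a\gt I_R}{I_L\gt a'}=\sum_{i,j}b_{L,i}\gt\gc{c_i}{c'_j}\gt b_{R,j}$. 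The hypothesis forces this to vanish, and since $\{b_{L,i}\}$ is linearly independent in $\mathsf{B}_L$ and $\{b_{R,j}\}$ is linearly independent in $\mathsf{B}_R$, each coefficient $\gc{c_i}{c'_j}$ must vanish separately, which is precisely what the reduction required. I expect the main obstacle to be purely this sign bookkeeping: confirming that the accumulated $\mathbb{Z}_2$ phases telescope to the single factor $(-1)^{g(c_i)g(c'_j)}$ that reconstructs the graded commutator on $\mathsf{B}_C$, rather than leaving a residual sign that would break the argument.
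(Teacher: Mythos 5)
Your argument is correct, and the sign bookkeeping does work out: with $g(a)=g(b_{L,i})+g(c_i)$ and $g(a')=g(c'_j)+g(b_{R,j})$, the Koszul phases combine with $(-1)^{g(a)g(a')}$ to leave exactly $(-1)^{g(c_i)g(c'_j)}$, so the commutator collapses to $\sum_{i,j}b_{L,i}\gt\gc{c_i}{c'_j}\gt b_{R,j}$ and linear independence of $\{b_{L,i}\}$ and $\{b_{R,j}\}$ kills each coefficient. The paper itself gives no proof of this Lemma (it defers to the graded generalisation of Lemma~8 of the index paper), and your route is precisely that standard argument --- decomposition via the support-algebra characterisation, expansion of the graded commutator, and coefficient-wise vanishing --- so there is nothing to add beyond confirming the telescoping of the $\mathbb{Z}_2$ phases.
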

We can now define the so-called left and right support algebras for a QCA (FCA) 
$\tT$ as
\begin{equation}\label{eq:left-right}
\begin{split}
  \mathsf{L}_{2x} \coloneqq \mathbf{S}(\tT(\mathsf{A}_{2x}\boxtimes\mathsf{A}_{2x+1}),\A_{2x-1}\boxtimes\A_{2x}),\\
  \mathsf{R}_{2x+1} \coloneqq
  \mathbf{S}(\tT(\mathsf{A}_{2x}\boxtimes\mathsf{A}_{2x+1}),\A_{2x+1}\boxtimes\A_{2x+2}).
\end{split}
\end{equation}
The core result on which index theory hinges is the following lemma, that was proved 
for the qudit case in~Ref.\cite{Werner-index} and generalised to the Fermionic case in Ref.~\cite{PhysRevB.99.085115}, consists in the following stronger version of the inclusion in~\eqref{eq:subsind}. 
\begin{lemma}
Let $\tT$ be a QCA (FCA). Then 
\begin{equation}\label{eq:suppaut}
\mathsf{L}_{2x}\gt\mathsf{R}_{2x+1}=\tT(\A_{2x}\gt\A_{2x+1}).
\end{equation}
\end{lemma}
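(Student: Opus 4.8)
The plan is to establish the nontrivial inclusion $\mathsf{L}_{2x}\gt\mathsf{R}_{2x+1}\subseteq\tT(\A_{2x}\gt\A_{2x+1})$, since the reverse inclusion is exactly \eqref{eq:subsind}, and to obtain it from a dimension count. Write $\mathsf P_x:=\A_{2x}\gt\A_{2x+1}$ and $\mathsf P_{x-1}:=\A_{2x-2}\gt\A_{2x-1}$ for the relevant pair algebras, and set $\mathsf B_L:=\A_{2x-1}\gt\A_{2x}$, $\mathsf B_R:=\A_{2x+1}\gt\A_{2x+2}$, so that (after coarse-graining to a nearest-neighbour scheme) $\tT(\mathsf P_x)\subseteq\mathsf B_L\gt\mathsf B_R$. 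Since $\tT$ is an automorphism and $\mathsf P_x\cong M(\mathbb C^{2|2})$ is graded-central-simple of dimension $16$, so is $\tT(\mathsf P_x)$. Moreover $\mathsf L_{2x}\gt\mathsf R_{2x+1}$ is built from operators on disjoint cells, so $\dim{\mathsf L_{2x}\gt\mathsf R_{2x+1}}=\dim{\mathsf L_{2x}}\cdot\dim{\mathsf R_{2x+1}}$. In view of \eqref{eq:subsind} it therefore suffices to prove the dimension identity $\dim{\mathsf L_{2x}}\cdot\dim{\mathsf R_{2x+1}}=16$.

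The next step is to bring in the neighbouring pair. The right support $\mathsf R_{2x-1}=\mathbf S(\tT(\mathsf P_{x-1}),\mathsf B_L)$ and the left support $\mathsf L_{2x}=\mathbf S(\tT(\mathsf P_x),\mathsf B_L)$ both live in $\mathsf B_L$. Because $\mathsf P_{x-1}$ and $\mathsf P_x$ act on disjoint cells they graded-commute, and $\tT$ preserves the graded commutator, so $\tT(\mathsf P_{x-1})$ and $\tT(\mathsf P_x)$ graded-commute; applying Lemma~\ref{thm:suppcomm} with central block $\mathsf B_C=\mathsf B_L$ and outer blocks $\A_{2x-3}\gt\A_{2x-2}$ and $\mathsf B_R$ yields $\gc{\mathsf R_{2x-1}}{\mathsf L_{2x}}=0$.

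The crux is to show that $\mathsf R_{2x-1}$ and $\mathsf L_{2x}$ \emph{generate} the full two-cell algebra $\mathsf B_L$. Let $d\in\mathsf B_L$ graded-commute with both. Since the left legs of $\tT(\mathsf P_x)$ generate $\mathsf L_{2x}$, $d$ graded-commutes with all of $\tT(\mathsf P_x)$, and likewise with all of $\tT(\mathsf P_{x-1})$. Every other image $\tT(\A_{2y}\gt\A_{2y+1})$, $y\neq x-1,x$, is supported away from $\{2x-1,2x\}$ and hence graded-commutes with $d$ automatically. As $\tT$ is surjective these images generate $\Aqloc$, so $d$ is graded-central in $\Aqloc$, hence in $\mathsf B_L$, hence a scalar. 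Thus the graded commutant of $\mathsf R_{2x-1}\vee\mathsf L_{2x}$ in the graded-central-simple algebra $\mathsf B_L$ is trivial, and by graded simplicity $\mathsf R_{2x-1}\vee\mathsf L_{2x}=\mathsf B_L$. Running the same argument on a graded-central element $z$ of $\mathsf L_{2x}$ (which graded-commutes with $\mathsf L_{2x}$ and, being in $\mathsf B_L$, also with $\mathsf R_{2x-1}$) shows that $\mathsf L_{2x}$, and symmetrically $\mathsf R_{2x-1}$, are graded-central-simple; their graded tensor product is then graded-simple, the multiplication map $\mathsf R_{2x-1}\gt\mathsf L_{2x}\to\mathsf B_L$ is an isomorphism, and $\dim{\mathsf L_{2x}}\cdot\dim{\mathsf R_{2x-1}}=\dim{\mathsf B_L}=16$.

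Finally, homogeneity of $\tT$ makes the support algebras translates of one another, so $\dim{\mathsf R_{2x+1}}=\dim{\mathsf R_{2x-1}}$, and the required identity $\dim{\mathsf L_{2x}}\cdot\dim{\mathsf R_{2x+1}}=16=\dim{\tT(\mathsf P_x)}$ follows; combined with \eqref{eq:subsind} this forces equality. I expect the generation step to be the main obstacle: the inclusion \eqref{eq:subsind} can be strict for an \emph{arbitrary} graded-central-simple subalgebra, whose two supports need not tile the block, so graded simplicity of $\tT(\mathsf P_x)$ alone is insufficient and one must genuinely exploit the neighbouring-cell commutation together with surjectivity of $\tT$. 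A secondary subtlety is that the support algebras may be of Clifford (odd) type, so the dimension bookkeeping must be carried out in the graded setting rather than by reducing to ordinary matrix factors.
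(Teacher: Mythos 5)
The paper itself does not prove this lemma (it defers to Refs.~\cite{Werner-index} and~\cite{PhysRevB.99.085115}), and your argument is essentially the standard proof from those sources transplanted to the graded, homogeneous setting: graded commutation of the adjacent supports $\mathsf R_{2x-1}$ and $\mathsf L_{2x}$ via Lemma~\ref{thm:suppcomm}, generation of the intermediate block $\A_{2x-1}\gt\A_{2x}$ from surjectivity of $\tT$ together with the graded bicommutant theorem, and the resulting identity $\dim{\mathsf L_{2x}}\cdot\dim{\mathsf R_{2x-1}}=\dim{\A_{2x-1}}\dim{\A_{2x}}$ forcing equality in \eqref{eq:subsind}. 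Your use of homogeneity to identify $\dim{\mathsf R_{2x-1}}$ with $\dim{\mathsf R_{2x+1}}$ is a legitimate shortcut available here (the cited proofs instead telescope the product over a finite ring so as to cover inhomogeneous automata), and the only cosmetic blemishes are the specialization to cell dimension $16$ where the lemma is stated for arbitrary cells, and the unproved but standard graded-algebra facts (graded double commutant, graded-central-simplicity of $\gt$-products) that the references also take as given.
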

Now, we can define the index as 
follows.
\begin{definition}[Index]
  The index of a QCA (FCA) having global evolution $\tT$ is
\begin{equation}
  \operatorname{ind}[\tT] \coloneqq \sqrt{\frac{\operatorname{dim}[\mathsf{L}_{2x}]}{\operatorname{dim}[\A_{2x}]}}=\sqrt{\frac{\operatorname{dim}[\mathsf{A}_{2x+1}]}{\operatorname{dim}[\m R_{2x+1}]}},
\label{eq:ind}
\end{equation}
with the algebras $\mathsf{L}_{2x}$ and $\mathsf{R}_{2x+1}$ defined in
Eq.~\eqref{eq:left-right}, and where the last equality follows from
\Eq\eqref{eq:suppaut}.
\end{definition}

Via results on classification of (semi)simple algebras it is possible
to find which values the index can take in principle. As stated in
\cite{PhysRevB.99.085115}, typical classification results hold also in
the graded case \cite{PhysRevB.99.085115}. It was shown that a generic
support algebra $\m S_x$ is a $\textit{simple}$ $\mathbb{Z}_2$ graded
algebra, i.e. a $\textit{semisimple}$ $\mathbb{Z}_2$ graded algebra
with trivial graded center. Therefore, a $\mathbb{Z}_2$ graded version
of the Wedderburn's Theorem holds in this case. Based on this, and
employing the notation of \cite{PhysRevB.99.085115}, the following
Lemma holds:

\begin{lemma}\label{lem:alg}
  Let $p,q\in \mathbb{N}$ the dimensions of the even and odd sectors
  of the $\mathbb{Z}_2$-graded vector space $\mathbb{C}^{p\lvert
    q}$. A semisimple graded algebra over $\mathbb{C}^{p\lvert q}$
  with trivial center is isomorphic to either:
\begin{enumerate}
\item $\M{p}{q}$, the graded algebra of matrices acting on the graded
  space $\mathbb{C}^{p\lvert q}$. As complex vector space, it has
  dimension $(p+q)^2$
\item $\mathit{C}\ell_1(p\lvert q)$ the $\mathbb{Z}_2$-graded algebra
  of $(p+q)\times(p+q)$ block-diagonal or anti-diagonal matrices 
  with entries in the one dimensional complex Clifford algebra
  $\textit{C}\ell_1(\mathbb{C})$. Specifically, an element of
  $\Cl{p}{q}$ is of the form $ M_1\gt I+M_2\gt\gamma$ where
  $M_i\in\M{p}{q}$, while $\gamma$ is a Majorana mode. As complex
  vector space, it has dimension $2(p+q)^2$.
\end{enumerate}
\end{lemma}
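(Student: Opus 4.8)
The plan is to carry out the $\mathbb{Z}_2$-graded analogue of the Wedderburn--Artin program in three stages: reduce the semisimplicity hypothesis to graded simplicity by exploiting the trivial graded center, then represent a graded-simple algebra as endomorphisms of a graded module over a graded-division algebra, and finally classify the graded-division algebras over $\mathbb{C}$.

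First I would use that a $\mathbb{Z}_2$-graded semisimple algebra over $\mathbb{C}$ splits as a direct sum $\mathsf{A}\cong\bigoplus_{i=1}^{k}\mathsf{A}_i$ of graded-simple algebras, and that the graded center is additive over this splitting, $Z_g(\mathsf{A})=\bigoplus_i Z_g(\mathsf{A}_i)$. Each $Z_g(\mathsf{A}_i)$ is a graded field: any nonzero homogeneous central $z$ generates a two-sided graded ideal $\mathsf{A}_i z=z\mathsf{A}_i$, which by graded simplicity must be all of $\mathsf{A}_i$, so $z$ is invertible. Over the algebraically closed field $\mathbb{C}$ such a graded field can only be $\mathbb{C}$ itself, since an odd element $\gamma$ of a graded-commutative graded-division algebra would satisfy $\gamma^2=-\gamma^2$, forcing $\gamma^2=0$ and contradicting invertibility. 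Hence each summand contributes a one-dimensional graded center, and triviality of $Z_g(\mathsf{A})$ forces $k=1$: the algebra $\mathsf{A}$ is graded-simple.

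Next I would invoke the graded Wedderburn theorem: a finite-dimensional graded-simple algebra over $\mathbb{C}$ is isomorphic to $\operatorname{End}_D(M)$ for a graded-division algebra $D$ and a finite-rank graded $D$-module $M$. The argument parallels the ungraded one --- take a minimal graded left ideal $M$, set $D\coloneqq\operatorname{End}_{\mathsf{A}}(M)$, which is a graded-division algebra by graded Schur's lemma, and establish the double-centralizer identity $\mathsf{A}\cong\operatorname{End}_D(M)$. I would then classify the graded-division algebras $D$ over $\mathbb{C}$: the even part $D^0$ is an ordinary finite-dimensional division algebra over $\mathbb{C}$, hence $D^0=\mathbb{C}$; if $D^1=0$ then $D=\mathbb{C}$, whereas if $0\neq\gamma\in D^1$ then $\gamma^2\in D^0=\mathbb{C}$ is nonzero by invertibility and, after rescaling, $\gamma^2=1$; finally any two nonzero odd elements differ by a factor in $D^0=\mathbb{C}$, so $\dim D^1=1$ and $D=\mathbb{C}\oplus\mathbb{C}\gamma=C\ell_1(\mathbb{C})$.

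It remains to read off the two normal forms. Since graded modules over a graded-division algebra are free, $M\cong\mathbb{C}^{p\lvert q}\otimes_{\mathbb{C}}D$ for suitable $p,q$. For $D=\mathbb{C}$ this gives $\mathsf{A}\cong\operatorname{End}(\mathbb{C}^{p\lvert q})=\M{p}{q}$ of complex dimension $(p+q)^2$; for $D=C\ell_1(\mathbb{C})$ it gives $\mathsf{A}\cong\M{p}{q}\boxtimes C\ell_1(\mathbb{C})$, whose elements are uniquely of the form $M_1\boxtimes I+M_2\boxtimes\gamma$ with $M_i\in\M{p}{q}$, i.e.~$\Cl{p}{q}$, of complex dimension $2(p+q)^2$. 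The main obstacle is the middle stage: setting up the graded Wedderburn theorem requires tracking the sign conventions in the graded commutator $\gc{\cdot}{\cdot}$ and the graded tensor product $\boxtimes$ carefully, so that graded Schur's lemma and the double-centralizer computation go through, and one must check throughout that \emph{graded} simplicity (no nontrivial graded two-sided ideals) is the correct hypothesis rather than ungraded simplicity, since a graded-simple algebra such as $\Cl{p}{q}$ need not be simple as an ordinary algebra.
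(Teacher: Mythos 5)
Your outline is the standard $\mathbb{Z}_2$-graded Wedderburn--Artin argument (reduce to graded-simple via triviality of the graded center, represent as $\operatorname{End}_D(M)$, classify graded division algebras over $\mathbb{C}$ as $\mathbb{C}$ or $C\ell_1(\mathbb{C})$), and it is correct, including the key computation that an odd graded-central element squares to zero and hence cannot exist in a graded field over $\mathbb{C}$. This is essentially the same route the paper relies on: the paper does not prove the lemma itself but imports it from the cited classification literature as ``a $\mathbb{Z}_2$-graded version of Wedderburn's theorem,'' which is precisely what you reconstruct.
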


Therefore, $\operatorname{dim}(\m {S}_{x})= m(p+q)^2$, where either
$m=1$ if $\m S_{x} \simeq \M{p}{q}$ or $m=2$ if $\m {S}_{x} \simeq
\mathit{C}\ell_1(p\lvert q)$. Since both $\operatorname{dim}(\m
{S}_{x})$ and $d^2 \coloneqq \operatorname{dim}[\A^{x}]$ do not depend
on the choice of the site $x$, then the index \eqref{eq:ind} is also
site-independent, and hence can take values
\begin{equation}
\operatorname{ind}[\tT]=\sqrt{m}\frac{p+q}{d}.
\label{eq:indform}
\end{equation}

\subsection{Local implementability and equivalence relations}

As discussed in Refs.~\cite{Werner-index,PhysRevB.99.085115}, the
index of a (Fermionic) Quantum Cellular Automaton determines its
\textit{circuital implementability}, namely whether it can be
represented by finite sequences of blocks of unitaries
(\textit{gates}). To elaborate on this property, it is convenient to
introduce the notion of \emph{finite depth Fermionic circuit}, which is the
prototype of implementable quantum circuit.

\begin{definition}[Finite Depth Fermionic Circuit (FDFC)]
	\label{def:fdqc} 
	A Finite Depth Fermionic Circuit is an automorphism $\tF$ of the quasi-local graded algebra $\A(\mathbb{Z})$, such that there exist a positive integer $D<\infty$ (the \emph{depth} of $\tF$) and, for every $t=1,2,\ldots D$, 
	a bounded size partition $\{\mu_{i,t}\}_{i\in\mathbb Z}$ of the lattice 
	$\Lambda$ and for every $i\in\mathbb Z$ and $1\leq t\leq D$ a unitary 
	$U_{i,t}\in\mathsf A_{\mu_{i,t}}$, so that 
for every local operator  $T\in\mathsf A_\Lambda$ 
\begin{equation}
\label{eq:fdqc}
\begin{aligned}
	&\tF(T) = F_\Lambda T F_\Lambda^\dag,\\
	&F_\Lambda\coloneqq\prod_{s=0}^{D-1}\bigboxtimes_{i\in J_{D-s}} U_{i,D-s},\\
	&\Lambda_0\coloneqq\Lambda, \quad J_t\coloneqq\{i\in\mathbb Z\mid\mu_{i,t}\cap\Lambda_{t-1}\neq\emptyset\},\\
	&\Lambda_t\coloneqq\bigcup_{i\in J_t}\mu_{i,t}.
\end{aligned}
\end{equation}
\end{definition}
In words, a FDFC is a sequence of layers of unitary transformations, where each 
unitary in a given layer acts over a finite set of cells in a partition of the lattice 
that generally changes in each layer.
\begin{remark}
  Clearly, any FDFC \eqref{eq:fdqc} is an example of FCA. Indeed,
  taking into account the local rule of the FDFC as in Definition
  \ref{def:FCA}, the condition $|\mu_{i,t}|<k$ for every pair $i,t$ 
  ensures that there
  exists a finite neighbourhood scheme
  $\mathcal{N}_{\tF} \subset \mathbb{Z}$ such that
\begin{align*}
  \tF(\mathsf{A}(\Lambda))\subset \mathsf{A}(\Lambda+\mathcal{N}_{\tF}), \qquad \lvert \mathcal{N}_{\tF}\rvert <\infty.
\end{align*}
Thus, $\tF$ has a bounded neighbourhood scheme.  On
the other hand, it is not required that $\mathcal{N}_{\tF}$ to be a
nearest neighbourhood scheme, i.e., generally
$\mathcal{N}_{\tF}\neq \{-1,0,1\}$.
\end{remark}

A relevant class of FDFC is that of Margolus Partition schemes, i.e.~FDFC with depth $D=2$ and nearest neighbourhood scheme.

\begin{definition}[Margolus Partitioned Scheme (MS)]
\label{def:Marg}
A Margolus Partitioned Scheme $\tM$ is an FDFC of depth $D=2$, with 
the two layers of unitaries over nearest neighbours $\{2x,2x+1\}$ and
$\{2x+1,2x\}$ respectively, namely, for $T\in\mathsf A_\Lambda$, defining 
$J_\Lambda\coloneqq\{x\mid 2x\in\Lambda\vee 2x+1\in\Lambda\}$ and 
$K_\Lambda\coloneqq\{x\mid 2x+2\in \Lambda\vee 2x+1\in \Lambda\}$
\begin{equation}
	\label{eq:Margolus}
\begin{aligned}
 \tM \coloneqq &\tM_2\circ\tM_1\\
  &\tM_1(T) \coloneqq
  \left(\bigboxtimes_{x\in J_\Lambda}M^{(1)}_{2x}\right)T\left(\bigboxtimes_{x'\in J_\Lambda}M^{(1)}_{2x'}\right),\\
&\tM_2(T) 
  \coloneqq \left(\bigboxtimes_{x\in K_\Lambda} M^{(2)}_{2x+1}\right)T\left(\bigboxtimes_{x'\in K_\Lambda} M^{(2)}_{2x'+1}\right),
\end{aligned}
\end{equation}
with the unitaries $M^{(i)}_y$ acting over $\{y,y+1\}$.
\end{definition}

Based on the definitions of FDFC and MS and inspired by
Ref.~\cite{Werner-index}, we can formulate a general definition of
implementability for FCA in terms of FDFC and MS, respectively.

\begin{definition}[\Fimpl]
	\label{def:implem}
	A FCA $\tT: \Aqloc \rightarrow \Aqloc$ is said to be
        $\tF$-implementable if there exists a FDFC
        \eqref{eq:fdqc} such that $
		\tT=\tF$
\end{definition}

In the special case of FDFC replaced by MS, this definition reduces to
the \textit{local implementability} already defined in
\cite{Werner-index} for ungraded QCA. In this paper, we can
reformulate that notion for FCA as follows.

\begin{definition}[\Mimpl]
	\label{def:implemloc}
		A FCA $\tT: \Aqloc \rightarrow \Aqloc$ is said to be
        $\tM$-implementable if there exists a Margolus Partition Scheme 
        \eqref{eq:Margolus} such that $\tT=\tM$.
\end{definition}

Based on the above notions we can introduce \textit{equivalence}
classes of Fermionic Cellular Automata. The original
definition of equivalence was provided in \cite{Werner-index} for the
ungraded case, in order to classify one-dimensional QCA. Naively
speaking, two different Quantum Cellular Automata are said to be
equivalent whenever they can be deformed into each other through a 
Finite Depth Quantum Circuit (FDQC).
Hence, it was proved that all the ungraded QCA
with the same index are equivalent in this sense. Here we
introduce an analogous definition in the Fermionic scenario, where one considers 
FDFC~\eqref{def:fdqc}, instead of FDQC. We denote this criterion by
$\Fequiv$.
%
%
%
%
\begin{definition}[\Fequiv]
\label{def:equiv}
Two Fermionic Cellular Automata $\tT$ and $\tS$ are
\textit{$\tF$-equivalent} if there exists a FDFC $\tF$ such that
	\begin{equation}
	\label{eq:ancillarem}
		\tT = \tF \circ \tS.
	\end{equation}
\end{definition}

Finally, the weaker notion  of \textit{stable equivalence} of Ref.~\cite{PhysRevB.99.085115} can be introduced, which is similar to 
$\Fequiv$, but it also allows for the addition of auxiliary cells.
\begin{definition}[Stable Equivalence]
	\label{def:stableeq}
	Two Fermionic Cellular Automata $\tT$ and $\tS$ are
        \textit{stably equivalent} if:
	\begin{equation}
	\label{eq:stabeq}
		\tT\gt \tI = \tilde{\tF} \circ(\tS\gt \tI),
	\end{equation}
        where $\tI$ is the identity automorphism over a suitable set
        of ancillary systems, and $\tilde{\tF}$ is a FDFC. 
\end{definition}
In Ref.~\cite{PhysRevB.99.085115} the following Lemma was proven, generalising the analogous result of Ref.~\cite{Werner-index} to the graded case.
\begin{lemma}\label{lem:Mequiv}
Let $\tT$ and $\tS$ be two stably equivalent FCA according to Definition \ref{def:stableeq}. Upon enlarging each original cell with an ancillary 
copy, and suitably regrouping the enlarged cells into supercells to match the 
dimension of the cells of $\tT$ and $\tS$, the two FCAs are stably 
$\tM$-equivalent through a MS $\tM$, i.e.,
\begin{align}\label{eq:Mequiv}
\tT\gt\tI=\tM_2\circ\tM_1\circ(\tS\gt\tI),
\end{align}
with $\tM_1\circ\tM_2\coloneqq \tM $ an MS over such supercells
according to Definition \ref{def:Marg}.
\begin{figure}[h!]
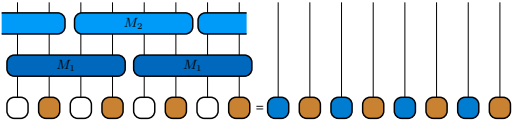

\resizebox{0.8\hsize}{!}{
\tikzfig{Ancilla}=\tikzfig{figures/Ancillamod}}
\caption{Graphical representation of \Mimpl of index one FCA. The
  white cells represent the cells of our register, the orange ones
  represent the ancillae, and the blue ones represent the cells
  evolved through $\tT$ with $\ind{\tT}=1$. }
\label{fig:Ancillaaction}
\end{figure}
\end{lemma}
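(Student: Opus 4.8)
The plan is to reduce the claim to a structural statement about the connecting circuit, and then to build the two Margolus layers out of support algebras. Since $\tS\gt\tI$ is an automorphism, hence invertible, the stable equivalence $\tT\gt\tI=\tilde\tF\circ(\tS\gt\tI)$ of Definition~\ref{def:stableeq} is equivalent to the bare identity $\tilde\tF=\tM_2\circ\tM_1$. It therefore suffices to prove that the FDFC $\tilde\tF$, acting on the lattice whose cells have each been enlarged by an ancillary copy, coincides with a Margolus scheme once the enlarged cells are regrouped into supercells. The ancillary doubling enters precisely at this point: it supplies the freedom to choose supercells whose dimensions match on the $\tT$ and $\tS$ sides, so that the gates of the prospective scheme are well defined.

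First I would coarse-grain the range down to nearest neighbours. By the Remark following Definition~\ref{def:fdqc}, $\tilde\tF$ has a finite neighbourhood scheme $\mathcal N_{\tilde\tF}$, and being a finite-depth circuit of local unitary conjugations it has unit index, $\ind{\tilde\tF}=1$. Grouping the enlarged cells into blocks of size at least $\max|\mathcal N_{\tilde\tF}|$ turns $\tilde\tF$ into a homogeneous FCA with neighbourhood $\mathcal N=\{-1,0,1\}$, while the site-independent index stays equal to $1$.

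Next I would read off the two Margolus layers from the support-algebra data of this nearest-neighbour circuit. Since $\ind{\tilde\tF}=1$, the index formula~\eqref{eq:ind} forces $\dim{\mathsf{L}_{2x}}=\dim{\A_{2x}}$ and $\dim{\mathsf{R}_{2x+1}}=\dim{\A_{2x+1}}$; comparing with~\eqref{eq:indform}, the Clifford alternative $m=2$ of Lemma~\ref{lem:alg} is ruled out by integrality of a unit index, so each support algebra is a full graded matrix algebra $\M{p}{q}$ isomorphic to a single supercell. Together with the identity $\mathsf{L}_{2x}\gt\mathsf{R}_{2x+1}=\tilde\tF(\A_{2x}\gt\A_{2x+1})$ of~\eqref{eq:suppaut}, this lets me define $\tM_1$ as the layer of even unitaries over the pairs $\{2x,2x+1\}$ implementing an isomorphism of $\A_{2x}\gt\A_{2x+1}$ onto $\mathsf{L}_{2x}\gt\mathsf{R}_{2x+1}$, and $\tM_2$ as the layer over the shifted pairs $\{2x+1,2x+2\}$ that transports these images onto the cells prescribed by $\tilde\tF$; their composition reproduces $\tilde\tF$, in parallel with the qudit argument of Ref.~\cite{Werner-index}.

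The main obstacle is graded consistency. I expect the delicate step to be checking that every required algebra isomorphism is genuinely implemented by conjugation with a parity-even unitary supported on a single nearest-neighbour pair, and that the two layers are mutually compatible: the images produced by $\tM_1$ on pairs that share a cell must graded-commute for the gates of $\tM_2$ to be simultaneously definable. That graded commutation is exactly what Lemma~\ref{thm:suppcomm} supplies, whereas the dimension matching secured by the ancillary doubling guarantees that the isomorphisms exist at all. Compared with the ungraded case, the real labour is the bookkeeping of the signs introduced by the graded tensor product $\gt$ in each conjugation, which is what separates the Fermionic construction from that of Ref.~\cite{Werner-index}.
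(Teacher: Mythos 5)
Your overall skeleton (pass to the unit-index FDFC $\tilde\tF=(\tT\gt\tI)\circ(\tS\gt\tI)^{-1}$, block to nearest neighbours, build the two layers from the left/right support algebras as in Theorem~9 of Ref.~\cite{Werner-index}, and use Lemma~\ref{thm:suppcomm} for compatibility of the second layer) is the same as in the proof the paper relies on. But there is a genuine gap at the decisive step. From $\ind{\tilde\tF}=1$ you correctly get $\dim{\mathsf L_{2x}}=\dim{\A_{2x}}$ and you correctly exclude the Clifford alternative, but you then conclude that ``each support algebra is a full graded matrix algebra $\M{p}{q}$ \emph{isomorphic} to a single supercell.'' In the graded setting this does not follow: equality of dimensions only fixes $p+q$, and $\M{p}{q}\not\simeq\M{p'}{q'}$ whenever $(p,q)\neq(p',q')$ even if $p+q=p'+q'$ (e.g.\ $\M{3}{1}$ versus $\M{2}{2}$, both of dimension $16$). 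Without that isomorphism there is no even unitary on a nearest-neighbour pair implementing the rearrangement, so your construction of $\tM_1$ and $\tM_2$ cannot get off the ground. This is precisely the obstruction that distinguishes the Fermionic statement from the qudit one.

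Relatedly, you misattribute the role of the ancillary doubling: you say it ``secures dimension matching,'' but the dimensions already match for a unit-index automaton without any ancilla. What the doubling actually buys is the collapse of the graded isomorphism type: $\M{p}{q}\gt\M{n}{n}\simeq\M{n(p+q)}{n(p+q)}$ depends only on $p+q$, so after tensoring each support algebra with an ancillary cell all of them become isomorphic to the (balanced) supercell algebra, and only then does the Werner-type two-layer construction apply. This is exactly the mechanism spelled out in the paper in the proof of the Proposition of Section~\ref{sec:stabequiv} (the lemma itself is imported from Ref.~\cite{PhysRevB.99.085115}). To repair your argument you should insert this step explicitly: identify the support algebras of $\tT\gt\tI$ on the enlarged cells as $\mathsf L_{2x}\gt\M{n}{n}$, invoke the above isomorphism to match them to the supercell, and only then define the two Margolus layers.
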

In the remainder, when referring to stable equivalence, we will use the wording
\emph{cell} to denote the enlarged cell made of the original cell and its ancillary copy.

\subsection{Properties of the Fermionic index}

Stable equivalence was exploited in \cite{PhysRevB.99.085115} to
generalize to FCA the properties of the index derived in
\cite{Werner-index} for QCA. The result of
Ref.~\cite{PhysRevB.99.085115} is reviewed in the following Lemma:

\begin{lemma}[Index properties for FCA]\label{lem:prop}
The following properties hold:\\
\begin{itemize}
\item If $\tT$ is a FDFC, then $\ind{\tT}=1$.
\item Given two distinct FCA $\tT$ and $\tT'$,
\begin{equation}\label{eq:indmult}
\begin{split}
&\ind{\tT\tT'}=\ind{\tT}\ind{\tT'}\\
&\ind{\tT\gt \tT'}=\ind{\tT}\ind{\tT'}.
\end{split}
\end{equation}
\item If $\ind{\tT}=\ind{\tT'}$, then $\tT$ and $\tT'$ are
  \emph{stably equivalent} according to Definition
  \ref{def:stableeq}.
\end{itemize}
\end{lemma}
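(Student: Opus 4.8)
The plan is to prove the three items in the order multiplicativity, circuit triviality, equivalence, as each relies on the previous. \emph{Multiplicativity.} For the graded-tensor formula $\ind{\tT\gt\tT'}=\ind{\tT}\ind{\tT'}$ I would observe that the left support algebra factorizes, $\mathsf L^{\tT\gt\tT'}_{2x}=\mathsf L^{\tT}_{2x}\gt\mathsf L^{\tT'}_{2x}$, since the graded tensor product commutes with the support-algebra construction and the two factors act on independent copies of the chain; hence $\dim{\mathsf L^{\tT\gt\tT'}_{2x}}=\dim{\mathsf L^{\tT}_{2x}}\,\dim{\mathsf L^{\tT'}_{2x}}$, and likewise for the cell dimensions, so the square-root ratio defining the index in \eqref{eq:ind} multiplies. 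For the composition $\ind{\tT\tT'}=\ind{\tT}\ind{\tT'}$ I would first block the lattice into supercells large enough that both $\tT$ and $\tT'$ become nearest-neighbour; this is harmless because \eqref{eq:indform} shows the index depends only on the intrinsic data $(m,p,q,d)$ and is blocking-invariant. On such supercells I would compute the support algebra of $\tT\tT'(\A_{2x}\gt\A_{2x+1})$ directly, applying \eqref{eq:suppaut} to each automaton and using the crossing Lemma~\ref{thm:suppcomm} to control which portions of the two support algebras graded-commute, thereby recovering the product of the dimensions.

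\emph{Circuit triviality.} To show $\ind{\tF}=1$ for an arbitrary FDFC $\tF$, I would use multiplicativity to reduce to a single layer, i.e.\ a graded tensor product of local unitaries $U_i$ supported on the disjoint blocks $\mu_i$ of one partition. Blocking so that the cut $2x\,|\,2x+1$ appearing in \eqref{eq:left-right} falls on a boundary between two blocks, conjugation by the layer maps $\A_{2x}\gt\A_{2x+1}$ into itself, so the left and right support algebras are isomorphic to $\A_{2x}$ and $\A_{2x+1}$ and $\dim{\mathsf L_{2x}}=\dim{\A_{2x}}$, giving index one. Composing the $D$ layers and invoking multiplicativity yields $\ind{\tF}=1$.

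\emph{The equivalence.} Assume $\ind{\tT}=\ind{\tT'}$. From $\ind{\tI}=1$ and multiplicativity one obtains $\ind{\tT^{-1}}=\ind{\tT}^{-1}$, hence $\ind{\tT\tT'^{-1}}=1$. It thus suffices to prove the structural claim that any FCA $\tR$ with $\ind{\tR}=1$ is stably equivalent to the identity. For this I would use $\mathsf L_{2x}\gt\mathsf R_{2x+1}=\tR(\A_{2x}\gt\A_{2x+1})$ from \eqref{eq:suppaut}: the condition $\ind{\tR}=1$ forces, through the dimension count in \eqref{eq:ind}, that $\mathsf L_{2x}$ and $\mathsf R_{2x+1}$ be full matrix algebras of the same dimensions as $\A_{2x}$ and $\A_{2x+1}$. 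Using Lemma~\ref{thm:suppcomm} to guarantee that support algebras attached to disjoint cuts graded-commute, I would build a Margolus scheme whose first layer maps each cell onto its left and right supports and whose second layer undoes the residual, realizing $\tR$ after appending a non-evolving ancillary copy of every cell; this is precisely Lemma~\ref{lem:Mequiv}. Applied to $\tR=\tT\tT'^{-1}$, the resulting FDFC $\tilde{\tF}=\tR\gt\tI$ satisfies $\tT\gt\tI=\tilde{\tF}\circ(\tT'\gt\tI)$, which is the stable equivalence of $\tT$ and $\tT'$.

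\emph{Main obstacle.} I expect the delicate step to be the last structural one, turning the abstract equality $\ind{\tR}=1$ into an explicit Margolus circuit. The Fermionic grading admits the Clifford possibility $\Cl{p}{q}$ of Lemma~\ref{lem:alg} (the $m=2$ case) which has no qudit analogue, and it is exactly this case that forces the ancilla doubling and requires careful tracking of the parities across the two Margolus layers.
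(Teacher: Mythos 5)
First, a point of comparison: the paper does not prove this Lemma at all --- it is presented explicitly as a review of results established in Refs.~\cite{Werner-index,PhysRevB.99.085115}, so there is no in-paper argument to measure yours against. Judged against the proofs in that literature, your overall architecture (tensor multiplicativity, circuit triviality by reduction to a single layer, and the equivalence via $\ind{\tT}=\ind{\tT'}\Rightarrow\ind{\tT\circ\tT'^{-1}}=1$ followed by a Margolus construction with one ancilla per cell) is the standard one, and the second and third items are sound sketches of it. One small correction to your closing remark: for an index-one automaton the Clifford case of Lemma~\ref{lem:alg} is excluded outright, since $2(p+q)^2=(2n)^2$ has no positive integer solutions; what actually forces the ancilla doubling is the possibility of \emph{unbalanced} full matrix algebras $\M{p}{q}$ with $p+q=2n$ but $p\neq q$, which have the right dimension yet are not graded-isomorphic to the cell algebra $\M{n}{n}$ --- this is precisely the paper's Eq.~\eqref{eq:dimalgebra}.

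The one genuine gap is the composition law $\ind{\tT\tT'}=\ind{\tT}\ind{\tT'}$. Support algebras of a composition do not factor into support algebras of the factors, and ``applying \eqref{eq:suppaut} to each automaton and using Lemma~\ref{thm:suppcomm}'' does not obviously yield a product of dimensions: after $\tT'$ acts, the algebra $\tT'(\A_{2x}\gt\A_{2x+1})$ is already delocalised across several cuts, and the supports of its image under $\tT$ mix contributions from both automata in a way the crossing lemma alone does not disentangle. The proofs in \cite{Werner-index,PhysRevB.99.085115} circumvent this: they first establish invariance of the index under composition with finite-depth circuits and under stabilisation, and then connect $(\tT\tT')\gt\tI$ to $\tT\gt\tT'$ by an explicit swap-based circuit, so that composition multiplicativity reduces to the (easy) tensor multiplicativity. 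If you fall back on that route, beware of a latent circularity in your ordering: you invoke composition multiplicativity to show that a multi-layer FDFC has index one, whereas the standard route derives composition multiplicativity \emph{from} circuit invariance. You would then need to prove circuit triviality for an arbitrary finite-depth circuit directly --- e.g.\ by blocking it into a two-layer Margolus form and computing the support algebras across a cut aligned with the block boundaries --- before using it.
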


As shown in \cite{PhysRevB.99.085115}, combining stable
equivalence~\Eq\eqref{eq:stabeq} and the last item of
Lemma~\ref{lem:prop} it follows that every FCA with index one
is $\mathcal M$-implementable in the enlarged space of the lattice tensorized with
ancillae, i.e.
\begin{lemma}
If \ind{\tT}=1 then there exists a MS $\tM$ such that $\tT\gt\tI=\tM$.
\end{lemma}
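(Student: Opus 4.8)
The plan is to reduce the claim to a stable equivalence with the identity automaton and then invoke Lemma~\ref{lem:Mequiv}. First I would observe that the identity automorphism $\tI$ of $\Aqloc$ is trivially an FDFC (take every layer unitary $U_{i,t}$ to be the identity), so by the first item of Lemma~\ref{lem:prop} one has $\ind{\tI}=1$. Since by hypothesis $\ind{\tT}=1=\ind{\tI}$, the third item of Lemma~\ref{lem:prop} applies and yields that $\tT$ and $\tI$ are \emph{stably equivalent} in the sense of Definition~\ref{def:stableeq}, i.e.\ relation~\eqref{eq:stabeq} holds with $\tS=\tI$.

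Next I would specialise Lemma~\ref{lem:Mequiv} to this case $\tS=\tI$. That Lemma upgrades a stable equivalence to a stable $\tM$-equivalence once each cell is enlarged by an ancillary copy and the enlarged cells are regrouped into supercells of matching dimension; concretely it gives
\begin{equation*}
\tT\gt\tI=\tM_2\circ\tM_1\circ(\tI\gt\tI),
\end{equation*}
where the composite $\tM_2\circ\tM_1$ is a Margolus Partitioned Scheme over the supercells in the sense of Definition~\ref{def:Marg}. Since $\tI\gt\tI$ is again the identity automorphism over the ancilla-enlarged lattice, composing with it is immaterial, and the right-hand side collapses to $\tM_2\circ\tM_1\eqqcolon\tM$. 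This establishes $\tT\gt\tI=\tM$ with $\tM$ an MS, as required.

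The only point demanding care is the bookkeeping of the ancillary degrees of freedom: the ancilla $\tI$ appearing in the target $\tT\gt\tI$ of the statement must be identified with the ancillary copies introduced in the cell-enlargement step of Lemma~\ref{lem:Mequiv}, and the supercell regrouping must be chosen so that the cell dimensions on the $\tT$- and $\tI$-sides agree (exactly as demanded in the hypotheses of that Lemma). Once this matching is fixed, no further computation is needed—the result is a direct composition of the two cited Lemmas, precisely the route signalled by combining stable equivalence~\eqref{eq:stabeq} with the last item of Lemma~\ref{lem:prop}.
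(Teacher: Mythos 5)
Your proposal is correct and follows essentially the same route as the paper: deduce $\ind{\tI}=1$, invoke the last item of Lemma~\ref{lem:prop} to get stable equivalence of $\tT$ with the identity, and then apply Lemma~\ref{lem:Mequiv} so that $\tT\gt\tI=\tM_2\circ\tM_1\circ(\tI\gt\tI)=\tM$. Your remark on identifying the ancillae of the statement with the ancillary copies of Lemma~\ref{lem:Mequiv} is a sensible (if implicit in the paper) bookkeeping point.
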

\begin{proof}
  It is easy to see that the identity has index one. Thus, the last
  item of the previous lemma tells us that any index one FCA is
  stably equivalent to the identity. Considering the condition of stable 
  $\mathcal M$-equivalence in
  \Eq\eqref{eq:Mequiv} we have:
\begin{align}
\tT\gt\tI=\tM_2\circ\tM_1\circ(\tI\gt\tI)=\tM.
\end{align}
\end{proof}

\section{Stable Equivalence and Ancilla Removal}
\label{sec:stabequiv}
When we consider \Eqs\eqref{eq:stabeq},\eqref{eq:Mequiv}, the whole
circuit $\tilde{\tF}$ acts trivially over the ancillary systems, but
the single gates ($M_1$ at first and $M_2$ later, in Figure
\ref{fig:Ancillaaction}) composing the circuit $\tilde{\tF}$ do not.
It would be desirable to replace the property of stable
equivalence~\Eq\eqref{eq:stabeq} with our notion of
$\mathcal{F}$-equivalence~\Eq\eqref{eq:ancillarem}, which does not
involve ancillae, and thus refer to a more intrinsic notion of local
implementability of the evolution.

In Ref.~\cite{Freedman-Haah-Hastings-group-structure}, the authors
introduced a method called \textit{ancilla removal}, which consists of
exploiting physical cells to play the role of the ancillary systems,
thus avoiding the addition of auxiliary systems. Our goal is to take
advantage of ancilla removal to show that two index one FCA are
always connected through an FDFC without referring to further
ancillary systems, i.e., they are equivalent according to Definition
\ref{def:equiv}. We observe that the local implementation of
a FCA based on a FDFC on the physical cells only can be more involved
than the simple MS needed with ancillae. 
\begin{figure}
\begin{tikzpicture}
\node at (-4.5,0) {\tikzfig{Gateancilla}};
\node at (-2.5,0) {=};
\node at (-0.75,0) {\tikzfig{Gateactionancilla}};
\end{tikzpicture}
\caption{Graphical representation of the action of the gates $M_1$ in
  \Eq\eqref{eq:Mequiv}. Here the purple and the green cells represent
  evolved sites and ancillary states respectively.}
\end{figure}
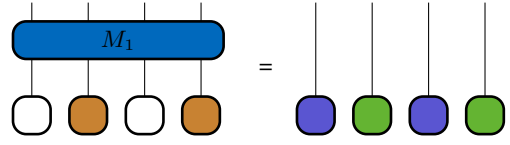

The possibility of performing ancilla removal with a bounded number of
ancillae for every physical cell was proved in
\cite{Freedman-Haah-Hastings-group-structure}. Before removing
ancillae, in the following proposition, we prove that only a single
ancilla per cell is needed for an index one FCA to achieve the stable
equivalence. The proof follows the same approach used in the appendix
of \cite{PhysRevB.99.085115}, which in turn refines the argument of
~\cite{Werner-index} for the ungraded case.
\begin{proposition}
  Two index one FCA $\tT$ and $\tS$ are related through the following
  stable equivalence relation:
\begin{equation}\label{eq:stable}
	\tilde{\tF} \circ(\tT\gt \tI)=\tS\gt \tI,
\end{equation}
The equivalence can be obtained through a MS.
\end{proposition}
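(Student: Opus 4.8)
The plan is to reduce the statement to the Lemma just proven, namely that any index-one FCA becomes a Margolus Scheme once a single ancillary copy per cell is appended. Concretely, I would introduce the composite automorphism $\tU\coloneqq\tS\circ\tT^{-1}$ and show that the connecting circuit can simply be taken to be the MS implementing $\tU\gt\tI$.

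First I would check that $\tU$ is again an index-one FCA. The inverse $\tT^{-1}$ of an FCA is an FCA (it is an automorphism of $\A(\mathbb Z)$ commuting with the shifts and, by reversibility, with a finite neighbourhood scheme), and the composition of two FCA is an FCA, so $\tU$ is a bona fide FCA. Its index is fixed by multiplicativity: from Lemma \ref{lem:prop} one has $\ind{\tT\,\tT^{-1}}=\ind{\tT}\ind{\tT^{-1}}$, and since $\tT\,\tT^{-1}=\tI$ is a FDFC we get $\ind{\tI}=1$, whence $\ind{\tT^{-1}}=\ind{\tT}^{-1}=1$ and therefore $\ind{\tU}=\ind{\tS}\ind{\tT^{-1}}=1$.

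Next I would apply the preceding Lemma to $\tU$: since $\ind{\tU}=1$, after appending one ancillary copy per cell there is a MS $\tM$ (over suitable supercells, as in Lemma \ref{lem:Mequiv}) with $\tU\gt\tI=\tM$. It then only remains to unfold the tensor structure. Using that $\tI$ is the identity on the ancillae, one has $(\mathcal A\gt\tI)\circ(\mathcal B\gt\tI)=(\mathcal A\circ\mathcal B)\gt\tI$ for any automorphisms $\mathcal A,\mathcal B$ of $\A(\mathbb Z)$ (no grading sign appears because the second factor is the identity), and in particular $(\tT\gt\tI)^{-1}=\tT^{-1}\gt\tI$. Setting $\tilde{\tF}\coloneqq\tM$ I then obtain
\begin{equation*}
\tilde{\tF}\circ(\tT\gt\tI)=\big((\tS\circ\tT^{-1})\gt\tI\big)\circ(\tT\gt\tI)=(\tS\circ\tT^{-1}\circ\tT)\gt\tI=\tS\gt\tI,
\end{equation*}
which is exactly \eqref{eq:stable} with the connecting map a single MS and a single ancilla per cell.

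The genuinely substantive step — and the one I expect to be the main obstacle — is not this algebraic bookkeeping but the input Lemma itself, i.e.~that a single ancilla per cell already suffices to turn an index-one FCA into a MS. Proving this from scratch, following the appendix of Ref.~\cite{PhysRevB.99.085115}, requires the support-algebra analysis: for an index-one automaton the left and right support algebras $\mathsf L_{2x}$ and $\mathsf R_{2x+1}$ have the same dimension as the cells they emulate and, by Lemma \ref{thm:suppcomm}, graded-commute with the neighbouring supports, so that the enlarged cell factorises into commuting blocks of the right size for a depth-two brickwork. The delicate points there are the dimension count forcing exactly one ancilla (rather than the a priori merely bounded number allowed by the ancilla-removal theorem of Ref.~\cite{Freedman-Haah-Hastings-group-structure}) and the correct handling of the $\mathbb Z_2$-grading signs when the Margolus blocks are assembled from the support algebras, which is precisely where the Fermionic case departs from Werner's ungraded argument.
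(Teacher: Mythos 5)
Your reduction via $\tU\coloneqq\tS\circ\tT^{-1}$ is sound bookkeeping (and is essentially how the paper later derives Corollary~\ref{cor:equivindex}), but it relocates the entire content of the Proposition into an ``input Lemma'' that the paper does not have in the form you need. The lemma preceding this Proposition only asserts $\tT\gt\tI=\tM$ for \emph{some} set of ancillary systems, inherited from the general stable-equivalence statement; the claim that a \emph{single} ancilla per cell suffices is precisely what this Proposition is introduced to establish, and the Corollary asserting single-ancilla $\tM$-implementability is deduced \emph{from} it, not the other way around. So as written your argument is circular relative to the paper's logical order: you assume the one-ancilla refinement in order to prove it.

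Where you do acknowledge that the substantive step is the one-ancilla sufficiency, your sketch misses the actual mechanism. The paper's proof runs Werner's argument, which requires the support algebras $\m L_{2x}$, $\m R_{2x+1}$ of the two automata to be \emph{isomorphic}, not merely of equal dimension. For $\ind{\tT}=1$ the dimension count rules out the Clifford option $\Cl{p}{q}$ and forces $\m L_{2x}\simeq\M{p}{q}$ with $p+q=2n$; but in the graded setting $\M{p}{q}\not\simeq\M{p'}{q'}$ in general even when $p+q=p'+q'$, so two index-one FCAs can have non-isomorphic supports and Werner's construction stalls. The resolution --- and the reason exactly one ancilla per cell is enough --- is that tensoring with the balanced ancillary cell gives $\M{p}{q}\gt\M{n}{n}\simeq\M{n(p+q)}{n(p+q)}\simeq\M{2n^2}{2n^2}$ independently of the split $(p,q)$, restoring isomorphism of all supports. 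Your proposal instead attributes the difficulty to ``the dimension count'' and to grading signs in assembling the brickwork, which is not where the obstruction lies; without the isomorphism-restoration argument the Proposition is not proved.
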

\begin{proof}
  In the proof of Theorem 9 given in \cite{Werner-index} it was proven
  that two QCA with isomorphic support algebras are related through an
  MS. Thus the proof reduces to show the isomorphism of the support
  algebras $\m L_{2x}$ of $\tT$ and $\tS$ (and the same for $\m R_{2x+1}$). 
  In the non-graded case, this is
  trivially true for every index 1 QCA. Indeed :
\begin{align*}
\ind{\tT}=\sqrt{\frac{\dim{\m {L}_{2x}}}{\dim{\mathsf{A}_{2x}}}}=\sqrt{\frac{\dim{\m A_{2x+1}}}{\m {R}_{2x+1}}}=1,\\
\dim{\m {L}_{2x}}=\dim{\m R_{2x+1}}=\dim{\mathsf{A}_{2x}}=\dim{\m A_{2x+1}}.
\end{align*}
A key consideration in the proof is that full matrix algebras 
with the same dimension are isomorphic. This allows one to conclude that all 
support algebras are isomorphic to $\m A_x$, and thus all index one QCA are 
implementable through an MS: in a sketchy recap, the first layer arranges the 
counter-image of $\m L_{2x}$ in the cell $2x$ and that of $\m R_{2x+1}$ in the cell $2x+1$, and the second layer distributes $\m L_{2x}$ and $\m R_{2x-1}$ in the two 
cells $2x-1$ and $2x$. Since two index 1 QCA are implementable through a MS, 
$\tS\tT^{-1}$ is $\tF$-implementable, too, and thus $\tS=\tF\tT$, 
without the necessity of ancillary systems.
On the other hand, in the graded case the single cell algebra generated by $n$ 
Fermionic modes is a
full-matrix graded algebra $\M{n}{n}$. The support algebras $\m L_{2x}$ and 
$\m R_{2x+1}$ can be either Clifford algebras $\Cl{p}{q}$ or full matrix graded 
algebra $\M{p}{q}$. However, the Clifford case is ruled out by imposing that
the dimension of the support algebra is equal to the dimension of the
full matrix. Indeed, we have:
\begin{align*}
\dim{\M{p}{q}}=(p+q)^2=\dim{\M{n}{n}}=(2n)^2.
\end{align*}
Moreover---contrarily to the non-graded case---two graded algebras with equal 
dimension are not necessarily isomorphic
to each other.  In particular there may exist two index 1 QCA $\tT$
and $\tS$ having supports isomorphic to $\M{p}{q}$ and $\M{p'}{q'}$
such that:
\begin{equation}\label{eq:dimalgebra}
\begin{split}
&(p+q)^2=(p'+q')^2=(2n)^2,\\
&p\neq p', \hspace{20pt} q\neq q'.
\end{split}
\end{equation}
The introduction of ancillary systems overcomes this peculiarity of 
graded algebras. Indeed, by paring with an ancillary cell $\M{n}{n}$
we have:
\begin{align*}
&\M{p}{q}\gt\M{n}{n}\simeq \M{n(p+q)}{n(p+q)},\\
&\M{p'}{q'}\gt\M{n}{n}\simeq \M{n(p'+q')}{n(p'+q')}.
\end{align*}
Hence they are both isomorphic to $\M{2n^2}{2n^2}$ due to
\Eq\eqref{eq:dimalgebra}. Therefore only one ancillary system for each 
cell, isomorphic to the cell itself, is sufficient to enable the same 
argument as for QCAs. 
\end{proof}

\begin{remark}
\label{rem:n=1}
In the special case $n=1$, the only non trivial subalgebra of
$\M{1}{1}$ is $\M{1}{1}$ itself. Thus, every support algebra is
immediately isomorphic to each other without appending an auxiliary
ancillary system $\M{1}{1}$. Therefore, in the case of a single local
Fermionic mode per site, we may expect that index one FCA over
$\M{1}{1}$ should satisfy the $\mathcal{M}$-implementability condition
without the addition of ancilla degrees of freedom. This last
statement shall be precisely provided in Section \ref{sec:Class}
through directly classifying index one FCA over $\M{1}{1}$.
\end{remark}
Thanks to this statement it holds that:
\begin{corollary}
  Every FCA $\tT$ with $\ind{\tT}=1$ is $\tM$-implementable over the
  enlarged system with ancillae, as in \Eq\eqref{eq:Margolus}, i.e.
\begin{align}\label{eq:Manc}
\tT\gt\tI=\tM,
\end{align}
for some MS $\tM$,  with only one ancilla per site.
\end{corollary}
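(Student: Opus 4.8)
The plan is to obtain the corollary as an immediate specialization of the preceding Proposition, taking one of its two index one automata to be the identity. First I would record that the identity automorphism $\tI$ of $\Aqloc$ is trivially a FDFC, so by the first item of Lemma~\ref{lem:prop} it has $\ind{\tI}=1$; this is exactly the observation already used in the proof of the Lemma asserting $\tT\gt\tI=\tM$ just above Section~\ref{sec:stabequiv}. Thus both the given $\tT$ and the identity qualify as index one FCA to which the Proposition applies.

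Next I would invoke the Proposition with its \emph{first} automaton taken to be the identity $\tI$ and its \emph{second} automaton taken to be our given $\tT$. With these choices \Eq\eqref{eq:stable} becomes $\tilde{\tF}\circ(\tI\gt\tI)=\tT\gt\tI$, where $\tilde{\tF}$ is a Margolus Scheme acting with a single ancillary cell per physical site. Since $\tI\gt\tI$ is the identity automorphism on the enlarged lattice, the left-hand side collapses to $\tilde{\tF}$ itself, yielding $\tT\gt\tI=\tilde{\tF}$. Setting $\tM\coloneqq\tilde{\tF}$ gives precisely \Eq\eqref{eq:Manc}, with the Margolus structure \eqref{eq:Margolus} and with exactly one ancilla per site. (I deliberately place the identity in the first slot rather than the second, so as to read off $\tM$ directly and avoid having to argue separately that the inverse of a MS is again a MS.)

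The one point requiring care — and the reason the corollary genuinely refines the earlier $\tM$-implementability lemma rather than merely restating it — is the ancilla count, which I would flag as the substantive content inherited from the Proposition. The earlier lemma passed through Lemma~\ref{lem:Mequiv}, which regroups the enlarged cells into supercells and hence yields a MS only on a coarser lattice; the Proposition removes this regrouping by showing that, for index one, pairing each cell with a single copy $\M{n}{n}$ already makes the left and right support algebras $\M{p}{q}\gt\M{n}{n}$ and $\M{p'}{q'}\gt\M{n}{n}$ both isomorphic to $\M{2n^2}{2n^2}$. I therefore expect the only detail to verify explicitly is that the hypothesis of the Proposition is met in this instance, namely that the support algebra of $\tI\gt\tI$ is $\M{n}{n}\gt\M{n}{n}\simeq\M{2n^2}{2n^2}$, matching that of $\tT\gt\tI$; once this is noted, the single-ancilla bound in \Eq\eqref{eq:Manc} follows verbatim from the Proposition with no further estimate.
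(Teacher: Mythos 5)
Your proposal is correct and follows essentially the same route as the paper: both specialize the preceding Proposition (stable equivalence of two index one FCAs via a MS with a single ancilla per cell) to the pair consisting of the given $\tT$ and the identity, and then observe that the identity factor collapses so that $\tT\gt\tI$ equals the MS itself. Your extra care about which slot the identity occupies and about the single-ancilla count is consistent with, and slightly more explicit than, the paper's two-line argument.
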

\begin{proof}
  From the previous Proposition we can deduce that each $\ind{\tT}=1$
  automata $\tT$ is stably-equivalent to the identity with only one
  ancilla per site. The proof then trivially follows considering the
  form of the stable-equivalence in \Eq\eqref{eq:Mequiv}.
\end{proof}
Based on this, we can show how to get rid of the notion of stable
equivalence for index one FCA. To this aim, we can exploit
\Eq\eqref{eq:stabeq} with a single ancilla per cell to implement a new
circuit $\tF'$ explicitly which uses cells of the system to play the
role of the ancillae. Firstly, we apply the gates of the circuit $\tF$
over a portion $R$ of the lattice. In the meantime, we choose a set of
cells not involved in the transformation of $R$ to play the role of
the ancillae. Since $\tF$ acts trivially on the ancillae, once $R$ is
fully updated the cells employed in the previous step are now in the
same state as before the updating of $R$. This procedure is iterated
until the whole lattice is updated. This result is stated in the
following proposition.

\begin{proposition}
	\label{prop:anc}
	An index one FCA is $\tF$-implementable according to Definition \ref{def:implem}.
\end{proposition}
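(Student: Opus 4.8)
The plan is to take as starting point the strengthened stable equivalence of the preceding Corollary, namely $\tT\gt\tI=\tM$ with a \emph{single} ancillary cell per site, and to eliminate the ancillae by recruiting suitable physical cells to play their role, along the ancilla-removal strategy of Ref.~\cite{Freedman-Haah-Hastings-group-structure}. The property that makes this possible is that $\tM$ acts as the identity on the whole ancillary module: from $\tM(O\gt O')=\tT(O)\gt O'$ one reads off that the image of the physical subalgebra lands entirely in the physical algebra (spread by $\tT$ over neighbouring physical sites), while every ancillary operator is returned to itself. Hence, after the complete circuit, the physical data has been evolved by $\tT$ and is \emph{exactly} decoupled from the ancillae, and any cell temporarily used as an ancilla is left unchanged \emph{regardless of the operators it happens to carry}. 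This last point is crucial: a physical cell that has already been evolved by $\tT$ in an earlier stage can still be borrowed as scratch space later, since running the gates of $\tM$ on it returns it untouched.

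First I would exploit this to replace the permanent ancilla at each site with a nearby physical cell. Because $\tM$ is a depth-two nearest-neighbour scheme on the enlarged (physical $+$ ancillary) lattice, each physical cell must sit adjacent to the cell playing its ancilla, so the construction requires interleaving ``worker'' and ``scratch'' cells in a bounded-period spatial pattern rather than simply borrowing a whole neighbouring block. Concretely, I would fix a block length $L$ much larger than the (bounded) neighbourhood radius of $\tM$, partition the lattice into blocks of length $L$, and inside each block realise $\tT$ by running the localised gates of $\tM$, using interior cells of the block as the rotating scratch register; the identity action on the ancillary module guarantees that these cells are restored once the block has been updated.

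Next I would organise the schedule into a bounded number of layers. Since each block acts only on cells within a bounded distance and borrows scratch cells only from its own interior, the blocks fall into finitely many colour classes whose supports are pairwise disjoint, so all blocks of one colour are executed simultaneously as a single layer. A further bounded set of layers then fills in the seams between adjacent blocks---the boundary strips of width at most the neighbourhood radius, where the block-wise action does not yet coincide with $\tT$---again using their own neighbours as scratch cells. The whole schedule is translation-covariant by construction, the total depth is independent of the lattice size, and the resulting automorphism acts on the physical cells alone and equals $\tT$ on every cell; hence $\tT$ is a FDFC, i.e.\ $\tF$-implementable.

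The step I expect to be the main obstacle is precisely the geometric bookkeeping of the scratch cells: one must guarantee that each worker cell is provided with an adjacent, currently-available scratch cell without ever assigning a single cell two roles at the same layer, that the seams between blocks can be stitched in a depth that does not grow with the system, and that the borrowed cells are returned \emph{exactly}---leaving no residual correlation between the evolved physical algebra and the recruited cells. Combining the invariance of the ancillary module under $\tM$ with the finite neighbourhood of $\tM$ to achieve all of this simultaneously, and in a homogeneous way, is the technical heart of the argument and is exactly where the ancilla-removal construction of Ref.~\cite{Freedman-Haah-Hastings-group-structure} is invoked.
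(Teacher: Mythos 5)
Your proposal is correct and follows essentially the same route as the paper: starting from the single-ancilla stable equivalence $\tT\gt\tI=\tM$, you remove the ancillae by letting idle or already-updated physical cells serve as temporary scratch registers, relying on the fact that the complete circuit restores the ancillary module exactly, and you schedule the gates into a bounded number of translation-covariant layers. The paper realises the same idea through an explicit $12$-cell iterated schedule rather than your block-and-colouring bookkeeping, but the underlying argument is identical.
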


\begin{proof}
  Consider the $\tM$-implementability on the enlarged system of an
  index one FCA as in \Eq\eqref{eq:Manc}.  An explicit procedure for
  a single ancillary system may be graphically provided as follows.

\begin{itemize}
\item {\em Step 1}. Consider a sub lattice of 12 cells, labelled $\{0,...,11\}$. We act with the first layer of gates over the cells in the range $2-7$. All the other available cells play the role of the ancillae.\\
\begin{figure}[h!]
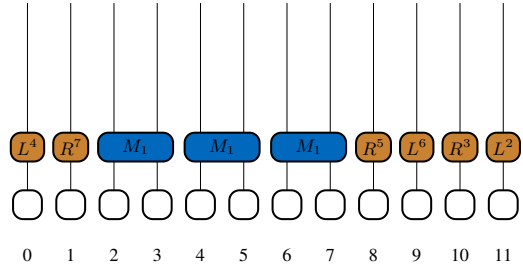

\resizebox{0.8\hsize}{!}{\tikzfig{figures/Ancillaremcircuit0}}
\caption{The first layer of gates ($M_1$) is colored in blue. The
  ancilla's transformations associated with the left (L) and right (R)
  leg of the $M_1$ transformations are drawn in orange. The apex
  denotes the number of the cell that is coupled with the ancilla by
  the transformation.}
\end{figure}
\item {\em Step 2}. We act with the second layer of gates over the
  cells 3-6. Since this step corresponds to a complete evolution of
  the 3-6 cells, the corresponding ancilla's transformations are
  removed from the circuit.
\begin{figure}[h!]
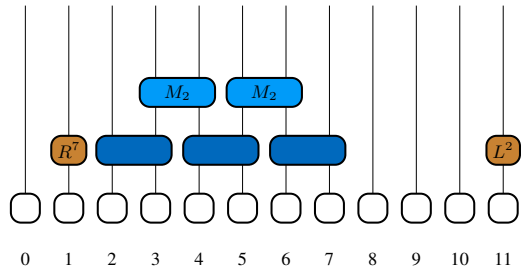

\resizebox{0.8\hsize}{!}{\tikzfig{figures/Ancillaremcircuit1}}
\caption{The second layer of transformations ($M_2$) provided in the
  second step is colored in light blue. After the action of such
  transformations, only ancilla's transformation relative to the cell
  7 and 2 are left. All the other cells are now free for the next
  steps.}
\end{figure}
\item {\em Step 3}. We apply subsequently a transformation $M_1$
  over 8-9 and a transformation $M_2$ over 7-8. To this aim, we add
  and remove an ancilla's transformation relative to the cell 8. The
  cell on which this transformation acts is immaterial. The ancilla's
  transformation of the 7 cell is removed while a new one relative to
  the cell 9 is added on already fully updated cell 3.
\item {\em Step 4}. Similarly to the Step 3, a transformation $M_1$
  acts over 0-1, and a transformation $M_2$ over 1-2. Again, the
  ancilla's transformation relative to the cell 1 is added and removed
  in the same step, while an ancilla transformation is added for the
  cell $0$ on the already fully updated cell 4.
\begin{figure}[h!]
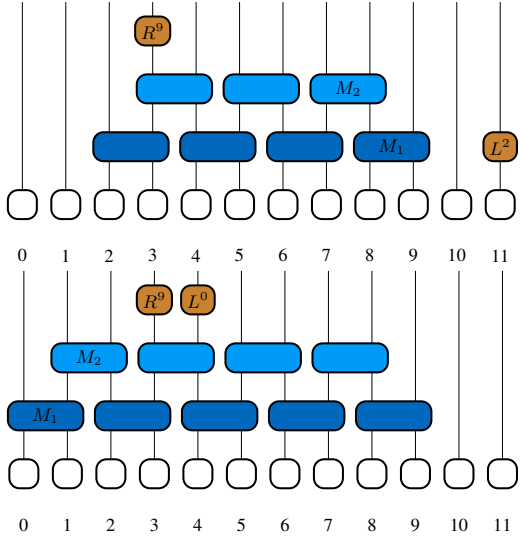

\resizebox{0.8\hsize}{!}{\tikzfig{figures/Ancillaremcircuit2}}
\resizebox{0.8\hsize}{!}{\tikzfig{figures/Ancillaremcircuit3}}
\caption{Step 3 and Step 4. Notice that the ancilla's transformations
  act over the fully updated cells 3-4. In this way all the cells not
  yet updated are now free.}
\end{figure}
\item {\em Step 5}. All the missing transformations are applied and the full FDFC is implemented.  
\begin{figure}[h!]
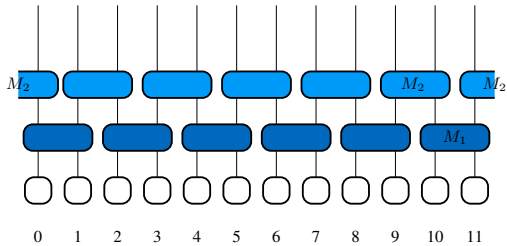

\resizebox{0.8\hsize}{!}{\tikzfig{figures/Ancillaremcircuit4}}
\caption{After Step 5 the FDFC is complete. The picture corresponds to
  the circuit $\tF$ in \Eq\eqref{eq:ancillarem}. The action of the
  ancilla is obvious and is not specified.}
\end{figure}
\end{itemize}
\end{proof}

\begin{remark}
  Based on the proof of Proposition \ref{prop:anc}, the quantum
  circuit $\tF$ obtained in the ancilla removal procedure is not a
  nearest neighbour FDFC, and thus it exhibits a different neighbourhood
  scheme than the FDFC $\tilde{\tF}$ in Definition
  \ref{def:stableeq}. In addition, such a procedure is not unique, and
  hence it may yield several FDFC with different
  neighbourhood schemes or depths. 
\end{remark}

Proposition \ref{prop:anc} has a trivial but remarkable implication
at the level of equivalence of FCA, which extend their ungraded
counterpart to the graded case. It tells us that FCA sharing the same
index belong to the same equivalence class defined by Definition
\ref{def:equiv}.
\begin{corollary}\label{cor:equivindex}
  Let $\tK: \Aqloc \rightarrow \Aqloc$, and
  $\tJ: \Aqloc \rightarrow \Aqloc$ be two FCA with
  $\ind{\tK}=\ind{\tJ}$. Then $\tK$ and $\tJ$ are $\tF$-equivalent
  according to Definition \ref{def:equiv}, i.e., there exists an
  FDFC $\tF$ such that $\tK = \tF \circ \tJ$. Moreover, upon suitable 
  regrouping of cells one can recast $\tF$ in an MS.
\end{corollary}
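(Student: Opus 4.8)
The plan is to reduce the statement to Proposition~\ref{prop:anc} by passing to the composite automaton $\tF\coloneqq\tK\circ\tJ^{-1}$. First I would verify that $\tF$ has unit index. The inverse of an FCA is again an FCA, so $\tJ^{-1}$ and $\tK\circ\tJ^{-1}$ are legitimate automata to which the index applies. By the multiplicativity of the index stated in Lemma~\ref{lem:prop} (first line of~\eqref{eq:indmult}), together with the fact that the identity automorphism is a trivial FDFC and hence has index one, the relation $\tJ\circ\tJ^{-1}=\mathrm{id}$ forces $\ind{\tJ}\ind{\tJ^{-1}}=1$, i.e.\ $\ind{\tJ^{-1}}=\ind{\tJ}^{-1}$. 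Applying multiplicativity once more yields $\ind{\tF}=\ind{\tK}\,\ind{\tJ}^{-1}=1$, where the final equality is exactly the hypothesis $\ind{\tK}=\ind{\tJ}$.

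Second, having established that $\tF$ is an index one FCA, I would invoke Proposition~\ref{prop:anc} directly: every index one FCA is $\tF$-implementable in the sense of Definition~\ref{def:implem}, so $\tF$ coincides with a genuine FDFC acting on the physical cells alone, with no ancillary systems. Precomposing $\tF=\tK\circ\tJ^{-1}$ with $\tJ$ then gives $\tK=\tF\circ\tJ$, which is precisely the $\tF$-equivalence of Definition~\ref{def:equiv}. This settles the first assertion; the technically demanding part---the ancilla-removal argument showing that an index one automaton is realisable by a circuit acting only on the physical register---has already been carried out in Proposition~\ref{prop:anc}, and is here used as a black box.

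The remaining clause asks to recast $\tF$ as a Margolus Scheme after a suitable regrouping of cells. The plan is to apply the standard blocking argument for one-dimensional finite depth circuits: an FDFC of depth $D$ and bounded neighbourhood scheme can be compressed into depth two by coarse-graining the lattice into supercells whose width exceeds the spread of $\tF$. The gates of the original $D$ layers falling inside a single supercell are absorbed into one unitary, while the inter-supercell couplings reorganise into the two staggered nearest-neighbour layers over $\{2x,2x+1\}$ and $\{2x+1,2x\}$ required by Definition~\ref{def:Marg}. Homogeneity of $\tF$ guarantees that the supercell size can be chosen uniformly along the lattice.

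I expect the first two paragraphs to be essentially immediate once Proposition~\ref{prop:anc} is available, so the only genuine obstacle is the bookkeeping in the blocking argument of the last paragraph: one must check that the regrouped block operators are again unitaries of the relevant local algebra and respect the $\mathbb{Z}_2$-grading, so that the resulting depth-two circuit truly satisfies Definition~\ref{def:Marg}. The grading is in fact preserved automatically, since each block unitary is a parity-preserving element of a local graded algebra, so no conceptual difficulty beyond the ungraded qudit case arises.
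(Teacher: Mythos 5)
Your first two paragraphs reproduce the paper's argument exactly: form $\tK\circ\tJ^{-1}$, use multiplicativity of the index (Lemma~\ref{lem:prop}) to get $\ind{\tK\circ\tJ^{-1}}=1$, apply Proposition~\ref{prop:anc} to obtain an FDFC $\tF$ with $\tK\circ\tJ^{-1}=\tF$, and compose with $\tJ$. The paper merely makes the intermediate step $\tK\circ\tJ^{-1}\gt\tI=\tM$ explicit before invoking ancilla removal; substantively there is no difference, and this part is fine.

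The divergence is in the final clause. The paper obtains the Margolus form by invoking Lemma~\ref{lem:Mequiv}, i.e.\ the stable-$\tM$-equivalence machinery, whereas you propose a direct blocking argument that compresses the depth-$D$ circuit $\tF$ into two staggered layers of supercell unitaries. That compression is not ``only bookkeeping'': gates of successive layers straddle any fixed supercell boundary, so the product does not literally reorganize into two layers by absorbing gates; the clean justification goes through the support-algebra structure theorem (the proof of Theorem~9 of Ref.~\cite{Werner-index}), which requires the left and right support algebras of the blocked automaton to be isomorphic full matrix algebras. In the graded setting this is precisely the point where the argument can fail: as the Proposition in Section~\ref{sec:stabequiv} explains, an index-one FCA can have support algebras $\M{p}{q}$ and $\M{p'}{q'}$ with $(p+q)=(p'+q')$ but $p\neq p'$, which are not isomorphic, and this obstruction is the entire reason ancillae (or the regrouping that simulates them with physical cells) enter the story. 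So your closing claim that ``no conceptual difficulty beyond the ungraded qudit case arises'' is too quick; you should either route the MS claim through Lemma~\ref{lem:Mequiv} as the paper does, or explicitly verify that after blocking the support algebras of $\tF$ become isomorphic (e.g.\ by pairing cells so that each blocked cell contains a full $\M{n}{n}$ factor playing the ancilla role).
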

  \begin{proof}
    The Proof consist in a trivial application of
    \Eq\eqref{eq:Manc}. Indeed, consider two FCA $\tK,\tJ$ with
    $\ind{\tK}=\ind{\tJ}$. From the index properties it trivially
    follows that $\ind{\tK\circ\tJ^{-1}}=1$. Then we have:
\begin{align*}
\tK\circ\tJ^{-1}\gt\tI=\tM.
\end{align*}
We can apply the ancilla removal to get:
\begin{align*}
\tK\circ\tJ^{-1}=\tF.
\end{align*}
Composing on the right with $\tJ$ we finally get $\tK=\tF\circ\tJ$.
Finally, one can invoke Lemma~\ref{lem:Mequiv} to obtain, upon suitable regrouping of the cells, $\tK=\tM\circ\tJ$ for some MS $\tM$.
\end{proof}

\section{Classification of One dimensional Nearest Neighbours FCA}
\label{sec:Class}
In this section, we classify all the possible evolutions of FCA on a
one-dimensional lattice $\Lambda$ with a single Fermionic mode per
site, namely the support algebra per site is isomorphic to
$\M{1}{1}$.

We recall that each local Fermionic mode located at $x$ is
generated by $(X_x,Y_x)$. Based on Theorem $\ref{thm:locglo}$, the
form of the automorphism implementing the FCA is pinned down only by
the graded commutation relations on the overlapping regions of the
neighbourhoods. In particular, for a nearest neighbour FCA 
$\tT$, i.e. with $\mathcal{N}=\{-1,0,1\}$, it holds
that

\begin{equation}
	\label{eq:incinc}
        \tT(\mathsf{A}_x)\subseteq\mathsf{A}_{x-1}\boxtimes\mathsf{A}_x\boxtimes\mathsf{A}_{x+1}.
\end{equation}
Considering for example the region
$\mathsf{A}^0\boxtimes\mathsf{A}^1\boxtimes\mathsf{A}^2\boxtimes\mathsf{A}^3\boxtimes\mathsf{A}^4$,
it is sufficient to check the two conditions
\begin{align}
\{[\tT(\mathsf{A}_1),\tT(\mathsf{A}_3)]\}=0,\\
\gc{\tT(\mathsf{A}_1)}{\tT(\mathsf{A}_2)}=0.
\label{eq:condcomm}
\end{align}
In particular, we can define the support algebras:
\begin{equation}
\label{eq:support}
	\begin{aligned}
		\EL &\coloneqq
                \mathbf{S}(\tT(\mathsf{A}_x),\mathsf{A}_{x-1}),\\
                \ER &\coloneqq \mathbf{S} ( \tT ( \mathsf{A}_x ), \mathsf{A}_{x+1}),\\
		\EC &\coloneqq \mathbf{S}(\tT(\mathsf{A}_x),\mathsf{A}_x).
	\end{aligned}
\end{equation}
In the translationally invariant case, those support algebras do not depend
on $x$. Moreover, using \Eq\eqref{eq:support},  \Eq\eqref{eq:incinc} becomes
\begin{equation}\label{eq:incl}
\tT(\m A_x)\subseteq \EL\gt\EC\gt\ER.
\end{equation}
Lemma $\ref{thm:suppcomm}$ allows us then to derive the following necessary condition from \Eq\eqref{eq:condcomm}
\begin{equation}
\label{eq:grc}
	\gc{\EL}{\ER}=0.
\end{equation}

Another preliminary result for the scope of the classification of 
FCA is the following Lemma and along with its corollary.

\begin{lemma}\label{lem:diffpar}
 Operators in the same cell with different parity that are not multiples of $I$ cannot graded-commute
\end{lemma}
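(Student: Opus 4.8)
The plan is to reduce the statement to a short computation inside a single cell algebra $\mathsf{A}_x\cong\M{1}{1}$, using the basis $\{I,X_x,Y_x,Z_x\}$ together with the grading that makes $I,Z_x$ even and $X_x,Y_x$ odd. Since the two operators sit in the same cell and carry opposite parity, I would write the even one as $E=aI+dZ_x$ and the odd one as $O=bX_x+cY_x$, with $a,b,c,d\in\mathbb{C}$. The hypothesis that neither is a multiple of $I$ then reads $d\neq0$ for $E$, while for $O$ it holds automatically once $O\neq0$, i.e. $(b,c)\neq(0,0)$, because every nonzero odd element fails to be a scalar multiple of the (even) identity.

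The key observation is that, since the product of the two grades is $0$, the graded commutator $\gc{E}{O}$ collapses to the ordinary commutator $\comm{E}{O}$. The scalar part $aI$ drops out, so the whole computation reduces to $d\,\comm{Z_x}{bX_x+cY_x}$. Using the Fermionic--Pauli relations $\comm{Z_x}{X_x}=2iY_x$ and $\comm{Z_x}{Y_x}=-2iX_x$, which follow at once from the matrix representations recorded above (or equivalently from the CAR), I obtain $\gc{E}{O}=2id\,(bY_x-cX_x)$.

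Because $X_x$ and $Y_x$ are linearly independent, this expression vanishes precisely when $db=dc=0$, that is when $d=0$ or $b=c=0$ --- exactly the excluded cases in which one of the two operators is a multiple of $I$. Hence under the hypotheses $\gc{E}{O}\neq0$, which is the claim; the case of an odd $E$ and an even $O$ follows by antisymmetry of the commutator. I do not expect a genuine obstacle here, as the computation is elementary. The only point requiring care is the parity bookkeeping: recognizing that \emph{graded-commuting} for opposite-parity operators means the \emph{ordinary} commutator vanishes, and that \emph{not a multiple of} $I$ is exactly the condition ($d\neq0$ and $(b,c)\neq(0,0)$) that prevents the final expression from collapsing.
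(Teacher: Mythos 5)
Your proof is correct and follows essentially the same route as the paper: decompose the even operator as $aI+dZ_x$ and the odd one as $bX_x+cY_x$, note that the graded commutator of opposite-parity elements is the ordinary commutator, and observe that $[Z_x,\cdot]$ acts invertibly on the span of $X_x,Y_x$. The paper merely states the single relation $\gc{X}{Z}=[X,Z]\neq 0$ and declares the claim immediate, whereas you carry out the general-coefficient computation explicitly; this is the same argument, just more fully written out.
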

\begin{proof}
  The thesis immediately follows considering that odd operators are of the form
  $aX+bY$ and even operators are of the form $cI+dZ$, and that
\begin{equation*}
\gc{X}{Z}=[X,Z]=-iY\neq 0.
\end{equation*}
\end{proof}

The previous Lemma has two relevant corollaries.

\begin{corollary} \label{cor:fullmat} 
The graded center of the full matrix graded 
algebra $\mathcal{M}$ is trivial.
\end{corollary}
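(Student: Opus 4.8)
The plan is to reduce the statement to the preceding Lemma \ref{lem:diffpar} by exploiting the grading. First I would observe that the graded center is itself a $\mathbb{Z}_2$-graded subspace: writing an arbitrary candidate central element as $O=O_0+O_1$ with $O_0$ even and $O_1$ odd, and testing against a homogeneous operator $w$ of definite parity, the graded commutator $\gc{O}{w}$ decomposes into a piece coming from $O_0$ and a piece coming from $O_1$, which have opposite parities and must therefore vanish independently. Hence $O$ graded-commutes with every operator if and only if both $O_0$ and $O_1$ do, and it suffices to determine which purely even and which purely odd operators are graded-central.

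Each of the two cases is then settled by a direct application of Lemma \ref{lem:diffpar}. For the odd component, suppose $O_1\neq 0$; being odd it cannot be a multiple of the (even) identity, so comparing it with the even, non-scalar operator $Z$ we have two operators of opposite parity, neither a multiple of $I$, and the Lemma forbids them from graded-commuting. This contradicts the assumed centrality of $O_1$, forcing $O_1=0$. For the even component, suppose $O_0$ is not a multiple of $I$; then $O_0$ and the odd non-scalar generator $X$ are again of opposite parity with neither a scalar, so $\gc{O_0}{X}\neq 0$ by the same Lemma, contradicting centrality. Thus $O_0\in\mathbb{C}I$.

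Combining the two conclusions yields $O=O_0\in\mathbb{C}I$, and since $I$ is even and graded-commutes with everything (as $g(I)=0$ and $I$ commutes with all operators), the graded center is exactly $\mathbb{C}I$, which is what ``trivial'' means. I do not expect a real obstacle, since the corollary is essentially a repackaging of Lemma \ref{lem:diffpar}; the only points needing care are justifying the graded splitting of the center and remembering that the Lemma explicitly excludes multiples of $I$ --- which is precisely the reason the surviving center consists of the scalars rather than being empty. For the general full matrix algebra $\M{p}{q}$ beyond the single-mode case $\M{1}{1}$, the same argument goes through once $Z$ is read as the canonical grading operator $\operatorname{diag}(I_p,-I_q)$, since an odd element graded-commutes with $Z$ only when its off-diagonal blocks vanish and the even part of the center of a full matrix algebra is already the scalars.
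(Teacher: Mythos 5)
Your proof is correct and follows exactly the route the paper intends: the corollary is stated as an immediate consequence of Lemma~\ref{lem:diffpar} with the details left implicit, and your splitting of a candidate central element into even and odd parts --- each then ruled out by testing against $Z$ and $X$ respectively via that Lemma --- is precisely that argument made explicit. Your closing remark extending the claim to $\M{p}{q}$ is a correct bonus that goes slightly beyond the single-cell form of Lemma~\ref{lem:diffpar}, which is all the paper actually needs where the corollary is invoked.
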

\begin{corollary}\label{cor:par} One of the following cases holds
\begin{enumerate}
\item
$\EL$ (or $\ER$) is trivial;
\item
$\EL=\ER=\{cI+dZ\mid c,d\in\mathbb C\}$;
\item
with a suitable choice of basis, $\EL=\{aI+bX\mid a,b\in\mathbb C,\ ab=0\}$ and 
$\ER=\{cI+dY\mid c,d\in\mathbb C,\ cd=0\}$.
\end{enumerate}
\end{corollary}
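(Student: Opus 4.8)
The plan is to classify, up to a choice of basis in the single cell $\M{1}{1}$, all pairs of unital graded $C^*$-subalgebras $(\EL,\ER)$ that are compatible with the graded-commutation constraint \eqref{eq:grc}, namely $\gc{\EL}{\ER}=0$. First I would enumerate the unital graded $C^*$-subalgebras of $\M{1}{1}$ directly. Writing a generic element as $aI+bX+cY+dZ$ with $I,Z$ even and $X,Y$ odd, a homogeneous odd element $W=\alpha X+\beta Y$ satisfies $W^2=(\alpha^2+\beta^2)I$, and self-adjointness forces $W$ to be proportional to a real (Majorana) direction $\cos\theta\,X+\sin\theta\,Y$. Since any second independent odd element, or the presence of both an odd element and $Z$, generates $Z$ together with two independent odd directions and hence the whole algebra, the only proper unital graded subalgebras are the trivial one $\mathbb{C}I$, the even algebra $\{cI+dZ\}\cong\mathbb{C}\oplus\mathbb{C}$, and the Clifford algebras $\spn\{I,W\}\cong\Cl{1}{0}$ generated by a single Majorana direction $W$.

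Next I would dispose of the full algebra. By Corollary~\ref{cor:fullmat} the graded center of $\M{1}{1}$ is trivial, so if either $\EL$ or $\ER$ equalled $\M{1}{1}$ then $\gc{\EL}{\ER}=0$ would force the other to consist only of multiples of $I$; this situation is already covered by case~1. Hence, away from case~1, both $\EL$ and $\ER$ are two-dimensional, each being either the even algebra or a Clifford algebra.

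The heart of the argument is then a short parity-driven case split, resting on Lemma~\ref{lem:diffpar}. If $\EL$ contains a non-scalar even element, i.e.\ $Z\in\EL$, then every element of $\ER$ must commute with $Z$ (the graded commutator with an even operator is the ordinary commutator); since $\comm{Z}{X}\neq0$ and $\comm{Z}{Y}\neq0$, Lemma~\ref{lem:diffpar} rules out any odd component in $\ER$ and forces $\ER\subseteq\{cI+dZ\}$, whence $\ER=\{cI+dZ\}$ by non-triviality, and symmetrically $\EL=\{cI+dZ\}$: this is case~2. If instead $\EL$ contains a non-zero odd element $W$, then any even element of $\ER$ must commute with $W$, which by Lemma~\ref{lem:diffpar} (the non-vanishing of $\comm{Z}{W}$) collapses the even part of $\ER$ to $\mathbb{C}I$; moreover any odd $W'\in\ER$ must anticommute with $W$, and since $\acomm{W}{W'}=2\langle W,W'\rangle I$ this forces $W'$ along the Majorana direction orthogonal to $W$. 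As $\ER$ is non-trivial it then equals the Clifford algebra along $W^{\perp}$, and running the same reasoning back on $\EL$ collapses its even part to $\mathbb{C}I$ as well. Rotating the basis so that $W=X$, hence $W^{\perp}=Y$, yields case~3.

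The step I expect to be the main obstacle is the enumeration of the graded subalgebras and, within it, correctly separating the simple graded algebras (those covered by Lemma~\ref{lem:alg}) from the non-simple even algebra $\{cI+dZ\}\cong\mathbb{C}\oplus\mathbb{C}$, which is the one genuinely graded object that Lemma~\ref{lem:alg} does \emph{not} produce and which nonetheless arises here as a support algebra; this is why I would favour a direct enumeration over invoking the graded Wedderburn classification. The remaining delicate point is the bookkeeping of the graded commutator's switch between commutator and anticommutator according to parity, which is precisely what makes the orthogonality of the two Majorana directions in case~3 drop out of $\acomm{W}{W'}=0$.
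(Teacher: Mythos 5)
Your proof is correct and follows essentially the same route as the paper's: both arguments rest on Lemma~\ref{lem:diffpar} to exclude mixed-parity pairs (forcing the all-even case $\EL=\ER=\{cI+dZ\}$) and on the computation $\gc{O_L}{O_R}=\{O_L,O_R\}=2(\mathbf a\cdot\mathbf c)I$ for odd self-adjoint elements to derive the orthogonality of the two Majorana directions in case~3. Your preliminary enumeration of the unital graded subalgebras of $\M{1}{1}$ and the explicit dismissal of the full-algebra case via Corollary~\ref{cor:fullmat} make the case analysis more systematic than the paper's terse version, but they do not change the substance of the argument.
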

\begin{proof}
This straightforwardly follows from the above Lemma and the first condition in 
\Eq\eqref{eq:grc}. One needs to consider that the only subalgebra of $\M{1}{1}$ 
with fixed parity---thus graded commuting with itself---is 
$\{cI+dZ\mid c,d\in\mathbb C\}$. On the other hand if $\ER$ and $\EL$ contain odd selfadjoint operators, say $O_L=aX+bY$ and $O_R=cX+dY$ with $a,b,c,d\in\mathbb R$, 
then
\begin{align*}
\gc{O_L}{O_R}=2(\mathbf a\cdot\mathbf c)I,
\end{align*}
where $\mathbf a=(a,b)$ and $\mathbf c=(c,d)$, thus they graded commute iff 
$\mathbf a= \|\mathbf a\|R\mathbf e_1$ and $\mathbf c= \|\mathbf c\|R\mathbf e_2$ with 
$(\mathbf e_i)_j=\delta_{ij}$ and $R$ a rotation matrix. This implies that 
$O_L=\|\mathbf a\|X'$ and $O_R=\|\mathbf c\|Y'$ with 
$W'\coloneqq U(\theta, n)^\dag WU(\theta,n)$.
\end{proof}

The classification of all one-dimensional homogeneous FCA $\tT$ over
cells isomorphic to $\M{1}{1}$ articulates in three steps: i) compute
all admissible values of the index; ii) show that shifts or
Majorana-shifts exhaust all the values of index different from 1; iii) 
characterise all index one FCAs.

We start computing all possible values of the index. To this end we
consider \Eq\eqref{eq:indform} where we take
\begin{equation*}
\begin{aligned}
	&d=2, && m = 1,2, \\
	&p+q=2^k, && k=\{0,1,2\},
\end{aligned}
\end{equation*}
which gives the following values of the index
\begin{equation*}
\ind{\tT}=1,2^{\pm\frac{1}{2}},2^{\pm 1}.
\end{equation*}

Examples of quantum cellular automata $\tG$ having
$\ind{\tG}=1,2^{\pm 1}$ are easily found (even in the
ungraded). Indeed, each FDFC has $\ind{\tF}=1$, while any shift
$\tau_x$ of a
$\mathsf{A}_{x}=\operatorname{Mat}_{2\times 2}(\mathbb{C})$ matrix
algebra has $\ind{\tau_x}=2^{\pm 1}$. Finally, irrational indices like
$\ind{\tT}=2^{\pm\frac{1}{2}}$, which are peculiar of the Fermionic
graded case, can be obtained via Majorana
shifts~\eqref{def:Majoranashift} as proved in the following lemma:

\begin{lemma}
The Majorana shift $\sigma_{\pm}$ in \Eq \eqref{eq:Majoshift} satisfies
\begin{align}\label{eq:ind-Majoshift}
&\operatorname{ind}(\sigma_{\pm})=2^{\pm\frac{1}{2}},\\
&\operatorname{ind}(\sigma_{\pm}^{-1})=2^{\mp\frac{1}{2}}.
\end{align}
\end{lemma}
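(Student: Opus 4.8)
The plan is to obtain the four claimed values from two independent computations: a direct evaluation of the index \eqref{eq:ind} on $\sigma_\pm$ through their support algebras, and then the values on the inverses $\sigma_\pm^{-1}$ from the multiplicativity of the index in Lemma~\ref{lem:prop}.

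First I would compute the image $\sigma_\pm(\A_{2x}\gt\A_{2x+1})$ of a grouped two-site block on generators. Expressing the block through its four Majorana generators $\xi_{4x},\xi_{4x+1},\xi_{4x+2},\xi_{4x+3}$ via the identification \eqref{eq:mamod} ($\xi_{2y}=X_y$, $\xi_{2y+1}=Y_y$), the defining relations \eqref{eq:Mshift} show that $\sigma_\pm$ acts as the unit Majorana shift $\xi_j\mapsto\xi_{j\pm1}$. Consequently the image of the block is spanned by four consecutive Majorana modes displaced by a single step, so that it shares three of its Majorana generators with one of the two neighbouring two-site blocks and a single generator with the other, the side being fixed by the sign in $\sigma_\pm$.

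Next I would identify the two support algebras $\mathsf L_{2x}$ and $\mathsf R_{2x+1}$ of \eqref{eq:left-right}. The three-mode overlap produces a support algebra generated by three anticommuting self-adjoint operators squaring to $I$; as a $\mathbb Z_2$-graded $C^*$-algebra this is a full Fermionic cell $\M{1}{1}$ supplemented by one extra Majorana, that is the Clifford algebra $\Cl{1}{1}$, with $m=2$ and $p+q=2$ in the notation of Lemma~\ref{lem:alg} and complex dimension $8$. The single-mode overlap instead yields $\spn\{I,\xi\}\cong\Cl{1}{0}$, with $m=2$, $p+q=1$ and dimension $2$. Inserting the enlarged support $\Cl{1}{1}$ into \eqref{eq:indform} with $d=2$ gives $\sqrt{2}\,(p+q)/d=\sqrt 2=2^{+1/2}$, while the small one gives $2^{-1/2}$, and the two equalities of \eqref{eq:ind} are checked to be mutually consistent. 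With the orientation convention under which $\sigma_+$ enlarges the support entering the index as $2^{+1/2}$, one obtains $\operatorname{ind}(\sigma_+)=2^{+1/2}$, and the spatial reflection exchanging the two neighbouring blocks gives $\operatorname{ind}(\sigma_-)=2^{-1/2}$.

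Finally, the values on the inverses require no further support-algebra analysis: since $\sigma_\pm\circ\sigma_\pm^{-1}=\mathrm{id}$, which has index one, the multiplicativity \eqref{eq:indmult} forces $\operatorname{ind}(\sigma_\pm)\,\operatorname{ind}(\sigma_\pm^{-1})=1$, whence $\operatorname{ind}(\sigma_\pm^{-1})=2^{\mp1/2}$. I expect the main obstacle to be the precise identification of the graded $C^*$-algebra generated by the three overlapping Majorana modes---recognising it as the Clifford algebra $\Cl{1}{1}$ (so that $m=2$) rather than a full matrix algebra---together with fixing the orientation convention that ties the sign of the Majorana shift to which of $\mathsf L_{2x}$, $\mathsf R_{2x+1}$ is the enlarged support, so that the computed signs agree with the stated $2^{\pm1/2}$.
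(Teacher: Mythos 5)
Your argument is correct, but it takes a genuinely different route from the paper's. The paper's proof is a two-line reduction: applying \eqref{eq:Mshift} twice gives $X_x\mapsto Y_x\mapsto X_{x\pm1}$ and $Y_x\mapsto X_{x\pm1}\mapsto Y_{x\pm1}$, i.e.\ $(\sigma_\pm)^2=\tau_{\pm1}$, and then multiplicativity \eqref{eq:indmult} together with $\ind{\tau_{\pm1}}=2^{\pm1}$ and positivity of the index yields $\ind{\sigma_\pm}=2^{\pm1/2}$ (the inverses follow the same way). You instead evaluate the definition \eqref{eq:left-right}--\eqref{eq:ind} directly, identifying the two supports of the image of a two-cell block as the Clifford algebras $\Cl{1}{1}$ (three overlapping Majorana modes, $m=2$, $p+q=2$, dimension $8$) and $\Cl{1}{0}$ (one mode, dimension $2$); this is a sound computation, and $2\cdot 8=16$ confirms \eqref{eq:suppaut}. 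The paper's route is shorter and needs nothing beyond the stated index of the ordinary shift; yours is more self-contained and has the virtue of exhibiting explicitly the $m=2$ Clifford supports, i.e.\ the precise mechanism by which irrational indices arise in the graded case. One caveat, which you inherit from the paper rather than introduce: reading \eqref{eq:left-right} and \eqref{eq:ind} literally, the right shift $\tau_{+1}$ has trivial left support and hence index $2^{-1}$, so which of $\sigma_+$, $\sigma_-$ gets the value $2^{+1/2}$ is fixed only by an orientation convention; you flag this and resolve it by fiat, exactly as the paper does when it assigns $\ind{\tau_{\pm1}}=2^{\pm1}$.
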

\begin{proof}
  Applying $\sigma_\pm$ twice, as defined in \Eq\eqref{eq:Mshift}, one can easily find 
  that $(\sigma_{\pm})^2=\tau_{\pm}$. Hence, exploiting the
  multiplicative property of the index in \Eq\eqref{eq:indmult} we  have
\begin{equation}
\ind{(\sigma_{\pm})^2}=\ind{(\sigma_{\pm})}^2=\ind{\tau_\pm},
\end{equation}
and thus \Eq\eqref{eq:ind-Majoshift} trivially follow. 
\end{proof}

By the multiplication property in Lemma \ref{lem:prop} every two
FCA $\tT,\tS$ with $\ind{\tT}=\ind{\tS}$ are related through a FDFC,
namely an index one FCA. On the other hand shifts $\tau_{\pm 1}$ and
Majorana shifts $\sigma_{\pm}$ and $\sigma_\mp^{-1}$ are good representatives for index
equal to $2^{\pm 1}$ and $2^{\pm 1/2}$, respectively. Therefore we conclude
that
\begin{align}
&  \ind{\tT}=2^{\pm 1}\Rightarrow \tT=\tau_{\pm 1}\circ\mathcal{\tS},\\
&  \ind{\tT}=2^{\pm \frac{1}{2}}\Rightarrow \tT=\sigma_{\pm}\circ \mathcal{S},
\end{align}
for a suitable FCA $\tS$ with $\ind{\tS}=1$.  As a result we only need
to provide a characterization of index one FCA.

\subsection{Unit Index FCA}
We begin proving that for index one FCA over $\M{1}{1}$ the flow of
information to the left and to the right is balanced:
\begin{lemma}
For FCA with cells isomorphic to $\M{1}{1}$ and $\ind{\tT}=1$ one has
$\EL\simeq\ER$.

\end{lemma}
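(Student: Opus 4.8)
The plan is to combine the explicit list of possibilities for the pair $(\EL,\ER)$ given in Corollary~\ref{cor:par} with the value of the index. In case~2 of that corollary one has $\EL=\ER=\{cI+dZ\mid c,d\in\mathbb C\}$, so $\EL\simeq\ER$ trivially, and in case~3 both algebras are two-dimensional Clifford-type graded algebras, each generated by a single odd self-adjoint element ($X'$ and $Y'$ respectively), hence again $\EL\simeq\ER$. The only configuration left to treat is case~1, where one of the two supports is trivial; by the left--right symmetry of the construction I may assume it is $\EL$, and the goal becomes to show that $\ind{\tT}=1$ forces $\ER$ to be trivial as well, so that $\EL\simeq\ER\simeq\mathbb C I$.

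Assuming $\EL$ trivial, one has $\tT(\A_x)\subseteq\EC\gt\ER\subseteq\A_x\gt\A_{x+1}$, so that the evolution has no leftward leakage. I would first evaluate the left support algebra $\mathsf L_{2x}$ of Eq.~\eqref{eq:left-right}: since $\tT(\A_{2x})$ is supported on $\{2x,2x+1\}$ and $\tT(\A_{2x+1})$ on $\{2x+1,2x+2\}$, neither image touches the cell $2x-1$ and only $\tT(\A_{2x})$ touches the cell $2x$, with support $\EC$. Thus $\mathsf L_{2x}\simeq\EC$, and the index formula gives $\ind{\tT}=\sqrt{\dim{\mathsf L_{2x}}/\dim{\A_{2x}}}=\sqrt{\dim{\EC}/4}$. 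The hypothesis $\ind{\tT}=1$ then yields $\dim{\EC}=4$, that is $\EC=\A_x\simeq\M{1}{1}$ is the full single-cell algebra.

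To finish I would use the graded commutation of neighbouring images. Since $\tT$ is an automorphism, $\gc{\tT(\A_x)}{\tT(\A_{x+1})}=0$, and because $\EL$ is trivial these two images overlap on the single cell $x+1$ only; applying Lemma~\ref{thm:suppcomm} with $\mathsf B_L=\A_x$, $\mathsf B_C=\A_{x+1}$ and $\mathsf B_R=\A_{x+2}$ then gives $\gc{\ER}{\EC}=0$ as subalgebras of $\A_{x+1}$. As $\EC=\A_{x+1}$ is the full matrix algebra, any subalgebra graded-commuting with it must lie in its graded center, which is trivial by Corollary~\ref{cor:fullmat}; hence $\ER=\mathbb C I$ and $\EL\simeq\ER$. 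I expect the main obstacle to be the support-algebra bookkeeping: rigorously justifying the collapse $\mathsf L_{2x}\simeq\EC$ (that the support of the generated algebra $\tT(\A_{2x})\,\tT(\A_{2x+1})$ on $\A_{2x-1}\gt\A_{2x}$ is carried entirely by the single cell $2x$), and checking that the hypotheses of Lemma~\ref{thm:suppcomm} are genuinely satisfied so that the neighbouring images overlap in exactly one cell --- which is precisely what the vanishing of $\EL$ secures.
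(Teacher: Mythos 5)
Your proof is correct, but it reaches the conclusion by a genuinely different route in the one case that actually requires work. The paper proceeds by a four-way exhaustion on the form of $\EL$ (full, containing an odd operator, non-trivial even, trivial) and disposes of the mismatched configurations by \emph{identifying} the resulting automaton with a shift or a Majorana shift and invoking the previously computed indices of those automata (plus one dimensional bound $\dim{\EL}\dim{\ER}\dim{\EC}\ge\dim{\m A_x}$). You instead use Corollary~\ref{cor:par} to collapse everything to the single configuration ``one support trivial,'' and then kill it \emph{quantitatively}: the vanishing of $\EL$ forces $\mathsf L_{2x}\simeq\EC$, the index formula \eqref{eq:ind} with $\ind{\tT}=1$ forces $\dim{\EC}=4$ so that $\EC\simeq\M{1}{1}$, and Lemma~\ref{thm:suppcomm} together with the triviality of the graded center (Corollary~\ref{cor:fullmat}) then forces $\ER\simeq\mathbb CI$. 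Both arguments are sound. The paper's version has the side benefit of naming the excluded automata (shift, Majorana shift), which feeds directly into the subsequent classification; yours is more self-contained (it does not presuppose the index of the Majorana shift) and yields the extra structural fact that in the surviving case $\EC$ is the full cell algebra. The two points you flag as delicate are indeed the only ones needing care, and both hold: the support of the algebra generated by $\tT(\A_{2x})$ and $\tT(\A_{2x+1})$ on $\A_{2x-1}\gt\A_{2x}$ is the algebra generated by the individual supports, and a support does not grow when the ambient factor is enlarged on a side where the operators act trivially; also note that the ``left--right symmetry'' you invoke requires the mirrored computation $\mathsf R_{2x+1}\simeq\EC$ with the inverted ratio $\ind{\tT}=\sqrt{\dim{\A_{2x+1}}/\dim{\m R_{2x+1}}}$, which goes through verbatim.
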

\begin{proof}
  We proceed exhausting all cases. Without loss of generality, we will 
  enumerate all possibile forms of $\EL$: the same arguments clearly hold
  exchanging  $\EL$ and $\ER$.
\begin{enumerate}
\item\label{it:full} $\EL\simeq\M{1}{1}$. The graded commutation relation in
  \Eq\eqref{eq:grc} and Corollary \ref{cor:fullmat} immediately
  imply $\ER\simeq I$. Thus, the second relation in
  \Eq\eqref{eq:condcomm} reduces to $\gc{\EL}{\EC}=0$, which implies
  $\EC\simeq \mathbb CI$. Therefore, the automaton is isomorphic to a left
  shift which has not unit index. 
\item\label{it:odd} $\EL$ contains an odd operator, that we will call $X$. In
  this case, the condition in \Eq\eqref{eq:grc} and Theorem
  \ref{lem:diffpar} imply that either $\ER$ is generated by an odd
  operator that anticommutes with $X$---namely $Y$---or it is the
  trivial algebra $\ER\simeq I$. In the former case we have indeed
  $\ER\simeq \EL$, while in the latter we can apply the second
  condition in \Eq\eqref{eq:condcomm} along with Corollary~\ref{cor:par}
  to prove that $\EC$ is either
  trivial or generated by $Y$. On the one hand, if $\EC$ is generated
  by $Y$, the evolution is isomorphic to the {Majorana shift}
  defined in \eqref{eq:Mshift}, and thus it has index different from
  one. On the other hand, if $\EC$ is trivial, we would have
  $\ER\simeq \EC \simeq \mathbb CI$ and $\EL$ generated by $X$. However,
  \Eq\eqref{eq:incl} implies that
\begin{equation*}
	\dim{\EL}\dim{\ER}\dim{\EC}\ge\dim{\m A_x},
\end{equation*}
and this last case is then ruled out by dimensional arguments.
\item\label{it:even} $\EL$ is non-trivial and even. Then, $\ER$ is
  itself non-trivial and even, or it is trivial. By similar arguments
  as those at point $2$, the former case is the only admissible
  solution for an index one FCA, and thus 
  $\EL\simeq \ER\simeq\{cI+dZ\mid c,d\in\mathbb C\}$.
\item $\EL\simeq \mathbb CI$. All the admissible cases are listed as follows:
  $\ER\simeq \M{1}{1}$, $\ER$ is $\{cI+dZ\mid c,d\in\mathbb C\}$, $\ER$ 
  contains an odd operator---say $X$---or $\ER$ is trivial as well. The first three 
  cases are ruled out by exchanging the roles of $\EL$ and $\ER$ in items~\ref{it:full},
  \ref{it:odd} and~\ref{it:even}. Therefore, we are only left with 
  $\ER\simeq \EL\simeq \mathbb CI$.
\end{enumerate}
\end{proof}
The classification proceeds as follows. Based on the previous Lemma,
three different cases must be studied:
\begin{enumerate}
\item \label{it:triv}$\EL\simeq \ER\simeq \mathbb CI$;
\item \label{it:ev}$\EL\simeq\ER\simeq\{cI+dZ\mid c,d\in\mathbb C\}$;
\item \label{it:o}
$\EL=\{aI+b O_L\mid a,b\in\mathbb C,\ ab=0\}\simeq\ER=\{aI+b O_R\mid a,b\in\mathbb C,\ ab=0\}$, with $\gc{O_L}{O_R}=0$.
\end{enumerate}
Let us start with the trivial case~\ref{it:triv}. Since the evolution of an FCA
is an automorphism, the only possible choice for the support algebra $\EC$
is given by $\EC\simeq\M{1}{1}$, i.e. the full matrix graded
algebra. Therefore the evolution consists of a trivial embedding of
the full matrix graded algebra in the central cell, corresponding to a
local unitary transformation (see \Eq\eqref{eq:unit}).

Let us now focus on case~\ref{it:ev}. This is analogous to the qubit case in
\cite{schumacher2004reversible}, with the addition of imposing the
parity superselection rule stated in Theorem~\ref{thm:Parity}. The
results are summarized in the following proposition proved in
Appendix~\ref{sec:proof}.

\begin{proposition}\label{prop:Werner}
  Let $\tT_0: \Aqloc\rightarrow\Aqloc$ be the local rule of a FCA
  having $\EL$ and $\ER$ both generated by $Z$. 
Denote by
\begin{align}
   G_{\phi_k} = (C_{\phi_k}\gt I_3)(I_1\gt C_{\phi_k}),
\end{align} 
the unitary operator which implements the controlled-phase gate
\begin{align}
\label{eq:dephasing}
  C_{\phi_k} =\begin{pmatrix} 1 &0&0&0\\0&1&0&0\\0&0&1&0\\0&0&0&e^{i\phi_k}\end{pmatrix}, \qquad \phi_k \in [0,2\pi),
\end{align}
diagonal in the $Z$-basis. Then the action of $\tT_0$ is given by
\begin{equation}
	\label{eq:ev}
		\begin{aligned}
	\tT_0(\xi)&= G^{\dag}_{\phi} (I \gt \mathscr{U}^{(\theta,n)}(\xi) \gt I)G_{\phi},
		\end{aligned}
\end{equation}
for some local unitary $\mathscr{U}^{(\theta,n)}$.
\end{proposition}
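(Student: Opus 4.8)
The plan is to read off the allowed form of the images $\tT(X_x),\tT(Y_x)$ from the hypothesis $\EL\simeq\ER\simeq\{cI+dZ\mid c,d\in\mathbb C\}$, and then to recognise the resulting ``$Z$-dressing'' on the neighbouring cells as precisely the effect of conjugating a purely central odd operator by the controlled-phase operator $G_\phi$. First I would fix the general shape of the image. Because $\EL$ and $\ER$ are generated by $Z$ and $\tT$ preserves parity, for each odd generator $\xi\in\{X_x,Y_x\}$ the operator $\tT(\xi)$ is odd and has its left and right legs in $\spn\{I,Z\}$; together with \eqref{eq:incl} this gives the expansion
\begin{equation*}
\tT(\xi)=\sum_{a,b\in\{0,1\}}(Z_{x-1})^a\,W^{\xi}_{ab}\,(Z_{x+1})^b,\qquad W^{\xi}_{ab}\in\spn\{X_x,Y_x\}.
\end{equation*}
A dimension/index-one count as in the trivial support case forces $\EC\simeq\M{1}{1}$, so the central coefficients $W^\xi_{ab}$ range over all central odd operators.

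The key computation is the action of the controlled-phase conjugation on a single mode operator. Writing $N_x=(I-Z_x)/2$ one has $C_\phi=\exp(i\phi N_xN_{x+1})$ between adjacent cells, and a direct evaluation on the occupation basis gives the operator identity
\begin{equation*}
C_\phi^\dag\,\phi_x\,C_\phi=\phi_x\,e^{i\phi N_{x+1}}=\phi_x\left(\tfrac{1+e^{i\phi}}{2}I+\tfrac{1-e^{i\phi}}{2}Z_{x+1}\right).
\end{equation*}
Thus conjugation by $C_\phi$ dresses a central mode operator by a factor lying exactly in $\spn\{I,Z\}$ on the neighbour. Consequently the operator $G_\phi^\dag(I\gt\mathscr{U}^{(\theta,n)}(\xi)\gt I)G_\phi$, with $G_\phi=(C_\phi\gt I_3)(I_1\gt C_\phi)$, reproduces an operator of precisely the $Z$-dressed form displayed above, which establishes one inclusion.

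For the converse I would impose the defining constraints of an FCA on the coefficients $W^\xi_{ab}$: self-adjointness, the intracell CAR (namely $\acomm{\tT(X_x)}{\tT(Y_x)}=0$ and the normalisation of $\acomm{\tT(X_x)}{\tT(X_x)}$), and the intercell graded-commutation relations \eqref{eq:condcomm} and \eqref{eq:grc}. Using Lemma~\ref{lem:diffpar} and Corollary~\ref{cor:par}, these relations force the dressing to factorise, $W^\xi_{ab}=\lambda_a\rho_b\,W^\xi$ with phase factors $\lambda_a,\rho_b$, while homogeneity pins a single angle $\phi$ shared by the left and right bonds. Comparison with the $C_\phi$-identity then fixes $\phi$ and identifies the undressed central operator $W^\xi$ as the image $\mathscr{U}^{(\theta,n)}(\xi)$ of a single-mode automorphism of $\M{1}{1}$, which is therefore a conjugation by some $U(\theta,n)$ as in \eqref{eq:unit}; this yields \eqref{eq:ev}.

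The main obstacle is the sign bookkeeping of the graded tensor product in the intercell anticommutation relations. Unlike the qubit analysis of Ref.~\cite{schumacher2004reversible}, the Fermionic signs couple the left and right dressings, and one must track them carefully to confirm that a \emph{single} controlled-phase angle --- rather than an independent diagonal gate on each bond --- is both necessary and compatible with parity superselection (Theorem~\ref{thm:Parity}). Establishing the factorisation of $W^\xi_{ab}$ and the homogeneity of $\phi$ under these signs is the substantive part of the argument; the final substitution giving \eqref{eq:ev} is then routine.
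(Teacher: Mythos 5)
Your setup is sound and matches the paper's strategy in outline: expand $\tT(\xi)$ with left and right legs in the algebra generated by $Z$ (your $Z^a\,W^{\xi}_{ab}\,Z^b$ expansion is the same as the paper's $\sum_{i,j}\ketbra ii\gt \mathcal A_{i,j}(\xi)\gt\ketbra jj$ after the basis change $\ketbra ii=(I\pm Z)/2$), verify that conjugation by $C_\phi$ produces exactly this kind of dressing (your identity $C_\phi^\dag\phi_xC_\phi=\phi_x e^{i\phi N_{x+1}}$ is correct), and then argue that the FCA constraints force the dressing to collapse to a single controlled-phase angle. The problem is that the entire content of the proposition lives in the step you defer: the factorisation $W^{\xi}_{ab}=\lambda_a\rho_b W^{\xi}$ and the reduction to one angle $\phi$. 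You assert this follows from Lemma~\ref{lem:diffpar} and Corollary~\ref{cor:par}, but those results only classify the possible support algebras $\EL,\ER$ --- which is already your hypothesis --- and give no handle on how the four central coefficients are related to one another. The actual derivation requires two further ingredients that your proposal never supplies: (i) for each fixed $(a,b)$ the pair $(W^{X}_{ab},W^{Y}_{ab})$ is itself a CAR representation on the central cell, hence $W^{\xi}_{ab}=U_{ab}\,\xi\,U_{ab}^\dag$ for unitaries $U_{ab}$ of definite parity; and (ii) the intercell graded-commutation relation, applied to carefully chosen test operators (e.g.\ $\eta=V_b^\dag\ketbra{m'}{b'}V_b$ so as to kill one side of the identity and extract individual matrix elements), forces all relative unitaries $U_{ab}U_{a'b'}^\dag$ to be simultaneously diagonal in the $Z$-basis. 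Only after that does the phase bookkeeping ($\phi_{11}=2\phi_{10}$, $\phi_{10}=\phi_{01}$) reduce everything to one parameter and yield the product form $U_{ab}=Q_\phi^aQ_\phi^b$.

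A second, smaller issue: your ansatz $W^{\xi}_{ab}=\lambda_a\rho_b W^{\xi}$ with \emph{scalar} phase factors is only correct when $\xi$ is a non-self-adjoint mode operator such as $\phi_x$ (where diagonal conjugation is indeed multiplication by a phase); for a general self-adjoint odd $\xi=aX+bY$ the dressing acts by conjugation, $W^{\xi}_{ab}=U_{ab}\xi U_{ab}^\dag$, and the two components $\ketbra01$, $\ketbra10$ acquire opposite phases. If you work throughout with $\phi_x,\phi_x^\dag$ you must then separately impose self-adjointness and the full CAR, which is exactly the sign bookkeeping you flag as "the substantive part" without carrying it out. As written, the proposal is a correct roadmap to the paper's own proof rather than a proof: the one inclusion you do establish (that the controlled-phase form is consistent with the $Z$-dressed expansion) is the easy direction, and the converse is left as an assertion.
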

Finally, case~\ref{it:o} is studied in the following theorem and gives rise
to a FCA, denoted \emph{forking automaton} , with no ungraded
counterpart. The proof of the theorem is in Appendix
\ref{sec:proof}.

\begin{theorem}(Forking automaton)\label{thm:main} Let
  $\tT_0: \Aqloc\rightarrow\Aqloc$ be the local rule of a FCA having
  $\EL$ and $\ER$ generated by two odd anticommuting operators. Then
  there exists a pair of odd anticommuting operators $(\xi,\eta)$ such
  that the action of $\tT_0$ over $(\xi,\eta)$ can be expressed as:
\begin{equation}\label{eq:forking}
\begin{aligned}
\tT_0(\eta)&= X\gt I \gt I,\\
\tT_0(\xi)&=I\gt I \gt Y.
\end{aligned}
\end{equation}
The FCA having the above local rule can also be written as 
$\tT=\tilde\tT\circ\tU$ where $\tU(O_x)=\tU_x^{(\theta,n)}(O_x)$ is such that 
$\tU^{(\theta,n)}(\xi)=X$ and $\tU^{(\theta,n)}(\eta)=Y$,
while
\begin{equation}
	\tilde\tT_0: (X_{x},Y_{x}) \mapsto (Y_{x-1},X_{x+1}).
\end{equation}
We call $\tT$ a \emph{forking} FCA.
\end{theorem}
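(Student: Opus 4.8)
The plan is to show that, in case~\ref{it:o}, the image $\tT_0(\A_0)$ lives entirely on the two outer cells and ``forks'' into a single Majorana on each side; the decomposition $\tT=\tilde\tT\circ\tU$ then drops out by stripping off a local change of basis. By Corollary~\ref{cor:par} I may fix, via a local unitary, generators so that $\EL$ is generated by $X_{x-1}$ and $\ER$ by $Y_{x+1}$, with $\acomm{X_{x-1}}{Y_{x+1}}=0$ in accordance with the graded-commutation $\gc{\EL}{\ER}=0$ of \Eq\eqref{eq:grc}. Since each of $\EL,\ER$ is a one-generator Clifford algebra $\Cl{1}{0}$ of dimension $2$, the inclusion \eqref{eq:incl} reads $\tT_0(\A_0)\subseteq\EL\gt\EC\gt\ER$, and the whole argument reduces to proving that the central support is trivial, $\EC\simeq\mathbb C I$.

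Establishing $\EC\simeq\mathbb C I$ is the main obstacle. I would extract it from the neighbour graded-commutation $\gc{\tT_0(\A_0)}{\tT_1(\A_1)}=0$ required by Theorem~\ref{thm:locglo}, where $\tT_1=\tau_1\tT_0\tau_1^{-1}$. Applying Lemma~\ref{thm:suppcomm} to the splitting $\mathsf B_L=\A_{x-1}$, $\mathsf B_C=\A_x\gt\A_{x+1}$, $\mathsf B_R=\A_{x+2}$, the supports of $\tT_0(\A_0)$ and $\tT_1(\A_1)$ on the overlap $\A_x\gt\A_{x+1}$ must graded-commute; these are contained in $\EC\gt\ER$ and in $\tau_1(\EL)\gt\tau_1(\EC)=\langle X_x\rangle\gt\tau_1(\EC)$ respectively. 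I would then combine this with the parity obstruction of Lemma~\ref{lem:diffpar}: a nontrivial \emph{even} element of $\EC$ cannot graded-commute with the shifted left generator $X_x$, so it is excluded, and the only surviving possibility is an odd element of $\EC$ that anticommutes with $X_x$ while remaining compatible with the analogous constraint coming from the other neighbour $\tT_{-1}$. Ruling out this residual odd case is the delicate point; here I would invoke that $\tT_0$ is an injective homomorphism onto $\M{1}{1}$, so that a Majorana-counting/dimension bookkeeping closes the argument: the image carries exactly two anticommuting odd generators, $\EL$ and $\ER$ already account for one each, and none is left for the centre, forcing $\dim\EC=1$.

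Granting $\EC\simeq\mathbb C I$, the image is $\tT_0(\A_0)=\EL\gt\ER=\langle X_{x-1},Y_{x+1}\rangle\simeq\M{1}{1}$ by dimension. Setting $\eta\coloneqq\tT_0^{-1}(X_{x-1})$ and $\xi\coloneqq\tT_0^{-1}(Y_{x+1})$ yields a pair of odd operators, since $\tT_0$ preserves the grading, and they anticommute because $\acomm{\xi}{\eta}=\tT_0^{-1}(\acomm{X_{x-1}}{Y_{x+1}})=0$; one checks likewise $\xi^2=\eta^2=I/2$. This is exactly \Eq\eqref{eq:forking}.

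Finally, for the decomposition, let $\tU$ be the conjugation automaton of Definition~\ref{def:Conjugation} built from the single-cell unitary $U(\theta,n)\in G\cong\mathbb U(1)\times\mathbb Z_2$ that carries the anticommuting odd pair $(\xi,\eta)$ to the standard pair $(X,Y)$ — such a $U$ exists because any two pairs of normalized anticommuting odd self-adjoint generators of $\M{1}{1}$ are related by an element of $G$. Then $\tilde\tT\coloneqq\tT\circ\tU^{-1}$ is a composition of FCAs, hence itself an FCA, and homogeneous; by Theorem~\ref{thm:locglo} it is fixed by its action on the central cell, which I compute on the generators to be the stated forking shift $\tilde\tT_0:(X_x,Y_x)\mapsto(Y_{x-1},X_{x+1})$ (up to the identification of left and right fixed by the basis conventions above). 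I expect the only genuinely hard part to be the triviality of $\EC$; the remaining steps are routine bookkeeping with the support-algebra lemmas and the classification of single-mode conjugations.
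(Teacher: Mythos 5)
Your skeleton is sound: once one knows that the image of $\A_x$ has no component on the central cell, the orthogonality of the coefficient vectors forces the forking form, and peeling off a single-cell conjugation $U(\theta,n)$ to get $\tT=\tilde\tT\circ\tU$ is routine. The problem is that your argument for the crux --- triviality of $\EC$ --- does not go through. First, a technical point: Lemma~\ref{thm:suppcomm} applied to $\tT_0(\A_0)$ and $\tT_1(\A_1)$ only yields graded-commutation of their supports on the \emph{two-cell} overlap $\A_x\gt\A_{x+1}$; you cannot take $\m B_C=\A_x$ because $\tT_0(\A_0)$ also has support on $\A_{x+1}$, which would then have to sit inside $\m B_L$ and collide with $\m B_R$. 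The elements of the overlap supports are mixtures such as $c_\xi\, I\gt Y + D_\xi\gt I$, so the elementwise statement ``$\EC$ graded-commutes with $\langle X_x\rangle$'' (and hence your parity exclusion of the even part via Lemma~\ref{lem:diffpar}) is not immediate and needs extra work.

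Second, and more seriously, the ``Majorana-counting/dimension bookkeeping'' cannot rule out an odd central component. The inclusion \Eq\eqref{eq:incl} gives only the lower bound $\dim\EL\,\dim\EC\,\dim\ER\ge\dim\A_x$, which is perfectly consistent with $\dim\EC>1$; and the two odd generators of the image need not be localized on single cells, so ``$\EL$ and $\ER$ already account for one generator each'' is not a valid deduction. Concretely, the a priori admissible form surviving all the self-consistency constraints is $\tT(\xi)=b_\xi\,X\gt I\gt I+c_\xi\,I\gt I\gt Y+I\gt D_\xi\gt I$ with $D_\xi$ odd and $b_\xi c_\xi=0$ (Proposition~\ref{prop:form} in the appendix): here $\EL$ and $\ER$ are still odd-generated while $\EC$ is nontrivial, so your hypothesis set does not exclude it by counting. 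Eliminating this case is exactly the content of Lemma~\ref{lem:nonhopiunomi}, which requires imposing the graded commutation relations with \emph{both} translates, on all pairs of generators, and running a case analysis that ultimately shows $D$ would have to anticommute with both $X$ and $Y$ and hence vanish. You gesture at this constraint (``anticommutes with $X_x$ \ldots and the analogous constraint from $\tT_{-1}$'') but then abandon it for the counting argument; carried out carefully, that two-sided anticommutation route is essentially how the paper closes the case, but as written your proof asserts the hard step rather than proving it.
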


As for the Majorana shifts~\ref{def:Majoranashift}, also the forking
FCA splits odd Fermionic degrees of freedom at site $x$
and spreads them on neighbouring sites in a way that qubits automata cannot
achieve. While Majorana shifts (as well as shifts) cannot be locally
implemented (they do not have unit index), the forking
FCAs~\Eq\eqref{eq:forking} are $\tM$-implementable according to
Definition \ref{def:implemloc}. In the following corollary, proved in Appendix \ref{sec:Cor},
we provide an explicit MS for the forking FCA:

\begin{corollary}\label{cor:Fork}
  The forking automaton $\tT_0$~\Eq\eqref{eq:forking} can be
  implemented through the following Margolus Partitioned Scheme:
\begin{align}
  &\tT_0=\tM_2\circ\tM_1\label{eq:marpasc}\\
  &\tM_2(\cdot)=\prod_x(U\gt U) M_{2x}(\cdot)\prod_y M_{2y}^\dag (U\gt U)^\dag	\\ &\tM_1(\cdot)=\prod_x M_{2x+1}(\cdot)\prod_y M^\dag_{2y+1},
\end{align}
where
\begin{align*}
M_{x}=(X\gt Z)e^{-\frac{\pi}{4}Y_{x}\gt X_{x+1}},
\end{align*}
and $U=U(\theta,n)$ as in \Eq\eqref{eq:unit}.
\end{corollary}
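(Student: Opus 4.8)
The plan is to verify the claimed scheme by a direct Heisenberg-picture computation: first determine the conjugation action of a single gate $M_x=(X\gt Z)e^{-\frac{\pi}{4}Y_x\gt X_{x+1}}$ on the four generators $X_x,Y_x,X_{x+1},Y_{x+1}$ spanning $\mathsf A_x\gt\mathsf A_{x+1}$, and then stack and compose the two layers. The crucial elementary fact is that, writing $H\coloneqq Y_x\gt X_{x+1}$, the graded multiplication rule $(A\gt B)(C\gt D)=(-1)^{g(B)g(C)}(AC\gt BD)$ gives $H^2=-(I\gt I)$, so that $e^{\theta H}=\cos\theta\,I+\sin\theta\,H$ is a genuine one-parameter rotation and $e^{-\frac{\pi}{4}H}=\tfrac{1}{\sqrt2}(I-H)$ is unitary. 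I would then sort the generators according to whether they (ordinarily) commute or anticommute with $H$: using the same sign rule one finds that $X_x$ and $Y_{x+1}$ commute with $H$, whereas $Y_x$ and $X_{x+1}$ anticommute with it. Conjugation by $e^{-\frac{\pi}{4}H}$ therefore fixes $X_x,Y_{x+1}$ and rotates the pair $(Y_x,X_{x+1})$ into one another up to a sign. A short further computation with the involution $X\gt Z$ (which is Hermitian, squares to $I$, and conjugates only $Y_x\mapsto-Y_x$) then yields the net ``Majorana swap''
\begin{equation*}
M_x:\quad X_x\mapsto X_x,\quad Y_x\mapsto X_{x+1},\quad X_{x+1}\mapsto Y_x,\quad Y_{x+1}\mapsto Y_{x+1}.
\end{equation*}

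Next I would assemble the layers. The layer $\tM_1$ consists of the gates $M_{2x+1}$ acting on the pairs $\{2x+1,2x+2\}$, and $\tM_2$ of the gates $M_{2x}$ acting on $\{2x,2x+1\}$ (dressed by the local unitaries $U\gt U$); this is a depth-$2$ nearest-neighbour circuit and hence a bona fide MS in the sense of Definition~\ref{def:Marg}. For a fixed cell, say $0$, I would track each generator through the two layers. In $\tM_1$ only the gate $M_{-1}$ (with $0$ in the role of the right cell $x+1$) touches cell $0$, giving $X_0\mapsto Y_{-1}$ and $Y_0\mapsto Y_0$. Applying $\tM_2$, the image $Y_{-1}$ is the right-type generator of $M_{-2}$ and is hence left invariant, while $Y_0$ is mapped by $M_0$ (with $0$ as the left cell $x$) to $X_1$. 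Thus the pure-swap part of $\tM_2\circ\tM_1$ acts as $X_0\mapsto Y_{-1}$, $Y_0\mapsto X_1$, i.e.\ exactly as $\tilde\tT_0:(X_x,Y_x)\mapsto(Y_{x-1},X_{x+1})$ from Theorem~\ref{thm:main}.

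Finally, I would reinstate the local-unitary dressing. Since the factors $U\gt U$ in $\tM_2$ appear as the outermost conjugation by $U(\theta,n)$ on every cell, they assemble precisely into the conjugation automaton $\tU$ of \eqref{eq:unit}, so that $\tM_2\circ\tM_1=\tilde\tT\circ\tU$, which is the forking rule \eqref{eq:forking} of Theorem~\ref{thm:main}; consistency of the placement of the $U$'s with the homogeneity of $\tilde\tT$ is checked by commuting the single-cell conjugations through the shift-type map. The step I expect to be the main obstacle is purely bookkeeping, but delicate: keeping the graded-tensor signs correct in every product—these signs are exactly what distinguish the Fermionic swap from the qubit \textsc{swap} and are responsible for the asymmetric pattern of the $M_x$ action—and verifying on the infinite lattice that in each layer every cell is acted on by one and only one gate, so that the partition conditions of Definition~\ref{def:Marg} are genuinely met and the composition of the two layers produces no spurious cross terms.
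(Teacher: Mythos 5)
Your proposal is correct and follows essentially the same route as the paper's proof: an explicit Heisenberg-picture computation showing that each gate $M_x$ implements the Majorana swap $Y_x\leftrightarrow X_{x+1}$ (fixing $X_x,Y_{x+1}$), followed by tracking a single cell's generators through the two staggered layers to recover $\tilde\tT_0:(X_x,Y_x)\mapsto(Y_{x-1},X_{x+1})$, and finally absorbing the conjugation $\tU$ into the $U\gt U$ dressing of the second layer. The only cosmetic difference is that you organize the single-gate computation via the rotation generator $H=Y_x\gt X_{x+1}$ with $H^2=-I$ and a commutant/anticommutant split, whereas the paper expands $e^{-\frac{\pi}{4}Y\gt X}=\tfrac{1}{\sqrt2}(I-Y\gt X)$ and multiplies out directly; both land on the same sign bookkeeping.
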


The characterization of index one FCA over $\M{1}{1}$ is now
complete. It shows that all the index one FCA are given either by
the QCA over qubits obtained in \cite{schumacher2004reversible} (see
Proposition \ref{prop:Werner}) or by a forking FCA (see
Theorem \ref{thm:main}). Furthermore, it was proved that forking
FCAs are $\tM$-implementable (see Corollary \ref{cor:Fork}). To
conclude this analysis we state the following results about
one-dimensional FCA over $\M{1}{1}$, which provides a stronger version
of the statements contained in Section \ref{sec:stabequiv} (see also
Remark \ref{rem:n=1}).
\begin{theorem}
  Every index one FCA over $\M{1}{1}$ is $\tM$-implementable
  according to Definition \ref{def:implemloc}. Moreover, two FCA
  over $\M{1}{1}$ with the same index are $\tF$-equivalent according
  to Definition \ref{def:equiv}.
\end{theorem}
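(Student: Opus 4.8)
The plan is to prove the two assertions separately, leaning on the explicit classification of Section~\ref{sec:Class} for the first and on the multiplicativity of the index for the second.

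For the $\tM$-implementability, I would argue by exhausting the three cases singled out after the lemma establishing $\EL\simeq\ER$, since together they cover every index one FCA over $\M{1}{1}$ treated in this section. In case~\ref{it:triv} the support algebras are trivial, $\EC\simeq\M{1}{1}$, and $\tT$ reduces to a local unitary $\mathscr{U}$; this is MS-implementable by placing the single-cell unitaries in the first layer, e.g. $M^{(1)}_{2x}=U_{2x}\gt U_{2x+1}$, and setting the second layer to the identity. In case~\ref{it:ev} Proposition~\ref{prop:Werner} expresses $\tT$ as conjugation by $W=\mathscr{U}\cdot\prod_x C_{\phi}$, a product of local unitaries and nearest-neighbour controlled-phase gates; since the $C_\phi$ are diagonal in the $Z$-basis they mutually commute, so I would split them into a layer acting on the even bonds $\{2x,2x+1\}$ and a layer acting on the odd bonds $\{2x+1,2x+2\}$, absorbing the single-cell factors of $\mathscr{U}$ into the odd-bond layer. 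Setting $\tM_1$ to conjugation by $\prod_x C_\phi^{\{2x,2x+1\}}$ and $\tM_2$ to conjugation by $\mathscr{U}\cdot\prod_x C_\phi^{\{2x+1,2x+2\}}$ then gives $\tT=\tM_2\circ\tM_1$. Case~\ref{it:o} is the forking automaton, for which Corollary~\ref{cor:Fork} already exhibits an explicit MS. Hence every such FCA is $\tM$-implementable.

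For the second assertion, given $\tT,\tS$ over $\M{1}{1}$ with $\ind{\tT}=\ind{\tS}$, I would use the multiplicative property of Lemma~\ref{lem:prop} to obtain $\ind{\tT\circ\tS^{-1}}=\ind{\tT}\ind{\tS}^{-1}=1$. Corollary~\ref{cor:equivindex} then produces an FDFC $\tF$ with $\tT\circ\tS^{-1}=\tF$, whence $\tT=\tF\circ\tS$, i.e. $\tT$ and $\tS$ are $\tF$-equivalent in the sense of Definition~\ref{def:equiv}.

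I expect the main obstacle to be the first assertion rather than the second: one must verify that cases~\ref{it:triv} and~\ref{it:ev} genuinely collapse to a \emph{depth two} nearest-neighbour circuit, i.e. to a Margolus scheme, and not merely to a deeper FDFC. This is where the commutativity of the controlled-phase gates and the fact that each cell carries a single Fermionic mode (so that no regrouping into supercells is required, cf.\ Remark~\ref{rem:n=1}) are essential; concretely one checks that each cell lies in exactly one bond of each layer, so that the phases and local unitaries assemble into two well-defined layers $\tM_1,\tM_2$. A secondary point to keep in mind is that, in the second assertion, $\tT\circ\tS^{-1}$ need not be a nearest-neighbour FCA, so the MS-strengthening of the first assertion does not transfer to it directly; the $\tF$-equivalence is therefore drawn from the general ancilla-removal result of Corollary~\ref{cor:equivindex}, which guarantees an FDFC (recast as an MS only after a possible regrouping of cells).
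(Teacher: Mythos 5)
Your proposal is correct and follows essentially the same route as the paper: the first assertion is read off from the case-by-case classification of unit-index FCAs over $\M{1}{1}$ (local unitaries, controlled-phase automata, and the forking automaton, each of which admits a Margolus scheme), and the second follows from the multiplicativity of the index together with the $\tF$-implementability of index one FCAs. Your added care in spelling out the two-layer splitting of the controlled-phase gates and in noting that $\tT\circ\tS^{-1}$ need not be nearest-neighbour (so the second claim must rest on the general ancilla-removal corollary rather than on the classification) only makes explicit what the paper leaves implicit.
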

\begin{proof}
  The first statement is immediately obtained by the classification of
  index one FCA over $\M{1}{1}$ performed before. On the other hand,
  the second statement directly follows from \Eq\eqref{eq:indmult} in
  Lemma \ref{lem:prop}. Indeed, two FCA with the same index are
  related by an index one FCA, which has been proven to be
  $\tF$-implementable. Therefore, they are $\tF$-equivalent through
  a scheme as in \Eq\eqref{eq:marpasc}.
\end{proof}

\section{Outlook}
One dimensional quantum cellular automata, representing the dynamics of chains
of locally interacting quantum systems (e.g.~ qubits in the simplest case),
have been extensively studied under various aspects. In particular, an interesting
facet of the theory of QCAs is based on their local algebraic 
tensor product structure. Operators localised on disjoint regions of
the lattice commute with each other and the evolution spreads them over finite 
regions, preserving commutation relations. The analogous scheme for local Fermionic
modes, governed by CAR algebras, is more tricky, since the
simple tensor product structure has to be replaced by graded tensor product. 
This issue is typically bypassed referring to concrete representations of 
CAR algebras on qubits (i.e. the Jordan-Wigner representation), which 
makes it possible to use the standard concepts of quantum information theory, 
subsequently imposing parity superselection rule. At first sight this approach seems 
to confine Fermionic cellular automata to a subset of the admissible
local quantum dynamics.

On the contrary, the classification of local dynamics for Fermionic
modes~\cite{PhysRevB.99.085115} brought out core-level
discrepancies with respect to its standard quantum counterpart. In
both cases, cellular automata are divided in equivalence classes labelled by
an index, representing the unique topological invariant for one
dimensional lattices. The values of the index, which encode the ratio of 
information flow of a given cellular automaton in the two directions on the 
lattice (right/left), span all the rational numbers in the quantum case, 
while for Fermionic systems it can take irrational values. 
This is reminiscent of the local graded-commuting structure of Fermionic algebras, 
which allows for moving (Majorana shifts) odd Fermionic degrees of freedom that 
lack a qubits analogue.

The above classification is given modulo finite depth Fermionic circuits
for QCAs~\cite{Werner-index}, and modulo finite depth Fermionic circuits
with the addition of ancillae for FCAs~\cite{PhysRevB.99.085115}. Via
ancilla removal~\cite{Freedman-Haah-Hastings-group-structure}, we
proved that also in the Fermionic case the equivalence relation boils
down to the quantum one, namely two FCAs have the same index if and
only if they are connected by a Fermionic circuit of finite depth.

As an outcome of Fermionic index properties, every FCA is
equivalent to a unit index automaton, which has balanced right/left
flow of information, composed with a Fermionic shift, introducing an
arbitrary degree of imbalance. Accordingly, an exhaustive
parametrization of Fermionic automata follows from characterizing all
unit index representatives. This is achieved in this paper, showing a
discrepancy between quantum and Fermionic automata already within
locally implementable instances. In other words, there are finite depth 
Fermionic circuits that cannot be implemented as finite depth qubit circuits.
In Ref.~\cite{schumacher2004reversible} is was shown that every QCA with
circuital realization is made of single system and controlled-phase
gates. On the other hand, Fermionic modes allow for a localizable
automaton that does not admit such a realization.

\section{Aknowledgments}
PP and PM acknowledge financial support from European Union—Next
Generation EU through the MUR project Progetti di Ricerca d’Interesse
Nazionale (PRIN) QCAPP No. 2022LCEA9Y. AB acknowledges financial
support from European Union—Next Generation EU through the MUR project
Progetti di Ricerca d’Interesse Nazionale (PRIN) DISTRUCT
No. P2022T2JZ9. LT acknowledges financial support from European Union—
Next Generation EU through the National Research Centre for HPC, Big
Data and Quantum Computing, PNRR MUR Project
CN0000013-ICSC. A. T. acknowledges the financial support of Elvia and
Federico Faggin Foundation (Silicon Valley Community Foundation
Project ID$\#2020-214365$).

\bibliographystyle{unsrt}
\bibliography{FQCA}

\appendix

\section{Proof of Proposition~\ref{prop:Werner}}\label{sec:proof}
We want now to prove Proposition \ref{prop:Werner}. For convenience
of exposition, the proof will be splitted in three parts. The first
part, gathered in Lemma \ref{lem:resemb}, shows how the
graded-commutation relation reduces to a decomposition of the FCA
similar to the ungraded QCA obtained in
\cite{schumacher2004reversible}. Then we prove a Lemma which is the
Fermionic analog of Lemma 9 in \cite{schumacher2004reversible}, and
finally we show that this formal resemblance actually leads to the
same evolutions of the QCA in the ungraded case, constrained by the
Parity Superselection Rule.
\begin{lemma}\label{lem:resemb}
  Consider a FCA $\tT$ whose support algebras $\EL$ and $\ER$ are
  both generated by an even operator, say $Z$. Given two arbitrary odd
  operators $\xi, \eta$, such that $\acomm{\xi}{\eta}=0$, then the
  form of the evolution of $\tT$ is given by:
\begin{align}
&\tT(\xi)=\sum_{i,j=0}^1 \ketbra{i}{i}\gt U_{i,j}(\xi)U_{i,j}^\dag\gt \ketbra{j}{j},\label{eq:Tform} 
\end{align}
where $U_{i,j}\in SU^{n}(2)$, $n=0,1$, satisfy the following constraint $\forall k,l=0,1$
\begin{align}
  &\sum_{i,j=0}^1\gc{ U_{k,i}(\xi)U_{k,i}^\dag\gt \ketbra{j}{j}}{\ketbra{i}{i}\gt U_{j,l}(\eta)U_{j,l}^\dag}=0.\label{eq:Tcomm}
\end{align}
\end{lemma}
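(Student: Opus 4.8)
The plan is to extract the block form \eqref{eq:Tform} directly from the support-algebra structure, to promote the blocks to inner automorphisms by exploiting that $\tT$ is a $*$-automorphism, and finally to read off the constraint \eqref{eq:Tcomm} from the graded-commutation of neighbouring cells via Lemma~\ref{thm:suppcomm}. First I would set up the decomposition. The hypothesis that $\EL$ and $\ER$ are generated by the even operator $Z$ means $\EL=\ER=\{cI+dZ\mid c,d\in\mathbb C\}$, which as a $C^*$-algebra is spanned by the two orthogonal \emph{even} projectors $\ketbra{0}{0}$ and $\ketbra{1}{1}$. Since $\tT(\m A_x)\subseteq\EL\gt\EC\gt\ER$ by \Eq\eqref{eq:incl}, every image $\tT(\xi)$ expands uniquely as $\sum_{i,j}\ketbra{i}{i}\gt\Phi_{ij}(\xi)\gt\ketbra{j}{j}$ with coefficients $\Phi_{ij}(\xi)\in\EC$, uniqueness following from the linear independence of the projectors. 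This defines linear maps $\Phi_{ij}:\m A_x\to\EC$, and the task is to identify them with conjugations.

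The crucial point is that the flanking projectors are even, so no Koszul signs appear when two decomposed images are multiplied in the graded tensor product; using $\ketbra{i}{i}\ketbra{k}{k}=\delta_{ik}\ketbra{i}{i}$ the cross terms collapse block by block. Hence the homomorphism property $\tT(OO')=\tT(O)\tT(O')$ forces $\Phi_{ij}(OO')=\Phi_{ij}(O)\Phi_{ij}(O')$, while $\tT(O^\dag)=\tT(O)^\dag$ gives $\Phi_{ij}(O^\dag)=\Phi_{ij}(O)^\dag$ and $\tT(I)=I$ gives $\Phi_{ij}(I)=I$. Thus each $\Phi_{ij}$ is a unital $*$-homomorphism of $\m A_x\simeq\M{1}{1}$. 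Since $\M{1}{1}$ is simple, a unital $\Phi_{ij}$ is injective, hence an isomorphism onto its image; this pins down $\EC\simeq\M{1}{1}$ and makes each $\Phi_{ij}$ a $*$-automorphism. Evenness of the projectors further forces $\Phi_{ij}$ to preserve the grading (the odd $\xi$ goes to an odd operator and even operators to even ones). A grading-preserving $*$-automorphism of $\M{1}{1}$ is inner and implemented by a \emph{homogeneous} unitary, which up to phase is $U(\theta_{ij},n_{ij})=e^{i\theta_{ij}Z}X^{n_{ij}}$ of \Eq\eqref{eq:unit}; this is the element of $SU^{n}(2)$ appearing in \eqref{eq:Tform}, giving $\Phi_{ij}(\cdot)=U_{ij}(\cdot)U_{ij}^\dag$.

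It then remains to obtain the constraint \eqref{eq:Tcomm}. For a nearest-neighbour FCA the neighbourhoods of cells $x$ and $x+1$ overlap, so Theorem~\ref{thm:locglo} forces $\tT(\m A_x)$ and $\tT(\m A_{x+1})=\tau_1(\tT(\m A_x))$ to graded-commute element-wise. Applying Lemma~\ref{thm:suppcomm} with central algebra $\m A_x\gt\m A_{x+1}$ reduces this to the graded-commutation of the two supports on the overlapping cells $\{x,x+1\}$. Substituting the block form \eqref{eq:Tform} for $\tT(\xi)$ and its $\tau_1$-shift $\tau_1(\tT(\eta))$ for the neighbouring image — the latter flanked by $\ketbra{i}{i}$ on cell $x$ and $\ketbra{j}{j}$ on cell $x+2$ — and collecting the coefficients of the projectors on the non-overlapping cells $x-1$ and $x+2$, labelled $k$ and $l$, leaves exactly the family \eqref{eq:Tcomm} indexed by $k,l$, with the residual sum over the overlap indices $i,j$.

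The step I expect to be the main obstacle is the second paragraph: handling the graded tensor product carefully enough to be sure that the evenness of the flanking projectors genuinely decouples the blocks without sign ambiguities, and then establishing that the admissible inner automorphisms are precisely the grading-preserving ones $U(\theta,n)$ rather than arbitrary elements of $U(2)$ (this is where the single-mode reversibility group $G\cong\mathbb U(1)\times\mathbb Z_2$ enters). By contrast, the decomposition of the first step is immediate from the support-algebra characterization, and the commutation bookkeeping of the last step is routine once Lemma~\ref{thm:suppcomm} has been invoked.
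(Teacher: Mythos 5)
Your proof is correct, and its central step takes a genuinely different route from the paper's. Both arguments start from the same unique block decomposition $\tT(\xi)=\sum_{i,j}\ketbra{i}{i}\gt\Phi_{ij}(\xi)\gt\ketbra{j}{j}$ dictated by $\EL=\ER=\spn\{I,Z\}$, and both end by extracting \eqref{eq:Tcomm} from the graded commutation of $\tT(\m A_x)$ with its unit translate, projecting onto the outer projectors labelled $k,l$. Where you diverge is in promoting the blocks to conjugations: the paper imposes $\acomm{\tT(\xi)}{\tT(\eta)}=0$, observes that the pairs $(\Phi_{ij}(\xi),\Phi_{ij}(\eta))$ are again representations of the one-mode CAR, and invokes the unitary equivalence of such representations; you instead note that, because the flanking projectors are even, no Koszul signs arise and the block maps $\Phi_{ij}$ inherit the unital $*$-homomorphism property of $\tT$, so that simplicity of $\M{1}{1}$ forces each $\Phi_{ij}$ to be an inner automorphism, implemented by a homogeneous unitary by parity preservation. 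Your version is slightly more economical — it never needs the second generator $\eta$ at this stage and yields $\EC\simeq\M{1}{1}$ as a by-product — whereas the paper's CAR-representation argument is the one that transfers directly to the later analysis of odd support algebras. One further point in your favour: you allow a block-dependent parity $n_{ij}$ of the implementing unitaries, while the paper asserts a fixed $n$ for all blocks without justification at this point; nothing in the lemma requires the stronger claim, since consistency among blocks is only enforced by \eqref{eq:Tcomm} in the subsequent lemma.
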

\begin{proof}
  Since $\EL,\ER$ are generated by $Z$, we can decompose the action on
  an odd operator $\xi$ as:
\begin{equation}\label{eq:decz}
\tT(\xi)=\sum_{i,j=0}^1 \ketbra{i}{i}\gt \mathsf{A}_{i,j}(\xi)\gt \ketbra{j}{j},
\end{equation}
where $\ket{i},\ket{j}$ are eigenstates of $Z$.
To impose the preservation of the CAR, it should hold that  $\{\tT(\xi),\tT(\eta)\}=0$, and thus:
\begin{equation}
\sum_{i,j=0,1} \ketbra{i}{i} \gt \{\mathcal{A}_{i,j}(\xi),\mathcal{A}_{i,j}(\eta)\}\gt \ketbra{j}{j}=0 ,
\end{equation}
which yields to:
\begin{equation}
\{\mathcal{A}_{i,j}(\xi),\mathcal{A}_{i,j}(\eta)\}=0 \hspace{10pt} \forall i,j\in\{0,1\}.
\end{equation}
This means that also $\mathcal{A}_{i,j}(\xi)$ and
$\mathcal{A}_{i,j}(\eta)$ are representation of the CAR. Since every
two representations of the CAR are related through a unitary
transformation we can write $\mathcal{A}_{i,j}(\xi),\mathcal{A}_{i,j}(\eta)$ as a unitary
rotation of $\xi,\eta$ with defined parity, i.e.
\begin{equation}
  \mathcal{A}_{i,j}(\xi)=U_{i,j}\xi U_{i,j}^\dag \hspace{5pt} U_{i,j}\in SU^{n}(2) \hspace{5pt} \forall i,j \in\{0,1\},
\end{equation}
for a fixed $n=0,1$. Clearly, the same holds for $\eta$. We have then:
\begin{equation}
\tT(\xi)=\sum_{i,j=0}^1 \ketbra{i}{i}\gt U_{i,j}(\xi)U_{i,j}^\dag\gt \ketbra{j}{j}. \label{eq:abev}
\end{equation}
On this evolution, we still have to impose the graded commutation
relation $\gc{\tT(\xi)\gt I}{I\gt\tT(\eta)}$. This leads to:
\begin{align*}
&\gc{\tT(\xi)\gt I}{I\gt\tT(\eta)}=\\
&=\sum_{i,j=0}^1\gc{ U_{k,i}(\xi)U_{k,i}^\dag\gt \ketbra{j}{j}}{\ketbra{i}{i}\gt U_{j,l}(\eta)U_{j,l}^\dag}=0,
\end{align*}
$\forall k,l=0,1$ and this concludes the proof.
\end{proof}

\begin{lemma}
  Let $i,j=0,1$ and $\{U_i\}_{i=1}^2,\{V_{j}\}_{j=1}^2$ be two sets of
  unitary $2\times 2$ matrices satisfying:
\begin{equation*}
\sum_{i,j=0}^1\{ U_{i}(\xi)U_{i}^\dag\gt \ketbra{j}{j},\ketbra{i}{i}\gt V_{j}(\eta)V_{j}^\dag\}=0.
\end{equation*}
Then $U_iU_j^\dag,V_iV_j^\dag$ are simultaneously diagonalizable in
the $\ketbra{i}{i}$ eigenstate basis of $Z$.
\end{lemma}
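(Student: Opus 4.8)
The plan is to convert the single anticommutator constraint into a rigid pure-tensor identity and then read off the diagonality from the explicit single-mode form of the unitaries. First I would record that, because $\tT$ preserves parity (Theorem~\ref{thm:Parity}), each block $W_i\coloneqq U_i\xi U_i^\dag$ and $H_j\coloneqq V_j\eta V_j^\dag$ is an \emph{odd} operator, hence off-diagonal in the $Z$-eigenbasis $\{\ket 0,\ket 1\}$. Writing $P_j\coloneqq\ketbra jj$ and $Q_i\coloneqq\ketbra ii$ for the (even) projectors, I would expand $\acomm{W_i\gt P_j}{Q_i\gt H_j}$ with the graded product rule: since $P_j,Q_i$ are even while $W_i,H_j$ are odd, the two orderings differ by the sign $(-1)^{g(H_j)g(W_i)}=-1$, so this term equals $W_iQ_i\gt P_jH_j-Q_iW_i\gt H_jP_j$. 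Summing over $i,j$ and using bilinearity of $\gt$, the two tensor legs decouple and the constraint collapses to the single identity
\begin{equation*}
\Big(\textstyle\sum_i W_iQ_i\Big)\gt\Big(\textstyle\sum_j P_jH_j\Big)=\Big(\textstyle\sum_i Q_iW_i\Big)\gt\Big(\textstyle\sum_j H_jP_j\Big).
\end{equation*}

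Next I would exploit the rigidity of this equality. Since $\xi,\eta\neq 0$ and conjugation is faithful, none of the four factors vanishes, and two equal simple tensors of nonzero operators must be proportional leg-by-leg; hence there is a \emph{single} scalar $\mu$ with $\sum_iQ_iW_i=\mu\sum_iW_iQ_i$ and $\sum_jH_jP_j=\mu^{-1}\sum_jP_jH_j$. Writing out the $2\times2$ entries, the diagonal entries vanish by oddness and are automatically consistent, while the off-diagonal entries give $\mathrm{offdiag}(W_1)=\bar\mu\,\mathrm{offdiag}(W_0)$ and $\mathrm{offdiag}(H_1)=\bar\mu\,\mathrm{offdiag}(H_0)$ with the same $\mu$. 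Because conjugation preserves $|\mathrm{offdiag}|$ (equivalently $W_i^2\propto I$), comparing moduli forces $|\mu|=1$.

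Finally I would translate these phase relations back to the unitaries. A parity-definite single-mode unitary has, up to an irrelevant global phase, the normal form $U(\theta,n)=e^{i\theta Z}X^{n}$ of \Eq\eqref{eq:unit}, and a direct computation gives $U_i\xi U_i^\dag=e^{2i\theta_iZ}X$ (for $\xi=X$), so $U_i\xi U_i^\dag$ depends on $U_i$ only through the diagonal factor $e^{i\theta_iZ}$. Hence $U_0U_1^\dag=e^{i(\theta_0-\theta_1)Z}$ is diagonal in the $Z$-eigenbasis, and the off-diagonal relation above fixes $\theta_0-\theta_1$ in terms of $\mu$; the identical argument applied to $\eta$ yields $V_0V_1^\dag$ diagonal, while the common $\mu$ guarantees that both are diagonal in the \emph{same} basis, giving simultaneous diagonalizability (the cases $i=j$ being trivial).

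The hard part will be the two middle steps: keeping the graded signs straight so that the anticommutator—rather than the graded commutator—collapses to exactly one pure-tensor identity, and then justifying the leg-by-leg proportionality together with the fact that the constant $\mu$ is genuinely shared between the $\xi$- and $\eta$-sectors. It is precisely this shared phase, read through the parity-definite normal form, that produces a common-basis (\emph{simultaneous}) diagonalization rather than two independent ones.
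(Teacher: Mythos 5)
Your opening reduction coincides with the paper's: after tracking the graded signs, the constraint factorizes into the single pure--tensor identity $\bigl(\sum_i W_iQ_i\bigr)\gt\bigl(\sum_j P_jH_j\bigr)=\bigl(\sum_i Q_iW_i\bigr)\gt\bigl(\sum_j H_jP_j\bigr)$, which is exactly the equation $\sum_{i,j}E_i\ketbra{i}{i}\gt\ketbra{j}{j}F_j=\sum_{i,j}\ketbra{i}{i}E_i\gt F_j\ketbra{j}{j}$ from which the paper starts. From there, however, your argument has a genuine gap: you treat $\xi$ and $\eta$ as \emph{fixed}, and for a fixed pair the hypothesis simply does not imply the conclusion. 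Explicitly, take $\xi=Y$, $\eta=X$, $U_0=U_1=V_0=I$ and $V_1=X$. Then $W_0=W_1=Y$ and $H_0=H_1=X$, the factorized identity reads $Y\gt X=Y\gt X$ and the hypothesis is satisfied, yet $V_0V_1^\dag=X$ is anti-diagonal. So the leg-by-leg proportionality (which in this example just gives $\mu=1$) extracts strictly less information than the lemma asserts; note also that the entrywise relations it yields are the crossed ones $(W_0)_{01}=\mu(W_1)_{01}$, $(W_1)_{10}=\mu(W_0)_{10}$, not $\mathrm{offdiag}(W_1)=\bar\mu\,\mathrm{offdiag}(W_0)$. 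The lemma is only true when the constraint is imposed for (a spanning set of) odd operators, and this variation of $\xi,\eta$ is precisely the engine of the paper's proof: it sandwiches the factorized identity between projectors $\ketbra{l}{l}\gt\ketbra{m}{m}$ and $\ketbra{a}{a}\gt\ketbra{b}{b}$ and then chooses the rank-one, non-self-adjoint $\eta=V_b^\dag\ketbra{m'}{b'}V_b$ with $m'\neq m$ to annihilate one side, which directly forces $\bra{m}V_mV_b^\dag\ket{m'}=0$. In the counterexample above, choosing $\eta=\ketbra{0}{1}$ instead of $X$ immediately produces a contradiction, which is how the degenerate configuration actually gets excluded.

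Your final step compounds the problem rather than repairing it. Writing $U_i=e^{i\theta_i Z}X^{n}$ with a \emph{common} exponent $n$ and concluding $U_0U_1^\dag=e^{i(\theta_0-\theta_1)Z}$ assumes what is to be proved: if $n_0\neq n_1$ the product is anti-diagonal. Worse, since $U_iXU_i^\dag=e^{2i\theta_iZ}X$ is blind to $n_i$, no amount of information about the conjugates $U_i\xi U_i^\dag$ for the single choice $\xi=X$ can ever determine the relative parity of $U_0$ and $U_1$; the shared scalar $\mu$ constrains only the phases $\theta_i$. To close the gap you must let $\xi$ and $\eta$ range over enough odd operators (the rank-one operators $\ketbra{0}{1}$ and $\ketbra{1}{0}$ suffice), at which point you essentially recover the paper's projector argument.
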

\begin{proof}
Let $E_i=U_i\xi U_i^\dag$ and $F_j=V_j\eta V_j$. We can cast the CAR as:
\begin{equation*}
\sum_{i,j} E_i\ketbra{i}{i}\gt \ketbra{j}{j}F_j=\sum_{i,j}\ketbra{i}{i} E_i\gt F_j\ketbra{j}{j}.
\end{equation*}
Notice that we exploited
$\sum_{i,j}(\ketbra{i}{i} \gt
F_j)(E_i\gt\ketbra{j}{j})=-\sum_{i,j}\ketbra{i}{i} E_i\gt
F_j\ketbra{j}{j}$, i.e. the anticommutation of $E_i$ and $F_j$ on
different sites.  To simplify the sum, we can multiply on the left
with $\ketbra{l}{l}\gt\ketbra{m}{m}$ and on the right with
$\ketbra{a}{a}\gt\ketbra{b}{b}$. In this way, we get :
\begin{align*}
&\ketbra{l}{l} E_a \ketbra{a}{a}\gt \ketbra{m}{m}F_m\ketbra{b}{b}=\\
&=\ketbra{l}{l}E_l\ketbra{a}{a}\gt\ketbra{m}{m}F_b\ketbra{b}{b}.
\end{align*}
Again, we can exploit the invariance of the classification under local
unitary rotations, and then choose $F_b$ in such a way that
$\bra{m}F_b\ket{b}=0$. In particular, we can choose
$\eta=V^\dag_b \ket{m'}\bra{b'}V_b$, with $m'\neq m$. Thereby, the
left hand side of the previous equation vanishes and we get:
\begin{equation}
  \ketbra{l}{l} E_a \ketbra{a}{a}\gt \ketbra{m}{m}V_mV_b^\dag\ketbra{m'}{b'}V_bV_m^\dag\ketbra{b}{b}=0.
\end{equation}
Finally, we notice that we can always choose $\xi$ in such a way that
$\ketbra{l}{l} E_a \ketbra{a}{a}\neq 0$ and $b'$ in such a way that
$\bra{b'}V_bV_m^\dag\ket{b}\neq 0$. Thus, this leads to:
\begin{equation*}
\bra{m}V_mV_b^\dag\ket{m'}=0,
\end{equation*}
which proves the lemma for $V_mV_b^\dag$ $\forall m,b$. Repeating the
same procedure with the roles of $V$ and $U$ exchanged, we get to the
same conclusion for $U_iU_j^\dag$.
\end{proof}
Thanks to this last lemma, we can now prove the main Proposition \ref{prop:Werner}.
\begin{proof}
  We start from the commutation relation in \Eq\eqref{eq:Tform} and
  \eqref{eq:Tcomm} and prepend to $\tT$ a local transformation given
  by the conjugation with $U_{\tilde{i},\tilde{j}}^\dag$ for a fixed
  value of $\tilde{i},\tilde{j}$. In this way we get to:
\begin{align*}
  \tT(\xi)=\sum_{i,j=0}^1 \ketbra{i}{i}\gt U^\dag_{\tilde{i},\tilde{j}}U_{i,j}(\xi)U_{i,j}^\dag U_{\tilde{i},\tilde{j}}\gt \ketbra{j}{j},
\end{align*}
After redefining
$\tilde{U}_{i,j}= U^\dag_{\tilde{i},\tilde{j}}U_{i,j}$, using the
previous lemma, all the $\tilde{U}_{i,j}$ are simultaneously
diagonalizable. Moreover, we can compose with a second local rotation
in such a way that $\tilde{U}_{0,0}=I$. This allows us to write:
\begin{align*}
U_{i,j}=e^{i\phi_{i,j}}\ketbra{0}{0}+e^{i\phi_{i,j}}\ketbra{1}{1} && U_{0,0}=I,
\end{align*}
where we dropped the symbol  $\hspace{3pt}\tilde{ }\hspace{3pt}$ to ease the notation. Inserting
$\xi=\ketbra{\alpha}{\alpha\oplus 1}$,
$\eta=\ketbra{\beta\oplus 1}{\beta}$ in \Eq\eqref{eq:Tcomm}, where
$\oplus$ is the sum modulo 2 and $\alpha,\beta=0,1$, we get to:
\begin{equation*}
  (-1)^\alpha(\phi_{k,\beta}-\phi_{k,\beta\oplus 1})=(-1)^\beta(\phi_{\alpha,l}-\phi_{\alpha\oplus 1,l})\hspace{3pt} \operatorname{mod} 2\pi
\end{equation*}
$\forall \alpha,\beta,k,l\in\{0,1\}$. In particular, we know that
$\phi_{0,0}=0$ as a trivial consequence of the fact that
$U_{0,0}=I$. Evaluating for $k=l=0$ and for $k=l\oplus 1$ we get:
\begin{align*}
\phi_{1,1}=2\phi_{1,0} && \phi_{1,0}=\phi_{0,1}.
\end{align*}
Thus, we are left with a single parameter
$\phi\coloneqq\phi_{1,0}=\phi_{0,1}=\frac{\phi_{1,1}}{2}$. Defining:
\begin{align*}
Q^{0}_\phi=I, \qquad Q^{1}_\phi=\begin{pmatrix}e^{i\phi}&0\\0&e^{-i\phi}\end{pmatrix},
\end{align*}
we can write $U_{i,j}=Q^{i}_\phi Q^{j}_\phi$. Recalling
\Eq\eqref{eq:decz}, we get to the form:
\begin{align*}
&\tT(\xi)=(C_\phi\gt I)(I\gt C_\phi)(I\gt\xi\gt I)(I\gt C_\phi)^\dag (C_\phi\gt I)^\dag,
\end{align*}
where $C_{\phi}$ is as in \Eq\eqref{eq:dephasing}. Finally, including
the invariance under local transformation we can get to the general
form:
\begin{align*}
\resizebox{\hsize}{!}{$
\tT(\xi)=(C_\phi\gt I)(I\gt C_\phi)(I\gt U\xi U^\dag \gt I)(I\gt C_\phi)^\dag (C_\phi\gt I)^\dag.$}
\end{align*}
As the proof holds for a generic odd operator, it holds in particular
for a representation of the CAR $(\xi_{2i},\xi_{2i+1})$,
and this concludes the proof.
\end{proof}

\section{Proof of Theorem~\ref{thm:main}}\label{sec:mthm}
We want now to prove the Main Theorem \ref{thm:main}. The proof
proceeds as follows: first, we find in Proposition \ref{prop:form}
the most generic FCA with $\EL$ and $\ER$ generated by odd
anticommuting operators, whose evolution is constrained by two
mutually exclusive conditions. Then in Lemma \ref{lem:nonhopiunomi}
we prove that only one of the two conditions is compatible with the
assumptions on the support algebras $\EL,\ER$. Finally, we show that
such a condition leads to the \textit{Forking automaton} defined in
\Eq\eqref{eq:forking}.
\begin{proposition}\label{prop:form}
  Let $\tT$ be an arbitrary FCA having $\EL$ and $\ER$ generated by
  two odd anticommuting operators. The most general form of the
  evolution of the FCA acting on an odd operator $\xi$ is given by:
\begin{align}\label{eq:formT}
\tT(\xi)=b_\xi X\gt I\gt I + c_\xi I\gt I\gt Y + I \gt \m D_{\xi}\gt I,
\end{align}
whith $D_\xi$ an odd operator, along with one of the following two conditions holding:
\begin{align}\label{eq:mutcond}
D_\xi=0, \qquad b_\xi c_\xi=0.
\end{align}
\end{proposition}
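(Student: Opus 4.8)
The plan is to pin down $\tT(\xi)$ by first decomposing it against the support algebras and then imposing the two structural constraints that every FCA must obey: preservation of the CAR inside a single cell, and graded commutation between the images of adjacent cells (Theorem~\ref{thm:locglo}). First I would invoke Corollary~\ref{cor:par}, case~3, to fix a basis in which $\EL=\spn\{I,X\}$ sits on the left cell and $\ER=\spn\{I,Y\}$ on the right cell. By the characterisation of the support algebras, every image of an odd generator then has a unique expansion
\begin{equation*}
\tT(\xi)=I\gt A\gt I+X\gt B\gt I+I\gt C\gt Y+X\gt D\gt Y,
\end{equation*}
with $A,B,C,D\in\EC\subseteq\mathsf A_x$. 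Since $\tT$ preserves parity and $\xi$ is odd, matching grades cell by cell forces $A,D$ odd and $B,C$ even, so $A=a_XX+a_YY$, $D=d_XX+d_YY$, $B=b_0I+b_ZZ$, $C=c_0I+c_ZZ$.

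The core computation is the relation $\gc{\tT(\xi_x)}{\tT(\xi_{x+1})}=0$ on the overlap of two neighbouring cells. I would write $\tT(\xi_{x+1})$ as the translate of the expansion above and expand the anticommutator with the graded tensor sign rule $(T\gt I)(I\gt S)=(-1)^{g(S)g(T)}(I\gt S)(T\gt I)$, organising the result by the spectator cells $x-1$ and $x+2$ and then by the operator basis $\{I,X,Y,Z\}$ on the overlap cells $x,x+1$. The sector carrying both spectators isolates $\gc{X\gt D\gt Y}{X\gt D'\gt Y}$, whose $Z_x\gt I$ and $I\gt Z_{x+1}$ components are proportional to $d_Y^2$ and $d_X^2$; their vanishing gives $D=0$. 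In the single‑spectator sectors the commutators $[B,X]=2ib_ZY$ and $[C,Y]=-2ic_ZX$ appear, and their $Z$‑weighted components force $b_Z=c_Z=0$, i.e.\ $B=b_\xi I$ and $C=c_\xi I$. This establishes the announced form $\tT(\xi)=b_\xi X\gt I\gt I+c_\xi I\gt I\gt Y+I\gt D_\xi\gt I$ with $D_\xi:=A$ odd.

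It remains to produce the two mutually exclusive conditions. With the form in hand, the no‑spectator sector of the same adjacent relation, together with the intra‑cell CAR $\acomm{\tT(\xi)}{\tT(\eta)}=0$ and $\tT(\xi)^2=I$ (which make the images a genuine representation of the CAR), collapse to scalar identities relating $b_\xi,c_\xi$ to the components of $D_\xi$. I would then read these identities as a factorised constraint—using here that $\tT$ is an \emph{automorphism} and not merely a CAR‑preserving endomorphism—so that the only solutions are $D_\xi=0$ or $b_\xi c_\xi=0$. Writing out the $X_x$‑ and $Y_x$‑components of the leftover equations and invoking the Pauli relations among $X,Y,Z$ makes this product structure explicit and yields the dichotomy.

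The step I expect to be the genuine obstacle is the middle one: the sign bookkeeping in expanding a product of three‑cell graded tensors is error prone, and the vanishing of $D$ and of $b_Z,c_Z$ becomes transparent only after carefully separating the contributions by spectator content and by Pauli basis. Isolating the factorised form of the residual scalar constraints, so that $D_\xi=0$ and $b_\xi c_\xi=0$ emerge as the sole alternatives, is the delicate endpoint—and it is exactly this split that Lemma~\ref{lem:nonhopiunomi} will later exploit to discard one branch and reach the forking normal form.
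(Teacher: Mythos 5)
Your proposal follows essentially the same route as the paper's proof: the same four-term decomposition dictated by $\EL=\langle X\rangle$, $\ER=\langle Y\rangle$ and parity matching, the same adjacent-cell graded-commutation relation organised by spectator content, killing first the coefficient of $X\gt\cdot\gt Y$, then reducing $B,C$ to multiples of $I$, and finally extracting the dichotomy $D_\xi=0$ or $b_\xi c_\xi=0$ from the no-spectator sector. The only cosmetic differences are that the paper separates the both-spectator sector by grading rather than by Pauli basis (the cross terms there are odd$\,\gt\,$odd, so your $Z\gt I$ and $I\gt Z$ components do isolate the pure $D$--$D'$ contribution, as claimed), and that the final step needs only the vanishing of the $I\gt\m H\gt I$ sector, which gives $b_\xi\acomm{D_\xi}{X}=c_\xi\acomm{D_\xi}{Y}=0$ directly --- the intra-cell CAR and the automorphism property you invoke are not actually required there.
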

\begin{proof}
  If we assume both the left and right support algebra to be generated
  by odd operators $\mathsf{G}_L,\mathsf{G}_R$, the graded commutation
  relation imposes:
\begin{equation*}
\gc{\mathsf{G}_L}{\mathsf{G}_R}= \{\mathsf{G}_L,\mathsf{G}_R\}=0.
\end{equation*}
Notice that this means that $\mathsf{G}_L$ and $\mathsf{G}_R$ are
another representation of the CAR.  Since we are interested to
classify automata modulo unitary transformations, we can take without
loss of generality $\mathsf{G}_L$ and $\mathsf{G}_R$ to be $X,Y$. Let
us consider the general decomposition for the evolution of an odd
operator $\xi$:
\begin{equation*}
\begin{split}
\tT(\xi)
=&X\gt \m A_{\xi}\gt Y + X \gt \m B_{\xi} \gt I  \\
&+I \gt  \m C_{\xi} \gt Y + I \gt \m D_{\xi} \gt I .
\end{split}
\end{equation*}
Since the evolution should be parity preserving we have:
\begin{equation*}
\m A_{\xi},\m D_{\xi} \in \m A_1,\qquad
\m B_{\xi},\m C_{\xi} \in \m A_0.
\end{equation*}
In other words, $\m A_\xi,\m D_\xi$ should be odd operators while
$\m B_{\xi},\m C_{\xi}$ should be even.  For the scope of the
proposition it is sufficient to impose the following graded
commutation relation
\begin{equation}
\gc{\tT(\xi)\gt I}{I\gt\tT(\xi)}=0.
\end{equation}
We can write the graded commutator as:
\begin{equation}\label{eq:deccom}
\begin{split}
&\gc{\tT(\xi)\gt I}{I\gt\tT(\xi)}=\\
&=X\gt \m E \gt Y + X \gt \m F \gt I + \\
&+I \gt  \m G \gt Y + I \gt \m H \gt I =0.
\end{split}
\end{equation}
Since each term is tensorized by linearly independent operators, they should identically vanish. 
Computing $\m E=0$ we get:
\begin{equation*}
\m E = \comm{\m A_\xi \gt Y}{X \gt \m A_\xi}+\m A_\xi \gt \comm{Y}{\m C_\xi} + \comm{\m  B_\xi}{X} \gt A_\xi=0.
\end{equation*}
We can now exploit the fact that the even sector of two Fermionic
modes is given by the direct sum of the two even sectors of the single
Fermionic modes and the two odd sectors, namely:
\begin{equation*}
\m A_{0,1}=(\m A^0_0\gt \m A^0_1)\oplus(\m A^1_0\gt \m A^1_1).
\end{equation*}
This means that $\m E$ decomposes as:
\begin{align*}
&\m E =(\m E_{00}, \m E_{11}) ,\\
&\m E_{00}= \comm{\m A_\xi \gt Y}{X \gt \m A_\xi},\\
&\m E_{11}=\m A_\xi \gt \comm{Y}{\m C_\xi} + \comm{\m  B_\xi}{X} \gt A_\xi,
\end{align*}
in which each component should vanish. We then have:
\begin{equation}
\comm{\m A_\xi \gt Y}{X \gt \m A_\xi}=0.
\end{equation}
Writing $\m A_\xi=\vec{p}\cdot\vec{\sigma}$ with $\vec{p}\cdot\hat{z}=0$, we get:
\begin{equation*}
p_x(\vec{p}\times \hat{y}) I\gt Z+ p_y(\vec{p}\times \hat{x}) Z\gt I=0,
\end{equation*}
but this is true \textit{iff} $\vec{p}=0$ and thus $\m A_\xi=0$ for
every even operator $\xi$, which means:
\begin{equation}\label{eq:dec}
\tT(\xi)=X \gt \m B_{\xi} \gt I + I \gt  \m C_{\xi} \gt Y + I \gt \m D_{\xi} \gt I.
\end{equation}
Notice that with this choice $\m E_{11}=0$. Computing the operators
$\m F$ and $\m G$ in \Eq\eqref{eq:deccom} for this new form of the
evolution, we have:
\begin{align*}
\m F = \comm{\m B_\xi}{X}\gt\m B_{\xi},\qquad
\m G = \m C_\xi\gt \comm{Y}{\m C_\xi}.
\end{align*}
Imposing $\m F=\m G=0$ and remembering that $\m B$ and $\m C$ are even
operators, we get $\m B_\xi,\m C_\xi \propto I$. Thus we can write the general form of the evolution as:
\begin{equation}
\tT(\xi)=b_\xi X\gt I\gt I + c_\xi I\gt I\gt Y + I \gt \m D_{\xi}\gt I,
\end{equation}
with $b_\xi,c_\xi\in\mathbb{R}$. Finally, we can compute:
\begin{equation*}
\m H= b_\xi \acomm{D_\xi}{X} \gt I + c_\xi I\gt \acomm{D_\xi}{Y}=0.
\end{equation*}
Since $D_\xi$ cannot anticommute with both $X$ and $Y$, the solutions of
the previous equations yields the constraints in
\Eq\eqref{eq:mutcond}.
\end{proof}

\begin{lemma}\label{lem:nonhopiunomi}
  The FCA $\tT$ given in \Eq\eqref{eq:formT} with $b_\xi c_\xi=0$ is
  not compatible with the assumption of having $\EL\simeq \ER$ both
  generated by odd operators.
\end{lemma}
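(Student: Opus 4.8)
The plan is to argue by contradiction. Assume $\tT$ has the form \eqref{eq:formT} with the constraint $b_\xi c_\xi=0$ in force for \emph{every} odd operator $\xi$, and suppose nonetheless that $\EL$ and $\ER$ are both generated by odd operators. First I would translate the coefficients $b_\xi,c_\xi$ into statements about the support algebras. Reading \eqref{eq:formT} tensor factor by tensor factor, the only operator that can appear on the left cell $x-1$ is $X$, carried by the coefficient $b_\xi$, and the only one on the right cell $x+1$ is $Y$, carried by $c_\xi$, while the remaining term $I\gt\m D_{\xi}\gt I$ is trivial on both outer cells. By the characterisation of support algebras recalled after the definition of $\mathbf{S}(\cdot,\cdot)$, this means $\EL$ is the graded algebra generated by $X$ whenever some $b_\xi\neq0$ and collapses to $\EL\simeq\mathbb C I$ precisely when $b_\xi=b_\eta=0$; symmetrically $\ER$ is trivial precisely when $c_\xi=c_\eta=0$.

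The heart of the argument is a short linear-algebra observation. Fix a pair of anticommuting odd generators $(\xi,\eta)$ of $\mathsf A_x\simeq\M{1}{1}$, so that every odd operator reads $\zeta=u\xi+v\eta$ and linearity of $\tT$ gives $b_\zeta=ub_\xi+vb_\eta$, $c_\zeta=uc_\xi+vc_\eta$. The branch hypothesis requires $b_\zeta c_\zeta=0$ for all scalars $u,v$, i.e.
\begin{equation*}
(ub_\xi+vb_\eta)(uc_\xi+vc_\eta)=0 \quad\text{for all } u,v .
\end{equation*}
Since the polynomial ring in $u,v$ is an integral domain, a product of two linear forms can vanish identically only if one factor is itself the zero form; hence either $b_\xi=b_\eta=0$ or $c_\xi=c_\eta=0$. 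Concretely: if, say, $b_\xi\neq0$, then $b_\xi c_\xi=0$ forces $c_\xi=0$, and the vanishing of the cross-coefficient $b_\xi c_\eta+b_\eta c_\xi$ then forces $c_\eta=0$ as well.

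Combining the two steps closes the argument: the first alternative makes $\EL$ trivial and the second makes $\ER$ trivial, each contradicting the assumption that both support algebras are generated by nontrivial odd operators. I expect the only genuinely delicate point to be the insistence that $b_\zeta c_\zeta=0$ be imposed on \emph{all} odd operators and not merely on the two chosen generators: it is exactly the test element $\zeta=\xi+\eta$ that produces the cross term $b_\eta c_\xi$ (resp.\ $b_\xi c_\eta$) and thereby excludes the ``mixed'' configuration in which $\tT(\xi)$ feeds only the right cell while $\tT(\eta)$ feeds only the left---a configuration that would otherwise survive and keep both $\EL$ and $\ER$ odd. Everything else is routine bookkeeping with the graded tensor factors.
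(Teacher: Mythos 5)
Your proof is correct, but it takes a genuinely different route from the paper's. The paper's argument fixes a generator with, say, $b_\xi=0$, and then runs a case analysis on the possible forms of $\tT(\eta)$, using Lemma~\ref{thm:suppcomm} on products of support elements to force $D_\xi=D_\eta=0$ in the cases that are not immediately excluded, so that the contradiction is ultimately with the presence of a nonzero $D$. You instead work entirely at the level of the coefficient functionals $\zeta\mapsto b_\zeta$ and $\zeta\mapsto c_\zeta$: they are linear forms on the two-dimensional odd sector, and the identity $(ub_\xi+vb_\eta)(uc_\xi+vc_\eta)=0$ for all $u,v$ forces one of the two forms to vanish identically, which trivialises the corresponding outer support. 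This is shorter, bypasses the graded-commutation bookkeeping and the operator $D_\zeta$ altogether, and makes transparent why the forking configuration ($\tT(\xi)$ feeding only one side and $\tT(\eta)$ only the other) is not accidentally excluded: it violates the cross-term condition $b_\xi c_\eta+b_\eta c_\xi=0$ and therefore lives in the branch $D\equiv 0$. The one point to make explicit---and you correctly flag it as the delicate one---is that your hypothesis is ``$b_\zeta c_\zeta=0$ for \emph{every} odd $\zeta$'', whereas Proposition~\ref{prop:form} only supplies the per-operator alternative ``$D_\zeta=0$ or $b_\zeta c_\zeta=0$''. To deploy your lemma in the proof of Theorem~\ref{thm:main} one should add the short globalisation step: if $D_{\zeta_0}\neq 0$ for some $\zeta_0$, then $D_\zeta\neq 0$ outside a proper linear subspace (as $\zeta\mapsto D_\zeta$ is linear), so the polynomial $b_\zeta c_\zeta$ vanishes on a dense subset of the odd sector and hence identically, which is exactly your uniform hypothesis. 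With that one line added, your argument fully replaces the paper's case analysis.
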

\begin{proof}
Without loss of generality, we study the case where $b_\xi=0$. Then:
\begin{align*}
\tT^{b_\xi=0}(\xi)=I\gt D_\xi \gt I + c_\xi I\gt I\gt Y.
\end{align*}
Considering the action over another odd operator $\eta$ such that
$\acomm{\xi}{\eta}=0$ and assuming $D_\eta\neq 0$ we get to one of the
following equations:
\begin{align*}
\tT(\xi)=I\gt D_\xi \gt I + c_\xi I\gt I \gt Y,\\
\tT(\eta)= I\gt D_\eta \gt I +  c_\eta I \gt I \gt Y,
\end{align*}
or
\begin{align*}
\tT(\xi)=I\gt D_\xi \gt I + c_\xi I\gt I \gt Y,\\
\tT(\eta)= I\gt D_\eta \gt I +  b_\eta X \gt I \gt I.
\end{align*}
The first case is ruled out by the choice $\EL\simeq \ER$ because
$\EL\simeq I$, while $\ER$ is generated by Y. Let us focus on the
second case.  Lemma \ref{thm:suppcomm} tells us that if
$\gc{\tT(\xi)}{\tT(\eta)}=0$ then the whole support algebras should
graded commute. This means that we can combine the elements of
$\tT(\xi)$ and $\tT(\eta)$ and impose the graded-commutation of the
operators
\begin{align*} 
(I\gt D_\xi\gt I)c_\xi(I\gt I\gt Y)=c_\xi(I\gt D_\xi\gt Y),\\
(I\gt D_\eta\gt I)b_\eta(X\gt I\gt I)=b_\eta(X\gt D_\eta\gt I)
\end{align*}
i.e.:
\begin{equation*}
\gc{I\gt D_\xi \gt Y}{X\gt D_\eta\gt I}=X\gt\comm{D_\xi}{D_\eta}\gt Y=0.
\end{equation*}
This means $D_\xi=D_\eta\eqqcolon D$. Imposing
$\gc{\tT(\xi)\gt I}{I\gt\tT(\xi)}=\gc{\tT(\eta)\gt
  I}{I\gt\tT(\eta)}=0$ and exploiting again Lemma \ref{thm:suppcomm}
for the operator of the support $\gc{I\gt D}{ I\gt Y}$ and
$\gc{X\gt I}{D\gt I}$ we have:
\begin{align*}
D\gt\acomm{D}{Y}=\acomm{D}{X}\gt D=0.
\end{align*}
This is clearly compatible only with $D=0$. Finally, we want to analyze the case:
\begin{align*}
\tT(\xi)=I\gt D_\xi \gt I + c_\xi I\gt I \gt Y\\
\tT(\eta)= c_\eta I \gt I\gt Y +  b_\eta X \gt I \gt I.
\end{align*}
Imposing
$\gc{\tT(\xi)\gt I}{I\gt\tT(\eta)}=\gc{\tT(\eta)\gt
  I}{I\gt\tT(\xi)}=0$ and exploiting again Lemma \ref{thm:suppcomm}
for the operators
\begin{align*}\resizebox{\hsize}{!}{$
c_\xi b_\xi\gc{I\gt I\gt D_\eta}{X\gt I \gt Y},\qquad \gc{ D_\eta \gt I\gt I }{X\gt I \gt Y}$}
\end{align*}
we have:
\begin{align*}
X\gt\acomm{D_\eta}{Y}=\acomm{D_\eta}{X}\gt Y=0.
\end{align*}
Again, this is only possible if $D_\eta=0$, which concludes the proof.
\end{proof}
We are ready to prove the Main Theorem \ref{thm:main}.
\begin{proof}
Thanks to the last lemma, we can impose $D_\xi=D_\eta=0$ in \Eq\eqref{eq:formT}, i.e.
\begin{align*}
\tT(\xi)=b_\xi X\gt I\gt I+ c_\xi I\gt I\gt Y,\\
\tT(\eta)=b_\eta X\gt I\gt I+ c_\eta I\gt I\gt Y.
\end{align*}
Imposing $\gc{\tT(\xi)}{\tT(\eta)}=0$, we get:
\begin{align*}
(b_\xi b_\eta+c_\xi c_\eta) I\gt I \gt I=0.
\end{align*}
This means that if we arrange $b,c$ in vectors
\begin{align*}
v_{\xi/\eta}=\begin{pmatrix}b_{\xi/\eta}\\c_{\xi/\eta}\end{pmatrix},
\end{align*}
then it holds that $v_\xi\cdot v_\eta=0$. However, we can always choose
$\xi,\eta$ in such a way that $v_\xi=\begin{pmatrix}1\\0\end{pmatrix}$
and $v_\eta=\begin{pmatrix}0\\1\end{pmatrix}$ so that:
\begin{align*}
\tT(\xi)= I\gt I\gt Y,\qquad \tT(\eta)= X\gt I\gt I,
\end{align*}
and this concludes the proof.
\end{proof}

\subsection{Proof of Corollary~\ref{cor:Fork}}\label{sec:Cor}

\begin{proof}
Consider the Forking automaton $\tT_0$ that acts over $X,Y$, i.e.:
\begin{equation*}
I\gt\tT(X,Y)\gt I= (X\gt I \gt I,I\gt I \gt Y).
\end{equation*}
This is the general form in \Eq\eqref{eq:forking} composed with a
rotation that brings $\eta,\xi$ in $X,Y$. In general, this acts over
$X,Y$ as:
\begin{align*}
X_i\rightarrow X_{i-1},\qquad Y_i\rightarrow Y_{i+1}.
\end{align*} 
Since we are interested in the classification modulo factorized
unitaries, we can consider the equivalent evolution:
\begin{align*}
X_i\rightarrow Y_{i-1},\qquad
Y_i\rightarrow X_{i+1}.
\end{align*} 
We can split this action into two subsequent steps. The first one
swaps $X_{2i}$ with $Y_{2i-1}$ and the second swaps $X_{2i+1}$ with
$Y_{2i}$, i.e.
\begin{align*}
\tM_1: \begin{cases}
I\gt X_{2i}\rightarrow Y_{2i-1}\gt I\\
I\gt Y_{2i}\rightarrow I\gt Y_{2i}\\
X_{2i-1}\gt I\rightarrow X_{2i-1}\gt I\\
Y_{2i-1}\gt I\rightarrow I\gt X_{2i}.\end{cases}\\
\tM_2:\begin{cases}
X_{2i}\gt I\rightarrow X_{2i}\gt I\\
Y_{2i}\gt I \rightarrow I\gt X_{2i+1}\\
I\gt X_{2i+1}\rightarrow Y_{2i}\gt I\\
I\gt Y_{2i+1}\rightarrow I\gt Y_{2i+1}.\end{cases}\\
\end{align*}
It is easy to see that $\tM_1=\tau_{\pm 1}\circ \tM_2$. Thus, the only
thing we need to show to prove the local implementability of this
particular Forking is that the action $\tM_1$
can be implemented through
conjugation with a unitary operator. Since the Margolus scheme
is not unique, it is sufficient to show an example of such an
operator. One implementation is given by
\begin{align*}
  &\tM_1(T)=(X\gt Z)e^{-\frac{\pi}{4}Y\gt X}Te^{\frac{\pi}{4}Y\gt X}(X\gt Z).
\end{align*}
Indeed,  let $\Sigma_i\coloneqq X_{i},Y_{i}$. Writing
$e^{-\frac{\pi}{4}Y_{2i}\gt X_{2i+1}}=\frac{1}{\sqrt{2}}\{I\gt I - 
Y_{2i}\gt X_{2i+1}\}$ we have:
\begin{align*}
  &\frac{1}{2}\{I\gt I -Y_{2i}\gt X_{2i+1}\}(\Sigma_{2i}\gt I)\{I\gt I + Y_{2i}\gt X_{2i+1}\}=\\
&=\frac{1}{2}\{I\gt I - Y_{2i}\gt X_{2i+1}\}\{\Sigma_{2i}\gt I+\Sigma_{2i}Y_{2i}\gt X_{2i+1}\}=\\
  &=\frac{1}{2}\{ \Sigma_{2i}\gt I - Y_{2i}\Sigma_{2i} Y_{2i}\gt I+ \{Y_{2i},\Sigma_{2i}\}\gt X_{2i+1}\}\\
\end{align*}
Notice that we exploited the fact that
$(Y\gt X)\Sigma\gt I (Y\gt X)=(Y\gt X)(I\gt X)\Sigma\gt I(Y\gt
I)$ that trivially follows from the fact that two elements of the CAR
representation on different sites always anticommute.  For
$\Sigma_{2i}=X_{2i+1}$ this gives $X_{2i}\gt I$ and for $\Sigma_{2i}=Y_{2i}$ we have $I\gt X_{2i}$ as desired. Repeating the computation for $I\gt \Sigma_{2i+1}$ we get
$X_{2i+i}\mapsto$ and $Y_{2i+1}\mapsto Y_{2i+1}$ while $X_{2i+1}\mapsto -Y_{2i}$. Conjugation with $X_{2i}\gt Z_{2i+1}$ finally fixes the sign so that $X_{2i+1}\mapsto Y_{2i}$.
To prove that this holds also for the general form of the
Forking automaton $\tT'_0$ we can write it as:
\begin{align*}
\tT'_0=\tT_0\circ\tU,
\end{align*}
where $\tT_0$ is the forking automaton we just considered and $\tU$ is
a local change of basis, i.e. the one that brings $\xi,\eta$ into
$X,Y$. Again, since we are interest in the classification modulo local
rotation we can always conjugate with a suitable unitary to write the
general form of $\tT'_0$ as:
\begin{align*}
\tT'_0=\tU\circ\tT_0=\tU\circ\tM_2\circ\tM_1.
\end{align*}
Including the rotation in the gates of $\tM_2$ we conclude the proof.
\end{proof}
\end{document}